\newcommand{\ex}[1]{\mathbb{E}\left[ #1 \right] }
\newcommand{\norm}[1]{\left\lVert #1 \right\rVert}
\newcommand{\Q}{ {\mathbf Q}}
\newcommand{\A}{ {\mathbf A}}
\newcommand{\s}{ {\mathbf S}}
\newcommand{\UU}{ {\mathbf U}}
\newcommand{\inner}[2]{\langle #1, #2 \rangle}
\newcommand{\lep}[1]{\mathop  \le \limits^{(#1)}}
\newcommand{\gep}[1]{\mathop  \ge \limits^{(#1)}}
\newcommand{\gp}[1]{\mathop  > \limits^{(#1)}}
\newcommand{\ep}[1]{\mathop  = \limits^{(#1)}}
\newcommand{\norms}[1]{\big\lVert #1 \big\rVert}
\newcommand{\normss}[1]{\lVert #1 \rVert}
\newtheorem{definition}{Definition}
\newtheorem{lemma}{Lemma}
\newtheorem{claim}{Claim}
\newtheorem{proposition}{Proposition}
\newtheorem{theorem}{Theorem}
\newtheorem{remark}{Remark}
\theoremstyle{definition}
\newtheorem{condition}{Condition}
\begin{document}

%\setcopyright{cagovmixed}

% \doi{10.475/123_4}

% % ISBN
% \isbn{123-4567-24-567/08/06}

% %Conference
% \conferenceinfo{PLDI '13}{June 16--19, 2013, Seattle, WA, USA}

% \acmPrice{\$15.00}

%
% --- Author Metadata here ---
%\conferenceinfo{WOODSTOCK}{'97 El Paso, Texas USA}
%\CopyrightYear{2007} % Allows default copyright year (20XX) to be over-ridden - IF NEED BE.
%\crdata{0-12345-67-8/90/01}  % Allows default copyright data (0-89791-88-6/97/05) to be over-ridden - IF NEED BE.
% --- End of Author Metadata ---

\title{Heavy-traffic Delay Optimality in Pull-based  Load Balancing Systems: Necessary and Sufficient Conditions}
\author{Xingyu Zhou\\Department of ECE\\The Ohio State University\\zhou.2055@osu.edu \and Jian Tan\\Department of ECE\\The Ohio State University\\tan.252@osu.edu\\
\and Ness Shroff\\Department of ECE and CSE\\The Ohio State University\\
shroff.11@osu.edu } 
\date{}
\maketitle

\begin{abstract}

In this paper, we consider a load balancing system under a general pull-based policy. In particular, each arrival is randomly dispatched to one of the servers whose queue lengths are below a threshold, if there are any; otherwise, this arrival is randomly dispatched to one of the entire set of servers. We are interested in the fundamental relationship between the threshold and the delay performance of the system in heavy traffic. To this end, we first establish the following necessary condition to guarantee heavy-traffic delay optimality: the threshold will grow to infinity as the exogenous arrival rate approaches the boundary of the capacity region (i.e., the load intensity approaches one) but the growth rate should be slower than a polynomial function of the mean number of tasks in the system. As a special case of this result, we directly show that the delay performance of the popular pull-based policy Join-Idle-Queue (JIQ) lies strictly between that of any heavy-traffic delay optimal policy and that of random routing. We further show that a sufficient condition for heavy-traffic delay optimality is that the threshold grows logarithmically with the mean number of tasks in the system. This result  directly resolves a generalized version of the conjecture by Kelly and Laws.

\end{abstract}

% \keywords{Heavy-traffic delay optimality; Throughput optimality; Flexible load balancing; Multi-dimensional state-space collapse}

% \category{G.3}{Probability and Statistics}{Stochastic processes,
% Queueing theory}

% !TEX root = ./sig19.tex
\section{Introduction}

We consider a classical load balancing system that consists of a central dispatcher and $N$ servers, each associated with an infinite buffer queue and a service rate $\mu_n$. The exogenous tasks arrive with rate $\lambda_{\Sigma}$, and upon arrival they must be immediately dispatched to one of the queues. A key to the performance of such a system is the load balancing policy it uses since it directly determines which queue the arriving tasks should join.

To design effective load balancing policies and hence provide good delay performance, it is imperative to develop analytical tools to evaluate the system performance under different load balancing policies. 
Towards that goal, one important line of research has focused on the so-called heavy-traffic regime, where the exogenous arrival rate approaches the boundary of the capacity region, i.e., the heavy-traffic parameter $\epsilon = \sum \mu_n - \lambda_{\Sigma}$ approaches zero. An attractive property of the heavy-traffic regime, as pointed out in~\cite{kelly1993dynamic}, is that \emph{`the important features of good control policies are often displayed in the sharpest relief'.} It has been shown that well-known policies such as Join-Shortest-Queue (JSQ) and Power-of-$d$ can achieve asymptotically optimal delay performance in the heavy-traffic regime~\cite{foschini1978basic,eryilmaz2012asymptotically,chen2012asymptotic,maguluri2014heavy}. Under these two policies, an incoming task is assigned to a server with the shortest queue among $d\ge 2$ servers ($d = N$ for JSQ) sampled uniformly at random.

However, due to the sampling process, the amount of communication overhead is $2d$ per arrival ($d$ for query and $d$ for response), which is undesirable for a large value of $d$, especially in the JSQ policy when $d = N$. More importantly, since the dispatching decision can only be made after collecting the queue length feedback, there exists a non-zero dispatching delay, which contributes to an increase in the response time. To avoid these drawbacks, an alternative approach, often called pull-based load balancing, has received significant recent attention. Instead of actively sending queries to servers and waiting for responses, the dispatcher under a pull-based load balancing scheme passively listens to the reports from the servers. In particular, each server will report its ID to the dispatcher when it satisfies a certain condition (e.g., its queue length drops below a threshold from above). Then, upon task arrival, the dispatcher checks its record. If it is not empty, the dispatcher randomly removes one ID and sends the arrival to the corresponding server; otherwise, it just randomly selects a queue to join. The classical pull-based policy is the Join-Idle-Queue (JIQ) policy investigated in~\cite{lu2011join,stolyar2015pull}, under which the dispatcher maintains a record of IDs of the idle servers (i.e., the reporting threshold is one). JIQ has been shown to enjoy a low message overhead (at most one per arrival), zero dispatching delay, and better delay performance than Power-of-$d$ under medium loads. Nevertheless, under high loads, its delay performance degrades substantially due to the lack of idle servers. This directly suggests that a varying reporting threshold with respect to the load is necessary to guarantee good delay performance in heavy traffic. Motivated by this observation, in a recent work~\cite{zhou2017designing}, the authors propose a specific way to update the reporting threshold in a pull-based policy, which is proven to be heavy-traffic delay optimal, while still enjoying many of the nice features of JIQ.

In this paper, instead of focusing on another specific way of determining the reporting threshold, we step back and work towards answering the following fundamental question: \emph{How would different reporting thresholds affect the (heavy traffic) delay performance of a pull-based policy?} To address this question, we take a systematic approach and summarize the main contributions as follows.
\begin{itemize}
	\item We first present a necessary condition on the reporting threshold for the delay optimality of a pull-based policy in heavy-traffic. In particular, we show that to achieve heavy-traffic delay optimality, the reporting threshold $r$ should grow to infinity as the heavy-traffic parameter $\epsilon$ approaches zero, however, it cannot grow too fast (see Theorem \ref{thm:constant}). An important corollary of Theorem \ref{thm:constant} is that the delay performance of the JIQ policy (i.e., constant threshold $r = 1$) in heavy traffic lies \emph{strictly} between that of any heavy-traffic delay optimal policies (e.g., JSQ) and that of random routing. This result is somewhat counter-intuitive, since at first glance one may guess that JIQ would degenerate to random routing in heavy traffic since there are hardly any idle servers in the system. However, it turns out that it is not true, and allows us to get a sharp characterization of the JIQ policy in heavy traffic. 
	\item We then establish a sufficient condition on the reporting threshold for heavy-traffic delay optimality of pull-based policies. Specifically, we show that a logarithmic growth rate of the reporting threshold with respect to the mean number of tasks in the system is sufficient to guarantee the steady-state delay optimality in heavy traffic (see Theorem \ref{thm:log}). This result directly resolves a conjecture by Kelly and Laws in~\cite{kelly1993dynamic}. In particular, the authors in~\cite{kelly1993dynamic} consider a two-server system with Poisson arrivals and exponential service under a varying reporting threshold. They conjecture that as long as the threshold is greater than a specified constant times the logarithm of the mean number of tasks in the system, then asymptotic delay optimality holds in heavy traffic. Thus, our result not only resolves the conjecture but generalizes it to any fixed finite number of servers with general arrival and service distributions. It is also worthing noting that the asymptotic delay optimality achieved in our paper is in steady-state while delay optimality in~\cite{kelly1993dynamic} holds only for a finite time interval.
	\item The techniques introduced in this paper may be of independent interest for the analysis of general load balancing  policies. More precisely, the key to establishing heavy-traffic delay optimality in this paper is a notion of state-space collapse, which is different from the state-space collapse result often adopted in previous works. As a result, it requires us to develop a new Lyapunov function to conduct the drift analysis. More importantly, due to this new type of state-space collapse, we have to devise a new approach to relate the state-space collapse result to the final heavy-traffic delay optimality. 
	% Furthermore, our approach to addressing the dynamic routing problem in this paper has the potential to be applied to study the delay performance of dynamic scheduling policies in steady state.
\end{itemize}

\subsection{Related Work}
% A very natural choice of load balancing policies especially for small-scale systems is the Join-Shortest-Queue (JSQ), under which the dispatcher continuously samples all the servers upon new arrivals and dispatch them to the least loaded server in the system. It has been shown that JSQ is delay optimal in the sense of stochastic order for different system settings[][]. Nevertheless, the performance of this policy comes at the cost of substantial message overhead as it has to constantly know the queue lengths of all the servers, which is not feasible in large-scale systems. As a result, an alternative load balancing policy with low message overhead called power-of-$d$ is proposed []. Under this policy, the dispatcher probes $d$ servers uniformly at random and dispatches new arrivals to the server with the shortest queue among the $d$ servers. 

The investigation of queueing delay in heavy traffic with dynamic routing dates back to~\cite{foschini1978basic}, in which the authors considered a two-server system under the JSQ policy, and they showed that the two separate servers under JSQ act as a pooled resource in heavy traffic via diffusion approximations. Since then, the methodology of diffusion approximations has been adopted in a number of works on parallel queues~\cite{chen2012asymptotic,reiman1984some,hanqin1989heavy,williams1998diffusion,harrison1998heavy,bramson1998state}. For example, the author in~\cite{reiman1984some} generalized the results in~\cite{foschini1978basic} to the case of renewal arrivals and general service times. The functional central limit theorems for the JSQ policy in a load balancing system with multiple servers was derived in~\cite{hanqin1989heavy}. In~\cite{chen2012asymptotic}, the Power-of-d policy was shown to have the same diffusion limit as JSQ in the heavy-traffic limit. Much of the works based on the diffusion approximation method rely on showing that a scaled version of queue lengths converges to a regulated Brownian motion. This result typically leads to a sample-path optimality in a finite time interval. However, showing the convergence to the steady-state distribution requires the additional validation of the interchange of limits, which is often not taken (some exceptions include~\cite{gamarnik2006validity,budhiraja2009stationary}, in which the authors proved an interchange of limit argument for generalized Jackson networks with a fixed routing matrix). Motivated by this, the authors in~\cite{eryilmaz2012asymptotically} proposed a Lyapunov drift-based approach, which is able to establish steady-state heavy-traffic optimality of the load balancing policy JSQ and scheduling policy MaxWeight. One of the main features of this framework is that it is able to avoid the interchange-of-limits issue by directly working on the stationary distribution. This approach has been utilized to show steady-state heavy-traffic delay optimality of Power-of-$d$ in~\cite{maguluri2014heavy}. Moreover, based on this approach, it has  been shown in~\cite{wang2016maptask} that a joint JSQ and MaxWeight policy is heavy-traffic delay optimal for MapReduce clusters.

As discussed in the introduction, while JSQ and Power-of-$d$ enjoy heavy-traffic delay optimality, they both have non-zero dispatching delay, and a relatively high message overhead. Motivated by this, a pull-based design of load balancing policies has gained significant recent popularity. The main feature of pull-based load balancing is the introduction of local memory at the dispatcher, which maintains a record of servers satisfying a pre-defined condition (e.g., its queue length is below a threshold in most cases). The dispatching decision is made purely based on the local memory: if it is nonempty, randomly choosing a server in memory to join; otherwise, randomly choosing a server from all the servers. For instance, one illustrative example is the JIQ policy proposed and studied in~\cite{lu2011join,stolyar2015pull}, under which the local memory maintains all the idle servers. As a result, the arrival is always dispatched to one of the idle servers if there are any; otherwise, it is dispatched randomly. It has been shown that JIQ has a low message overhead (at most one per arrival), zero dispatching delay, and better performance compared to Power-of-$2$ in medium loads. Nevertheless, since only the idle servers are stored in memory, when the loads become high, its performance degrades substantially because the memory is empty and hence random routing is adopted most of the time. Therefore, this directly suggests that a varying threshold is necessary to guarantee good performance in heavy traffic for a pull-based policy. 

To this end, in a recent work~\cite{zhou2017designing}, the authors successfully propose a pull-based policy with a varying threshold, which is proven to be heavy-traffic delay optimal in steady state while keeping the nice features of JIQ. This naturally raises the question about the fundamental relationship between the choice of the threshold and the delay performance, which is the main focus of this paper. In particular, our work is mainly motivated by the seminal paper~\cite{kelly1993dynamic}, in which Kelly and Laws give a conjecture regarding the choice of the threshold that is able to guarantee delay optimality in heavy traffic. More precisely, they consider a two-server system with Poisson arrivals and exponential service. The arrival is dispatched randomly, except when one queue is below the threshold $r$ and the other is above, in which case the arrival is dispatched to the shorter one. Note that this dynamic policy can be exactly implemented by a pull-based load balancing scheme with a threshold $r$. Kelly and Laws conjecture that as long as the threshold $r$ is greater than a specific constant times the logarithm of the mean number of tasks in the system, then the sum queue lengths process under this threshold policy has the same diffusion limit as that under JSQ. Therefore, the logarithmic growth rate result in our sufficient conditions (see Theorem \ref{thm:log}) not only directly resolves the conjecture in~\cite{kelly1993dynamic}, but generalizes it to systems with any fixed finite number of servers as well as general arrival and service distributions. Moreover, the diffusion limit result conjectured in~\cite{kelly1993dynamic} only gives the optimality in a finite time interval while our heavy traffic optimality result obtained by Lyapunov drift-based approach is in steady state.

It is also worth noting that a logarithmic growth in the threshold is not a coincidence, and has been found in a wide range of scenarios. For example, the authors in~\cite{teh2002critical} consider an asymmetric threshold policy for a two-server case. 
In that setting, only one server has a threshold $r$ (say server 2). The arrivals are always dispatched to server 1 unless the queue length of server 2 is less than the threshold, in which case the arrival is sent to server 2. 
% That is, instead of each server having a reporting threshold $r$ as in~\cite{kelly1993dynamic}, now only one server has a threshold (say server $2$). As a result, the threshold policy becomes that the arrival is always dispatched to server $1$ unless the queue length of server $2$ is less than the threshold $r$, in which case the arrival is sent to server $2$. 
One of the main contributions in~\cite{teh2002critical} is that a logarithmic growth rate of $r$ is sufficient to guarantee that this threshold policy achieves the same diffusion limit as that under JSQ in heavy traffic. This result can be seen as a first attempt to resolve the conjecture in~\cite{kelly1993dynamic} with a simpler model. In particular, since there is only one threshold in~\cite{teh2002critical}, the network can be characterized by a one-dimensional reflected Brownian motion in heavy traffic. In contrast, the limit process in~\cite{kelly1993dynamic} is a two-dimensional Brownian motion, which is harder to rigorously prove optimality. Besides dynamic routing, a logarithmic growth rate of the threshold also critically affects the performance of scheduling policies in~\cite{harrison1998heavy,bell2001dynamic}. Both authors considered a system of two parallel servers with dedicated arrivals to each of the queues. One server can only process tasks in its own queue, while a `super-server' can process tasks from both queues. A threshold policy is proposed in which the `super-server' processes tasks from its own queue when the other server's queue length is below a threshold, and otherwise the `super-server' processes the tasks from the other queue. This policy can be viewed as the scheduling counterpart of the asymmetric routing policy considered in~\cite{teh2002critical}. In a `discrete review' setting, the author in~\cite{harrison1998heavy} proved that a sufficient condition for the asymptotic optimality of this threshold policy is that the threshold must grow as a constant times the average number of tasks in the system. The same result was generalized to a `continuous review' setting with more general arrival and service distributions in~\cite{bell2001dynamic}. As in the paper by Kelly and Laws~\cite{kelly1993dynamic}, the asymptotic optimality in~\cite{teh2002critical,harrison1998heavy,bell2001dynamic} holds in a finite time interval since the convergence to the stationary distribution is not validated for the diffusion approximations. Considering the similarity between the scheduling policies in~\cite{harrison1998heavy,bell2001dynamic} and the routing policy in~\cite{teh2002critical}, our approach developed in this paper might be applied to establish heavy-traffic delay optimality in steady state for dynamic scheduling policies as well.
We shall finally point out that the heavy-traffic regime considered in this paper and all the aforementioned papers assumes that the number of servers is a constant, which is different from the Halfin-Whitt heavy-traffic regime (also known as many-server heavy-traffic regime or quality-and-efficiency-driven regime)~\cite{halfin1981heavy}. In the latter regime, the heavy-traffic parameter $\epsilon$ approaches zero and the number of servers $N$ goes to infinity at the same time~\cite{armony2005dynamic,gurvich2009queue,dai2011state,mukherjee2016universality}. For example, it has been shown that, on any finite time interval, the limiting process under the JIQ policy is indistinguishable from that under the JSQ policy in the Halfin-Whitt heavy-traffic regime~\cite{mukherjee2016universality}. 
In contrast, in the conventional heavy-traffic regime considered in this paper, its delay performance is strictly between that of JSQ and random routing as shown by Theorem~\ref{thm:constant}.

\subsection{Notations}
%Given a stochastic process $\{X(t), t\ge 0\}$ with a steady-state distribution, let $\overline{X}$ denote a random variable whose probability distribution is the same as the steady-state distribution of a given process $\{X(t), t\ge 0\}$. 

The dot product in $\mathbb{R}^N$ is denoted by $\inner{\mathbf{x} }{\mathbf{y} } \triangleq \sum_{n=1}^N x_ny_n$. For any $\mathbf{x} \in \mathbb{R}^N$, the $l_1$ norm is denoted by $\norm{\mathbf{x}}_1 \triangleq \sum_{n=1}^N |x_n|$ and $l_2$ norm is denoted by $\norm{\mathbf{x}} \triangleq \sqrt{\inner{\mathbf{x}}{\mathbf{x}}}$. In general, the $l_r$ norm is denoted by $\norm{\mathbf{x}}_r \triangleq (\sum_{n=1}^N |x_n|^r)^{1/r}$. Let $\mathcal{N}$ denote the set $\{1,2,\ldots,N\}$.
% Let $\mathbf{1}_N \triangleq \frac{1}{\sqrt{N}}(1,1,\ldots,1)$. 
% Then the parallel and perpendicular component of any vector $\mathbf{x}$ in $\mathbb{R}^N$ with respect to the vector $\mathbf{1}_N$ is denoted by $\mathbf{x}_\parallel \triangleq \inner{\mathbf{1}_N}{\mathbf{x}}\mathbf{1}_N$ and $\mathbf{x}_\perp \triangleq \mathbf{x} - \mathbf{x}_\parallel$, respectively.

% !TEX root = ./sig19.tex
\section{System Model and Preliminaries}
This section first describes the system model and assumptions considered in this paper. Then, several necessary preliminaries are presented.
\subsection{System model}
We consider a discrete-time load balancing system consisting of a central dispatcher and $N$ servers. Each server maintains an infinite capacity FIFO queue. At the central dispatcher, there is also a local memory denoted as $m(t)$, through which the dispatcher can have limited information about the system. In each time-slot, the central dispatcher routes the new incoming tasks to one of the servers, immediately upon arrival as in~\cite{eryilmaz2012asymptotically,maguluri2014heavy,wang2016maptask,xie2015priority,xie2016scheduling,zhou2017designing}. Once a task joins a queue, it will remain in that queue until its service is completed. Each server is assumed to be work conserving: a server is idle if and only if its corresponding queue is empty.

\subsubsection{Arrival and Service} Let $A_{\Sigma}(t)$ denote the number of exogenous tasks that arrive at the beginning of time-slot $t$. We assume that $A_{\Sigma}(t)$ is an integer-valued random variable, which is \emph{i.i.d.} across time-slots. The mean and variance of $A_{\Sigma}(t)$ are denoted by $\lambda_{\Sigma}$ and $\sigma_{\Sigma}^2$, respectively. We further assume that there is a positive probability for $A_{\Sigma}(t)$ to be zero.
Let $S_n(t)$ denote the amount of service that server $n$ offers for queue $n$ in time-slot $t$. Note that this is not necessarily equal to the number of tasks that leaves the queue because the queue may be empty. We assume that $S_n(t)$ is an integer-valued random variable, which is \emph{i.i.d.} across time-slots. We also assume that $S_n(t)$ is independent across different servers as well as the arrival process. The mean and variance of $S_n(t)$ are denoted as $\mu_n$ and $\nu_n^2$, respectively. Let $\mu_{\Sigma} \triangleq \Sigma_{n=1}^N \mu_n$ and $\nu_{\Sigma}^2 \triangleq \Sigma_{n=1}^N \nu_n^2$ denote the mean and variance of the hypothetical total service process $S_{\Sigma}(t) \triangleq \sum_{n=1}^N S_n(t)$. 
To illustrate the key ideas behind the results, we first assume that both the arrival and service processes have a finite support, i.e., $A_{\Sigma}(t) \le A_{max} < \infty$ and $S_n(t) \le S_{max} <\infty$ for all $t$ and $n$. However, the main results still hold when the support is infinite, as discussed in Section \ref{sec:general}.

\subsubsection{Queue Dynamics} Let $Q_n(t)$ be the queue length of server $n$ at the beginning of time slot $t$. 
Let $A_n(t)$ denote the number of tasks routed to queue $n$ at the beginning of time-slot $t$ according to the dispatching decision. 
Then the evolution of the length of queue $n$ is given by 
\begin{equation}
	\label{eq:Qdynamic}
	Q_n(t+1) = Q_n(t) + A_n(t) - S_n(t) + U_n(t), n = 1,2,\ldots, N,
\end{equation}
where $U_n(t) = \max\{S_n(t)-Q_n(t)-A_n(t),0\}$ is the unused service due to an empty queue.

\subsection{Preliminaries}
In this paper, we are interested in a general pull-based policy formally defined as follows. In words, under this policy, the arrival is randomly dispatched to one of the servers whose queue lengths are below a threshold $r$, if there are any; Otherwise, it is dispatched to one of $N$ queues randomly.
\begin{definition}
	Join-Below-Threshold (JBT) policy is composed of the following components:
	\begin{enumerate}[(a)]
		\item Each server $n$ sends its ID to the dispatcher when its queue length is below the threshold $r$ for the first time.
		\item Upon a new arrival, the dispatcher checks the available IDs in the memory. If they exist, it removes one uniformly at random, and sends all the new arrivals to the corresponding server. Otherwise, all the new arrivals are dispatched uniformly at random to one of the servers in the system.
		\item For the case of heterogeneous servers, in (a) each server also sends its $\mu_n$ to the dispatcher and in (b) instead of choosing the ID uniformly at random, the dispatcher selects the ID in proportion to the service rate, that is, if the ID of server $i$ is in $m(t)$, the probability for server $i$ to be chosen is $\mu_i/\sum_{j \in m(t)} \mu_j$.
	\end{enumerate}
\end{definition}
\begin{remark}
	It is easy to see that JIQ is a special case of JBT with $r = 1$. 
\end{remark}
 
The considered load balancing system under JBT can be modeled as a discrete-time Markov chain $\{Z(t) = (\Q(t),m(t)), t\ge 0\}$ with state space $\mathcal{Z}$, using the queue length vector $\Q(t)$ together with the memory state $m(t)$. We consider a set of load balancing systems $\{Z^{(\epsilon)}(t), t\ge 0\}$ parameterized by $\epsilon$ such that the mean arrival rate of the exogenous arrival process $\{A_{\Sigma}^{(\epsilon)}(t), t\ge 0\}$ is $\lambda_{\Sigma}^{(\epsilon)} = \mu_\Sigma - \epsilon$. Note that the parameter $\epsilon$ characterizes the distance between the arrival rate and the boundary of the capacity region. We are interested in the throughput performance and more importantly the steady-state delay performance in the heavy-traffic regime under the JBT policy.

Recall that a load balancing system is stable if the Markov chain $\{Z(t), t\ge 0\}$ is positive recurrent, and $\overline{Z} = \{\overline{\Q}, \overline{m}\}$ denotes the random vector whose distribution is the same as the steady-state distribution of $\{Z(t), t\ge 0\}$. We have the following definition.
{
\begin{definition}[Throughput Optimality]
	A load balancing policy is said to be throughput optimal if for any arrival rate within the capacity region, i.e., for any $\epsilon > 0$, the system is positive recurrence and all the moments of $\big\lVert{\overline{\Q}^{(\epsilon)}}\big\rVert$ are finite.
\end{definition}

Note that this is a stronger definition of throughput optimality than that in~\cite{wang2016maptask,xie2016scheduling,zhou2017designing}, because besides the positive recurrence, it also requires all the moments to be finite in steady state for any arrival rate within the capacity region.}

% In this paper, we assume that the dispatching decision in each time-slot can at most depend on $\Q(t)$. Thus, with the system model above, the queue length process $\{\Q(t), t\ge 0\}$ forms a Markov chain. We consider a set of load balancing systems $\{\Q^{(\epsilon)}(t), t\ge 0\}$ parameterized by $\epsilon$ such that the mean arrival rate of the exogenous arrival process $\{A_{\Sigma}^{(\epsilon)}(t), t\ge 0\}$ is $\lambda_{\Sigma}^{(\epsilon)} = \mu_\Sigma - \epsilon$. Note that the heavy-traffic parameter $\epsilon$ characterizes the distance between the arrival rate and the capacity region boundary. 

% We say that a load balancing system is stable if the Markov chain $\{\Q(t), t\ge 0\}$ is positive recurrent, and then use $\overline{\Q}$ to denote the random vector whose distribution is the same as the steady-state distribution of $\{\Q(t), t\ge 0\}$. 
% We are interested in the throughput performance and the steady-state delay performance in heavy-traffic regime under a given load balancing policy, which motivates the following definitions.
% Now, we are ready to present the definitions of throughput optimality and steady-state heavy-traffic delay optimality, respectively.

% Note that the definition of throughput optimality in this paper is stronger than previous ones. This is because in addition to positive recurrence, it also requires that all the moments are finite in steady-state.

To characterize the steady-state average delay performance in the heavy-traffic regime when $\epsilon$ approaches zero, by Little's law, it is sufficient to focus on the summation of all the queue lengths. First, recall the following fundamental lower bound on the expected sum queue lengths in a load balancing system under any throughput optimal policy \cite{eryilmaz2012asymptotically}.
% In the heavy-traffic regime, one is interested in the behavior of the queue lengths as $\epsilon$ approaches zero. 
% In order to present and understand the definition of steady-state heavy-traffic delay optimality, we will first recall the fundamental lower bound on the expected sum queue lengths under any throughput optimal policy \cite{eryilmaz2012asymptotically}.

\begin{lemma}
\label{lem:lower_bound}
    Given any throughput optimal policy and assuming that $(\sigma_{\Sigma}^{(\epsilon)})^2$ converges to a constant $\sigma_{\Sigma}^2$ as $\epsilon$ decreases to zero, then 
	% Let $\overline{\Q}^{(\epsilon)}$ be a random vector which is equal in distribution to the queue length $\Q(t)$ in the steady state under any feasible load balancing scheme. 
	\begin{equation}
	\label{eq:lower_bound}
		\liminf_{\epsilon \downarrow 0} \epsilon \ex{\sum_{n=1}^N \overline{Q}_n^{(\epsilon)} } \ge \frac{\zeta}{2},
	\end{equation}
	where $\zeta \triangleq \sigma_{\Sigma}^2 + \nu_{\Sigma}^2$.
\end{lemma}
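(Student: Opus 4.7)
The plan is to carry out a steady-state Lyapunov drift argument with the quadratic test function $V(\Q) = \norm{\Q}_1^2$ and then rearrange to isolate $\epsilon \, \ex{\norm{\overline{\Q}^{(\epsilon)}}_1}$. Throughput optimality of the policy guarantees positive recurrence of $\{Z^{(\epsilon)}(t)\}$ together with finiteness of all moments of $\norm{\overline{\Q}^{(\epsilon)}}_1$, which licenses the use of $\ex{V(\overline{\Q}(t+1)) - V(\overline{\Q}(t))} = 0$ in stationarity. Summing~\eqref{eq:Qdynamic} across servers gives the aggregate recursion $\norm{\Q(t+1)}_1 = \norm{\Q(t)}_1 + A_\Sigma(t) - S_\Sigma(t) + U_\Sigma(t)$, where $U_\Sigma(t) = \sum_n U_n(t) \ge 0$, so the zero-drift identity in steady state reads
\[
0 \;=\; 2\ex{\norm{\overline{\Q}}_1 (A_\Sigma - S_\Sigma + U_\Sigma)} + \ex{(A_\Sigma - S_\Sigma + U_\Sigma)^2}.
\]
Independence of $A_\Sigma$ and $S_\Sigma$ from $\overline{\Q}$ collapses $\ex{\norm{\overline{\Q}}_1 (A_\Sigma - S_\Sigma)} = -\epsilon \, \ex{\norm{\overline{\Q}}_1}$, yielding
\[
2\epsilon \, \ex{\norm{\overline{\Q}}_1} \;=\; 2\ex{\norm{\overline{\Q}}_1 U_\Sigma} + \ex{(A_\Sigma - S_\Sigma + U_\Sigma)^2}.
\]

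Next I would lower bound the right-hand side. Because $\norm{\overline{\Q}}_1$ and $U_\Sigma$ are both nonnegative, the first term can simply be discarded. For the second, expand
\[
\ex{(A_\Sigma - S_\Sigma + U_\Sigma)^2} \;=\; \ex{(A_\Sigma - S_\Sigma)^2} + 2\ex{(A_\Sigma - S_\Sigma) U_\Sigma} + \ex{U_\Sigma^2},
\]
where independence gives $\ex{(A_\Sigma - S_\Sigma)^2} = \zeta^{(\epsilon)} + \epsilon^2$. Stationarity of $\norm{\overline{\Q}}_1$ forces $\ex{U_\Sigma} = \mu_\Sigma - \lambda_\Sigma^{(\epsilon)} = \epsilon$, and boundedness $0 \le U_\Sigma \le N S_{\max}$ then yields $\ex{U_\Sigma^2} \le N S_{\max}\,\epsilon$. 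A single application of Cauchy--Schwarz bounds the middle cross term in absolute value by $\sqrt{\ex{(A_\Sigma - S_\Sigma)^2}\, \ex{U_\Sigma^2}} = O(\sqrt{\epsilon})$. Assembling these estimates delivers $2\epsilon \, \ex{\norm{\overline{\Q}^{(\epsilon)}}_1} \ge \zeta^{(\epsilon)} - O(\sqrt{\epsilon})$, and the assumption $(\sigma_\Sigma^{(\epsilon)})^2 \to \sigma_\Sigma^2$, which makes $\zeta^{(\epsilon)} \to \zeta$, then gives $\liminf_{\epsilon \downarrow 0} \epsilon \, \ex{\norm{\overline{\Q}^{(\epsilon)}}_1} \ge \zeta/2$.

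The conceptual obstacle I expect is the cross term $\ex{\norm{\overline{\Q}}_1 U_\Sigma}$. In the analogous single-server heavy-traffic lower bound, the identity $\Phi(t+1) U(t) = 0$ makes the corresponding drift equation essentially tight; in the multi-server setting, however, idle servers can coexist with large backlogs elsewhere, so $\norm{\overline{\Q}}_1 U_\Sigma$ is generally strictly positive and no such identity is available. Fortunately, since only a lower bound is sought, this term sits on the favorable side of the inequality and can simply be dropped. The remaining nontrivial task is then to ensure that the cross term $\ex{(A_\Sigma - S_\Sigma) U_\Sigma}$ inside $\ex{(A_\Sigma - S_\Sigma + U_\Sigma)^2}$ is asymptotically negligible; this in turn follows from boundedness of $U_\Sigma$ together with the stationary identity $\ex{U_\Sigma} = \epsilon$, which force $\ex{U_\Sigma^2} = O(\epsilon)$ and hence the cross term to vanish at rate $O(\sqrt{\epsilon})$ in the heavy-traffic limit.
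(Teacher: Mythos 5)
Your proposal is correct. The paper does not prove Lemma~\ref{lem:lower_bound} itself but cites~\cite{eryilmaz2012asymptotically}, and the surrounding text frames the result via a coupling argument: because tasks cannot migrate between queues, the sum $\sum_n Q_n(t)$ dominates the queue length of the resource-pooled single-server system path by path, and one then applies a single-server heavy-traffic lower bound. Your argument instead runs the quadratic Lyapunov drift directly on the aggregate process $\norm{\Q(t)}_1$, which obeys the single-server-like recursion $\norm{\Q(t+1)}_1 = \norm{\Q(t)}_1 + A_\Sigma(t) - S_\Sigma(t) + U_\Sigma(t)$ with $U_\Sigma(t)\ge 0$. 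Setting the drift of $\norm{\Q}_1^2$ to zero in stationarity (licensed by the bounded-moments clause in the throughput-optimality definition) gives the identity
\begin{equation*}
2\epsilon\,\ex{\normss{\overline{\Q}^{(\epsilon)}}_1}
= 2\ex{\normss{\overline{\Q}^{(\epsilon)}}_1\,\overline{U}_\Sigma^{(\epsilon)}} + \ex{(A_\Sigma - S_\Sigma + \overline{U}_\Sigma^{(\epsilon)})^2},
\end{equation*}
and you discard the nonnegative cross term. This inlines the coupling step and is more self-contained. It also makes transparent something the paper exploits heavily: the dropped term $\ex{\normss{\overline{\Q}}_1\,\overline{U}_\Sigma}$ is precisely the ``resource-pooling gap,'' and its variant $\ex{\normss{\overline{\Q}(t+1)}_1\normss{\overline{\UU}(t)}_1}$ is exactly the quantity whose vanishing is shown (Lemma~\ref{lem:necessary}) to be necessary and sufficient for heavy-traffic delay optimality. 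Your bookkeeping checks out: $\ex{(A_\Sigma - S_\Sigma)^2} = \zeta^{(\epsilon)} + \epsilon^2$ by independence; $\ex{\overline{U}_\Sigma} = \epsilon$ from the zero first-moment drift; and $\overline{U}_\Sigma \le N S_{\max}$ gives $\ex{\overline{U}_\Sigma^2} = O(\epsilon)$, so the Cauchy--Schwarz bound on the mixed term $\ex{(A_\Sigma - S_\Sigma)\overline{U}_\Sigma}$ is $O(\sqrt{\epsilon})$ as claimed, and the $\liminf$ follows upon $\zeta^{(\epsilon)}\to\zeta$.
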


The right-hand-side of Eq. \eqref{eq:lower_bound} is the heavy-traffic limit of a hypothetic single-server system with arrival process $A_\Sigma^{(\epsilon)}(t)$ and service process $\sum_n^N S_n(t)$ for all $t\ge0$. This hypothetical single-server queueing system is often called the \textit{resource-pooled system}. Since a task cannot be moved from one queue to another in the load balancing system, it is easy to see that the expected sum queue lengths of the load balancing system is larger than the expected queue length in the resource-pooled system. However, under a certain load balancing policy, the lower bound in Eq. \eqref{eq:lower_bound} can actually be attained in the heavy-traffic limit and hence based on Little's law this policy achieves the minimum average delay of the system in steady-state. This directly motivates the following definition of steady-state heavy-traffic delay optimality as in \cite{eryilmaz2012asymptotically,maguluri2014heavy,wang2016maptask,xie2015priority,xie2016scheduling,zhou2017designing}.

\begin{definition}[Heavy-traffic Delay Optimality in Steady-state]
	A load balancing scheme is said to be heavy-traffic delay optimal in steady-state if the steady-state queue length vector $\overline{\Q}^{(\epsilon)}$ satisfies 
	\begin{equation*}
		\limsup_{\epsilon \downarrow 0} \epsilon \ex{\sum_{n=1}^N \overline{Q}_n^{(\epsilon)} } \le \frac{\zeta}{2},
	\end{equation*}
	where $\zeta$ is defined in Lemma \ref{lem:lower_bound}.
\end{definition}

% In this paper, we are interested in the following pull-based policy called Join-Below-Threshold (JBT), which includes JIQ as a special case.
% \begin{definition}
% 	Join-Below-Threshold (JBT) policy is composed of the following components:
% 	\begin{enumerate}[(a)]
% 		\item Each server $n$ sends its ID to the dispatcher when its queue length is below the threshold $r$ for the first time.
% 		\item Upon a new arrival, the dispatcher checks the available IDs in the memory. If they exist, it removes one uniformly at random, and sends all the new arrivals to the corresponding server. Otherwise, all the new arrivals will be dispatched uniformly at random to one of the servers in the system.
% 		\item For the case of heterogeneous servers, in (a) each server also sends its $\mu_n$ to the dispatcher and in (b) instead of choosing the ID uniformly at random, the dispatcher selects the ID in proportion to the service rate, that is, if the ID of server $i$ is in $m(t)$, the probability for server $i$ to be chosen is $\mu_i/\sum_{j \in m(t)} \mu_j$.
% 	\end{enumerate}
% \end{definition}
% \begin{remark}
% 	Note that JIQ in [] is a special case of JBT with $r = 1$. 
% \end{remark}

In the analysis of the delay performance of JBT, the following region $\mathcal{R}^{(r)}$ in $\mathbb{R}^N$ plays an instrumental role by the virtue of the JBT policy.
\begin{align}
\label{eq:def_R}
	\mathcal{R}^{(r)} &= \mathcal{R}_{l}^{(r)} \cup \mathcal{R}_u^{(r)},
\end{align}
where $r \ge 1$ and
\begin{align*}
&\mathcal{R}_{l}^{(r)} \triangleq \left\{ \mathbf{x} \in \mathbb{R}^N_+ : x_n \le r \text{ for all } n \in \mathcal{N} \right\} \\
&\mathcal{R}_u^{(r)} \triangleq \left\{ \mathbf{x} \in \mathbb{R}^N_+ : x_n \ge r  \text{ for all } n \in \mathcal{N} \right\}.
\end{align*}
By the definition of the JBT policy, we have that whenever the queue lengths vector is within the region $\mathcal{R}^{(r)}$, then JBT reduces to (proportionally) random routing. On the other hand, when the queue lengths vector is outside the region $\mathcal{R}^{(r)}$, shorter queues are preferred over longer queues.
\section{Main Results}
In this section, we present both necessary and sufficient conditions on the threshold $r$ for the JBT policy to be heavy-traffic delay optimal in steady-state. We first establish throughput optimality of the JBT policy, which serves as a basis for the analysis of heavy-traffic delay optimality.

\subsection{Throughput optimality}

We first prove the following result, which establishes that a load balancing system under the JBT policy is stable with bounded moments on the queue lengths for any threshold $r\ge 1$. 
\begin{lemma}
\label{thm:throughput}
JBT is throughput optimal with the $p$-th moment of $\norms{\overline{\Q}^{(\epsilon)}}$ being $O(1/\epsilon^p)$ for any threshold $r\ge 1$ and integer $p\ge 1$.
\end{lemma}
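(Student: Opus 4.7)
The plan is to establish positive recurrence and the polynomial moment bound simultaneously via a Foster--Lyapunov argument with the quadratic Lyapunov function $V(\mathbf{Q}) = \sum_{n=1}^N Q_n^2$, followed by an application of Hajek's moment-bound lemma. Starting from $Q_n(t+1) = (Q_n(t) + A_n(t) - S_n(t))^+$ and using the bounded support of the arrivals and services, a standard second-moment expansion yields
\[
\ex{V(\mathbf{Q}(t+1)) - V(\mathbf{Q}(t)) \mid \mathbf{Q}(t), m(t)} \;\le\; 2 \sum_{n=1}^N Q_n(t)\bigl(\ex{A_n(t) \mid \mathbf{Q}(t), m(t)} - \mu_n\bigr) + K,
\]
for a finite constant $K=K(N,A_{\max},S_{\max})$.

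Next, I would case-split on the memory. When $m(t)=\emptyset$, JBT routes the whole slot proportionally to $\mu_n$, so $\ex{A_n(t) \mid \mathbf{Q}(t), m(t)} = \lambda_\Sigma^{(\epsilon)}\mu_n/\mu_\Sigma$ and the cross-term reduces to $-2\epsilon \sum_n Q_n(t)\mu_n/\mu_\Sigma \le -c_1 \epsilon \norm{\mathbf{Q}(t)}_1$ with $c_1 = \min_n \mu_n/\mu_\Sigma > 0$. When $m(t)\neq \emptyset$, all arrivals of the slot are sent to a single server $n^*\in m(t)$, and the structural fact $Q_{n^*}(t)<r$---an ID enters $m$ only when its queue first drops below $r$, and while the ID sits in $m$ the corresponding server receives no arrivals so its queue can only decrease by service---yields $\sum_n Q_n(t)\ex{A_n(t)\mid \mathbf{Q}(t),m(t)} \le \lambda_\Sigma^{(\epsilon)} r$, and thus the cross-term is bounded by $2r\mu_\Sigma - 2(\min_n \mu_n)\norm{\mathbf{Q}(t)}_1$. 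Taking the worse of the two cases gives a uniform drift inequality of the form $\ex{V(\mathbf{Q}(t+1)) - V(\mathbf{Q}(t)) \mid \mathbf{Q}(t), m(t)} \le -c_0 \epsilon \norm{\mathbf{Q}(t)}_1 + K_r$, valid for all $\mathbf{Q}(t)$ and $m(t)$.

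To convert the quadratic drift into a constant-order negative drift with the correct $\epsilon$-scaling, I would transfer the bound to $W(\mathbf{Q}) = \norm{\mathbf{Q}}$ via the identity $V(t+1)-V(t) = (W(t+1)+W(t))(W(t+1)-W(t))$, combined with the deterministic increment bound $|W(t+1)-W(t)| \le \sqrt{N}(A_{\max}+S_{\max})$ and the inequality $\norm{\mathbf{Q}}_1 \ge \norm{\mathbf{Q}} = W$. This delivers $\ex{W(t+1)-W(t) \mid \mathbf{Q}(t), m(t)} \le -\eta\epsilon$ whenever $W(t)\ge B/\epsilon$ for constants $\eta,B>0$. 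Hajek's moment-bound lemma applied to $W$ with negative drift $\eta\epsilon$ and bounded increments then gives $\ex{\exp(\theta\norm{\overline{\mathbf{Q}}^{(\epsilon)}})} <\infty$ for some $\theta=\Theta(\epsilon)$, from which a Taylor expansion of the exponential immediately yields $\ex{\norm{\overline{\mathbf{Q}}^{(\epsilon)}}^p} = O(\epsilon^{-p})$ for every integer $p\ge 1$; positive recurrence is an immediate consequence of the negative drift outside a bounded set.

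The main obstacle I anticipate is rigorously justifying the structural claim $Q_{n^*}(t)<r$ in the memory-non-empty case, which is what allows the cross-term to be bounded in terms of $r$ rather than in terms of $\norm{\mathbf{Q}}$. This requires an induction on the dynamics of $m(t)$ showing that, for as long as server $n$'s ID sits in $m(t)$, the dispatch rule never sends arrivals to server $n$ (selection of its ID simultaneously removes it from memory), so $Q_n$ is nonincreasing on that interval, starting from the value $<r$ at which the ID was enrolled. Without this monotonicity the cross-term in the non-empty-memory case could not be controlled, and the whole drift argument would collapse.
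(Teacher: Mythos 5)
Your overall plan---quadratic Foster--Lyapunov drift, transfer to the $\ell_2$-norm via concavity, then Hajek's exponential/moment lemma---matches the paper's structure, but you handle the crucial arrival cross-term $\sum_n Q_n\bigl(\ex{A_n\mid Z}-\mu_n\bigr)$ by a genuinely different route. The paper's proof dispenses with the memory case-split in one line: it asserts that for \emph{any} memory state, the cross-term under JBT is dominated by the cross-term under proportionally random routing, i.e.\ $\sum_{n\in m}Q_n\mu_n/\sum_{i\in m}\mu_i \le \sum_n Q_n\mu_n/\mu_\Sigma$, because the memory holds exactly the servers with $Q_n<r$ (hence the $\mu$-weighted average queue length in memory is at most the $\mu$-weighted average over all servers). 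This yields a drift bound $-2\epsilon\,\mu_{\min}\norm{\Q}/\mu_\Sigma + L$ in which both the negative coefficient and the additive constant $L=N\max(A_{\max},S_{\max})^2$ are independent of $r$. Your route---case-splitting on $m(t)=\emptyset$ vs.\ $m(t)\ne\emptyset$ and, in the latter case, bounding $\sum_n Q_n\ex{A_n\mid Z}\le \lambda_\Sigma r$ via the structural monotonicity of in-memory queues---is arguably more robust (it only needs the one-sided fact that memory servers have $Q_n<r$, not that memory is exactly $\{n:Q_n<r\}$), but it trades away $r$-uniformity.

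Concretely, your uniform drift bound has the form $-c_0\epsilon\norm{\Q}_1 + K_r$ with $K_r = \Theta(r)$, so after the $\sqrt{\cdot}$-transfer the Hajek threshold becomes $\kappa=\Theta(r/\epsilon)$ and the resulting $p$-th moment bound is $O(r^p/\epsilon^p)$, not $O(1/\epsilon^p)$. For any fixed $r$ this is the stated result, but the lemma is invoked downstream (Theorem~\ref{thm:log}, and case~2 of Theorem~\ref{thm:constant}) with $r=r^{(\epsilon)}\to\infty$, where the paper's $r$-free constant $M_p$ is what lets the writers keep the bookkeeping clean. Your extra $r^p$ factor happens to be harmless there because $r^{(\epsilon)}=o(1/\epsilon)$ and the competing Chernoff term $e^{-\theta^* r}$ decays faster than any polynomial, so your proof would still support the downstream arguments after a small amount of extra bookkeeping---but as stated, the $O(1/\epsilon^p)$ claim is not quite what your drift estimate gives. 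If you want to recover the $r$-free bound, replace the ``$\le\lambda_\Sigma r$'' step in the nonempty-memory case by the tighter observation that the $\mu$-weighted average queue length over memory is at most the $\mu$-weighted average over all servers (which follows once you note that every server outside memory has $Q_n\ge r$, while every server inside has $Q_n<r$); that gives the same $-\epsilon$-scaled negative cross-term as the empty-memory case and eliminates the additive $\Theta(r)$ constant entirely, which is exactly what the paper's step~(c) does implicitly.

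One smaller point on the transfer from $V=\norm{\Q}^2$ to $W=\norm{\Q}$: the algebraic identity $V(t+1)-V(t)=(W(t+1)+W(t))(W(t+1)-W(t))$ cannot simply be divided through inside the conditional expectation because the denominator is random; the clean way (and what the paper does) is to use concavity of $\sqrt{\cdot}$: $\ex{\sqrt{V(t+1)}-\sqrt{V(t)}\mid Z} \le \tfrac{1}{2\sqrt{V(t)}}\ex{V(t+1)-V(t)\mid Z}$. Your bounded-increment observation $|W(t+1)-W(t)|\le\sqrt N\max(A_{\max},S_{\max})$ is still needed, but as the (C2) hypothesis of Hajek's lemma, not as a tool to justify the division.
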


\begin{proof}
	See Appendix \ref{sec:appex_proof_throughput}
\end{proof}

Besides throughput optimality, another important aspect of this lemma is that it serves as the basis for the discussions on heavy-traffic delay optimality in the following sections. This is because, firstly, a load balancing policy that cannot stabilize the system is incapable of being heavy-traffic delay optimal at all. Second, the bounded moments result allows us to set the mean drift of Lyapunov functions concerning queue lengths to be zero in steady state, which plays a pivotal part in the framework of Lyapunov drift-based heavy-traffic analysis.

\subsection{Necessary condition}
In this section, we show that a necessary condition for the JBT policy to achieve heavy-traffic delay optimality is that the threshold $r$ should grow to infinity as the heavy-traffic parameter $\epsilon$ approaches zero. However, as we show it cannot grow too fast. Formally, it is presented in the following theorem.

\begin{theorem}
\label{thm:constant}
Consider a  load balancing system with homogeneous servers under the JBT policy. 
\begin{enumerate}
	\item Suppose the threshold $r$ is any constant in $[1,\infty)$, then we have 
\begin{align}
\label{eq:liminf}
	 \liminf_{\epsilon \downarrow 0} \epsilon \ex{\sum_{n=1}^N \overline{Q}_n^{(\epsilon)} }  > \frac{\zeta}{2} 
\end{align}
and 
\begin{align}
\label{eq:limsup}
	\limsup_{\epsilon \downarrow 0} \epsilon \ex{\sum_{n=1}^N \overline{Q}_n^{(\epsilon)} } < \lim_{\epsilon \downarrow 0} \epsilon \ex{\sum_{n=1}^N \overline{Q}_{n,\text{Rand} }^{(\epsilon)} },
\end{align}
where $\overline{\Q}_{\text{Rand}}^{(\epsilon)}$ is the steady-state vector under random routing policy.
\item Suppose the threshold $r^{(\epsilon)} = (1/\epsilon)^{1+\alpha}$ for any constant $\alpha > 0$, then we have 
\begin{align}
\label{eq:necessary_upper}
	\lim_{\epsilon \downarrow 0} \epsilon \ex{\sum_{n=1}^N \overline{Q}_n^{(\epsilon)} } = \lim_{\epsilon \downarrow 0} \epsilon \ex{\sum_{n=1}^N \overline{Q}_{n,\text{Rand} }^{(\epsilon)} }.
\end{align}
\end{enumerate}

\end{theorem}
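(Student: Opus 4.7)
The plan is to adapt the Lyapunov-drift framework of \cite{eryilmaz2012asymptotically,maguluri2014heavy}. The starting point is the zero-drift identity $\mathbb{E}[V(\overline{\mathbf{Q}}(t+1))] = \mathbb{E}[V(\overline{\mathbf{Q}}(t))]$ in steady state for $V(\mathbf{Q}) = \|\mathbf{Q}\|^2$, which is valid thanks to the bounded moments from Lemma~\ref{thm:throughput}. Expanding this identity together with the analogous one for $W(\mathbf{Q}) = (\sum_n Q_n)^2$, and using work-conservation to handle the unused service, one obtains an identity of the schematic form
\begin{equation*}
2\epsilon\,\mathbb{E}\!\left[\sum_{n=1}^{N}\overline{Q}_n^{(\epsilon)}\right] = \zeta - \epsilon^2 + T_1^{(\epsilon)} - T_2^{(\epsilon)},
\end{equation*}
where $T_1^{(\epsilon)}$ is a routing-imbalance term (roughly, the per-slot variance of the dispatch vector around the proportional-to-$\boldsymbol{\mu}$ target) and $T_2^{(\epsilon)}$ is an unused-service correction that vanishes under throughput optimality. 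Heavy-traffic delay optimality corresponds to $T_1^{(\epsilon)}\to 0$, and all three claims will be read off from the asymptotics of $T_1^{(\epsilon)}$ under JBT as the dependence of $r$ on $\epsilon$ varies.

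For part~(2), where $r^{(\epsilon)} = (1/\epsilon)^{1+\alpha}$: Lemma~\ref{thm:throughput} with $p=1$ gives $\mathbb{E}[\sum_n \overline{Q}_n^{(\epsilon)}] = O(1/\epsilon)$, so Markov's inequality yields $\mathbb{P}(\max_n \overline{Q}_n^{(\epsilon)} \geq r^{(\epsilon)}) = O(\epsilon^\alpha) \to 0$. Hence in stationarity every queue lies strictly below the threshold with probability $1 - O(\epsilon^\alpha)$, and upcrossings of $r^{(\epsilon)}$ are therefore asymptotically rare. Since IDs enter the memory only on downcrossings (each of which must follow an upcrossing), the arrival process depletes the memory to empty after a transient and keeps it empty with high probability. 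Whenever the memory is empty, JBT dispatches uniformly at random, so $T_1^{(\epsilon)}$ under JBT converges to its value under random routing, yielding \eqref{eq:necessary_upper}.

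For part~(1) with constant $r$, both bounds are established by pinning down $T_1^{(\epsilon)}$ more delicately. In heavy traffic each homogeneous server's queue length grows to infinity in expectation, so queues spend most of the time well above $r$, keeping the memory empty most of the time; this drives $T_1^{(\epsilon)}$ toward its random-routing value, which is a strictly positive constant (essentially the multinomial variance of splitting $A_\Sigma(t)$ uniformly among $N$ servers), yielding the strict liminf bound \eqref{eq:liminf}. For the strict limsup bound \eqref{eq:limsup}, I would show that JBT nonetheless improves on random routing by a strictly positive amount uniform in $\epsilon$: the assumed positive probability that $A_\Sigma(t) = 0$ together with positive service rate guarantees that the event ``some queue is below $r$ and the memory is non-empty'' has stationary probability bounded below by a constant $p_r > 0$ independent of $\epsilon$; on this event JBT routes strictly more favorably than uniform, reducing $T_1^{(\epsilon)}$ by a uniform amount and producing a strict gap with the random-routing limit. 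A clean formalization is a two-policy coupling between JBT and random routing, or equivalently a drift-difference argument.

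The main obstacles are controlling the memory state $m(t)$, whose dynamics couple to the queues through downcrossings of $r$, and upgrading qualitative observations (``memory is empty with high probability'' or ``JBT pointwise improves on random routing'') into the uniform-in-$\epsilon$ quantitative estimates needed to separate the \emph{limits} of $\epsilon\,\mathbb{E}[\sum_n \overline{Q}_n^{(\epsilon)}]$. For part~(2) the critical step is a quantitative handle on how fast the memory empties and how rarely it is refilled in terms of the threshold growth $(1/\epsilon)^{1+\alpha}$; for the upper bound in part~(1), the critical step is a uniform-in-$\epsilon$ lower bound on the probability of the ``useful'' memory-non-empty event, exploiting the positive idle probability of the arrival process together with a renewal-type argument for queue excursions below $r$.
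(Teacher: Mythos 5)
Your high-level framework --- set the steady-state drift of a quadratic Lyapunov function to zero, isolate a routing-imbalance term, and read off all three claims from its asymptotics --- matches the paper. The paper in fact uses $V_1(\Q) = \sum_{i<j}(Q_i-Q_j)^2$, which is $N\|\Q\|^2 - (\sum_n Q_n)^2$, so your choice of $\|\Q\|^2$ and $(\sum_n Q_n)^2$ is the same decomposition up to linear combination. The gaps are in the execution, and two of them are serious.

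For part (2), Markov's inequality is categorically too weak. To make the argument rigorous you must show that the cross-moment $\ex{\overline{Q}_i\,\mathcal{I}(\overline{Q}_i \ge r,\overline{Q}_j = m)}$ vanishes \emph{after} summing over $m \in \{0,\dots,r-1\}$. Cauchy--Schwarz gives a bound of order $\sqrt{\ex{\overline{Q}_i^2}\,\mathbb{P}(\overline{Q}_i\ge r)}$. Your Markov estimate gives $\mathbb{P}(\overline{Q}_i\ge r) = O(\epsilon^\alpha)$; combining with $\ex{\overline{Q}_i^2} = O(1/\epsilon^2)$ gives each term of order $\epsilon^{\alpha/2-1}$, and the sum over the $r=(1/\epsilon)^{1+\alpha}$ values of $m$ blows up. The paper instead exploits the exponential moment bound $\ex{e^{\theta^*\|\overline{\Q}\|}}\le K_1 e^{\theta^* K_2/\epsilon}$ (Eq.\ \eqref{eq:mgf} in the proof of Lemma~\ref{thm:throughput}) and a Chernoff bound, which yields $\mathbb{P}(\overline{Q}_i\ge r) = O(e^{\theta^*/\epsilon - \theta^* r})$; for $r = (1/\epsilon)^{1+\alpha}$ this is super-exponentially small and absorbs both the $1/\epsilon$ from the second moment and the $r$ terms in the $m$-sum. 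This is not just a ``quantitative handle'' to be sharpened later --- polynomial decay fails, and only the exponential tail closes the argument.

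For part (1), the intuition that queues stay above $r$ most of the time is correct, but it does not by itself produce the strict gap. The paper's liminf mechanism is a \emph{self-referential} bound: after lower-bounding $\mathcal{T}_1^{(\epsilon)}$ in terms of the very quantity $\sum_{i<j}\sum_k k\,\overline{U}_j\,\mathbb{P}(E_{(k,0,\ge1)})$ that controls delay optimality (using positive recurrence to transfer probability between nearby states, via the constant $\hat p$ in Eqs.\ \eqref{eq:1}--\eqref{eq:3}), one solves for that quantity and finds a strictly positive lower bound coming from the strictly positive variance term $\mathcal{T}_2^{(\epsilon)}$. Your argument does not contain this step and instead gestures at ``$T_1$ drifts toward its random-routing value,'' which cannot simultaneously justify the strict limsup gap with random routing --- a tension your proposal doesn't resolve. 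For the limsup itself the paper does \emph{not} couple chains; it bounds $\mathcal{T}_1^{(\epsilon)}$ from above by a negative constant times the same quantity that the liminf estimate bounds from below, so a single estimate feeds both inequalities. Your proposed coupling is a genuinely different route and might eventually work, but controlling coupled JBT and random-routing chains uniformly in $\epsilon$ as they diverge is substantially harder than the paper's algebraic argument, and as written the uniform-in-$\epsilon$ $p_r > 0$ claim is asserted rather than established.
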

\begin{proof}
	See Section \ref{sec:proof_thm_constant}
\end{proof}

\begin{figure}[t]
	\graphicspath{{./Figures/}}
	\centering
	\includegraphics[width=3.5in]{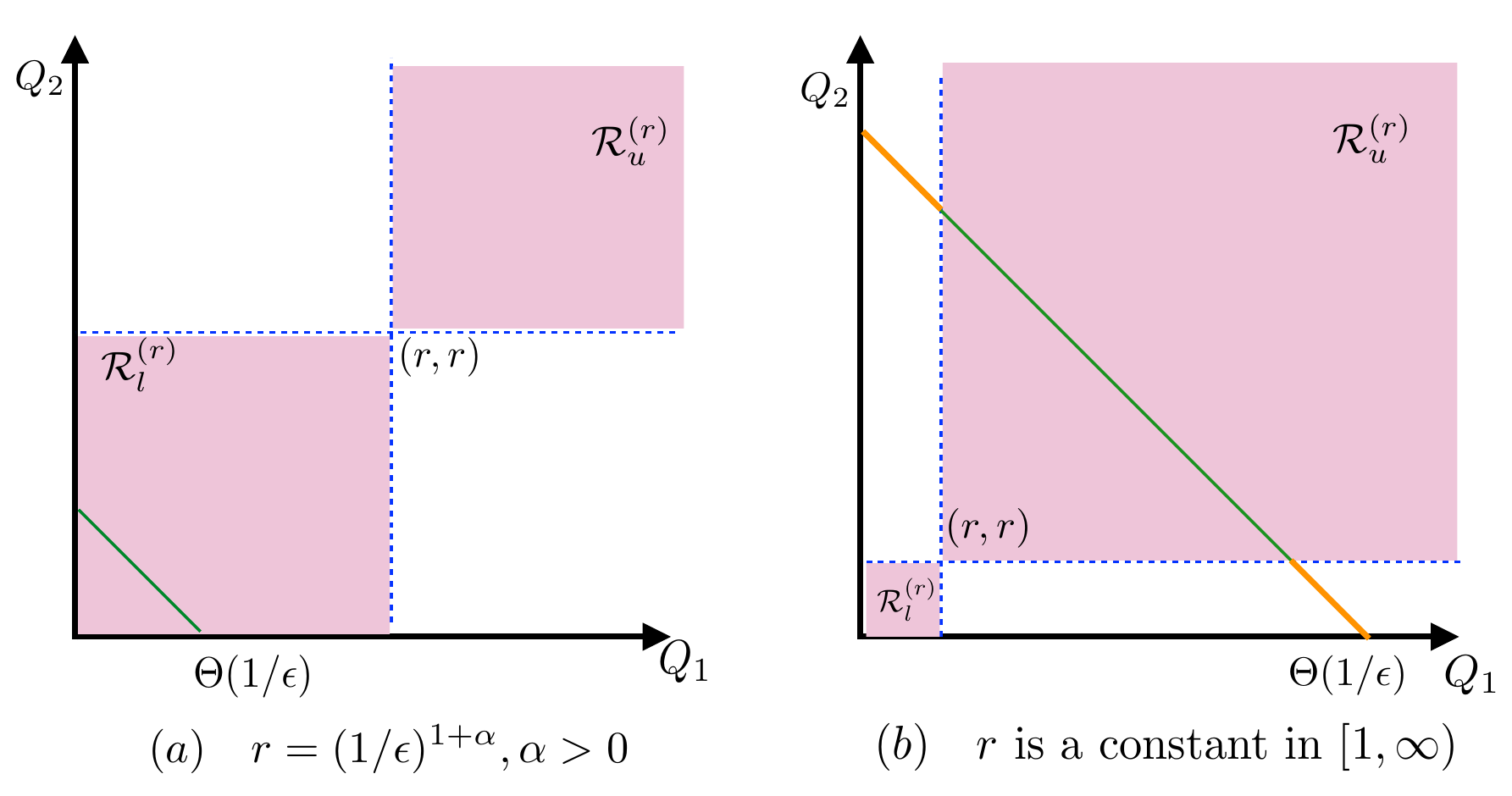}
	\caption{Geometric illustrations of the necessary condition.}\label{fig:necessary}
	% \vspace{-4mm}
\end{figure}

Now, we will present the high-level intuitions behind the necessary condition with the illustration in Fig. \ref{fig:necessary}. These intuitions can not only facilitate understanding of the results, but also motivates the sufficient condition in the next section.

To start with, let us consider case (2) when $r^{(\epsilon)} = (1/\epsilon)^{1+\alpha}$ for any $\alpha > 0$. In this case, all the queue lengths are below the threshold $r$ for high loads since the sum queue lengths in the system is only on the order of $1/\epsilon$. As a result, in case (2), the JBT policy completely degenerates to random routing, which is not heavy-traffic delay optimal~\cite{foschini1978basic}. An illustration of case (2) for a two-server system is presented in Fig. 1(a).

Then, we turn to case (1) for which the threshold is a constant. In particular, combing Eqs. \eqref{eq:liminf} and \eqref{eq:limsup} yields that the delay performance of JBT under any constant $r$ in heavy-traffic lies \emph{strictly} between that of a heavy-traffic delay optimal policy (e.g., JSQ) and that of random routing. This reveals an interesting and kind of counter-intuitive insight about the JBT policy under a constant threshold. For example, consider the special case $r = 1$, i.e., the JIQ policy. At first glance, one might expect that the delay performance of JIQ would downgrade to that of random routing in the heavy-traffic limit, since in this case there are hardly any idle servers, and hence the dispatcher under JIQ would just randomly choose one server when allocating arrivals, as in random routing. However, it turns out that this is not true as shown in Eq. \eqref{eq:limsup}. That is, the performance of JIQ is still strictly better than that of random routing even in the heavy-traffic limit. This demonstrates that JIQ is able to achieve \emph{partial} resource pooling due to the fact that it adopts queue lengths information to prefer shorter queues whenever possible. To see this, note that by positive recurrence, there always exists some time when the queue length vector is outside the region $\mathcal{R}^{(r)}$ and hence shorter queues are preferred (i.e., the orange line in Fig. 1(b)), even though it is much less than the time within the region $\mathcal{R}^{(r)}$ (i.e., the green line in Fig. 1(b)). This is totally different from the case in Fig. 1(a) in which the queue-length state always completely remains within the $\mathcal{R}^{(r)}$ for high loads, and hence JBT would downgrade to random routing in the limit.

On the other hand, to explain the liminf result in Eq. \eqref{eq:liminf}, we will utilize the following result. That is, the necessary (and sufficient) condition for the JBT policy to be heavy-traffic delay optimal is given by
\begin{align}
\label{eq:necessary_main}
			\lim_{\epsilon \downarrow 0}\ex{\big\lVert\overline{\Q}^{(\epsilon)}(t+1) \big\rVert_1 \big\lVert\overline{\UU}^{(\epsilon)}(t) \big\rVert_1} = 0.
\end{align}
This is a direct application of the results in~\cite{zhou2018flexible}.
Note that since $Q_n(t+1)U_n(t) = 0$, the above condition basically means that the key for JBT to be heavy-traffic delay optimal is that it should guarantee that no server is idling while other servers are busy with high loads. In the case when $r$ is a constant, the event that one queue is zero while others with high loads (denoted by $E_{\text{bad}}$) happens with a non-negligible probability since the axes are close to the region $\mathcal{R}_u^{(r)}$. As a result, the left-hand side of Eq. \eqref{eq:necessary_main} is strictly positive, and hence JBT is not heavy-traffic delay optimal for a constant $r$. The intuition that we should guarantee that the event $E_{\text{bad}}$ occurs very rarely in heavy-traffic also motivates  
our sufficient condition in the next section where we let the threshold $r$ grows in a certain rate to guarantee that the axes are far away from the region $\mathcal{R}_u^{(r)}$.

\begin{remark}
	It is worth noting that in~\cite{zhou2017designing}, a similar result as Eq. \eqref{eq:liminf} has been established for the JIQ policy (i.e., the special case $r = 1$ of JBT) in a two-server system under the constraints that the service processes are constant and the variance of arrival process should be larger than a particular value. Thus, our contribution is to generalize the result in~\cite{zhou2017designing} to any constant $r\ge1$ and any finite number of servers without the constraints on service and arrival process as required in~\cite{zhou2017designing}. More importantly, we provide new results given by Eqs. \eqref{eq:limsup} and \eqref{eq:necessary_upper}, which give us a sharper understanding of general pull-based policies.
\end{remark}

\subsection{Sufficient condition}
% In the last section, we have seen that the necessary condition for JBT being heavy-traffic delay optimal is that the threshold must grow to infinity. 
In this section, we now investigate the sufficient condition. In particular, we show that if the threshold in JBT grows at a logarithmic rate with respect to the average sum queue lengths, i.e., $r^{(\epsilon)} \ge K \log(1/\epsilon)$ for some specified constant $K$, then the JBT policy is heavy-traffic delay optimal in steady state, which is formally presented in the following theorem.

% \begin{theorem}
% \label{thm:log}
% 	Consider a load balancing system under JBT policy. Suppose that the threshold $r$ satisfies $\lim_{\epsilon \downarrow 0} r^{(\epsilon)} = \infty$ with growth rate being  $r^{(\epsilon)} \ge K\log(1/\epsilon)$ and $r^{(\epsilon)} = o(1/\epsilon)$, where the constant $K = 2(1+\alpha)/\theta^*$ for any $\alpha>0$ and $\theta^*$ is the constant in Eq. \eqref{eq:collpase}, then JBT policy is heavy-traffic delay optimal in steady state.
% \end{theorem}
% \begin{proof}
% 	See ...
% \end{proof}

\begin{theorem}
\label{thm:log}
	Consider a load balancing system under the JBT policy. Suppose that the threshold $r$ satisfies $r^{(\epsilon)} \ge K\log(1/\epsilon)$ and $r^{(\epsilon)} = o(1/\epsilon)$, where the constant $K = 2(1+\alpha)/\theta^*$ for any $\alpha>0$ and $\theta^*$ is the constant in Eq. \eqref{eq:collpase}, then JBT is heavy-traffic delay optimal in steady state.
\end{theorem}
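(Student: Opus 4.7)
The plan is to reduce heavy-traffic delay optimality to a product-moment condition and then attack it through a new exponential state-space collapse. By the equivalent characterization cited in the necessary-condition discussion (the result of~\cite{zhou2018flexible} following Eq. \eqref{eq:necessary_main}), JBT is heavy-traffic delay optimal in steady state if and only if
\begin{equation*}
\lim_{\epsilon \downarrow 0} \ex{\normss{\overline{\Q}^{(\epsilon)}(t+1)}_1 \normss{\overline{\UU}^{(\epsilon)}(t)}_1} = 0,
\end{equation*}
so the task reduces to verifying this limit under $r^{(\epsilon)} \ge K\log(1/\epsilon)$ and $r^{(\epsilon)} = o(1/\epsilon)$.

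First I would establish the state-space-collapse bound referred to in Eq. \eqref{eq:collpase}:
\begin{equation*}
\ex{\exp(\theta^* W(\overline{\Q}^{(\epsilon)}))} \le C
\end{equation*}
for constants $\theta^*, C > 0$ independent of $\epsilon$, where $W(\Q)$ is an imbalance functional such as $\max_n Q_n - \min_n Q_n$ (or a closely related quantity). The intuition is that whenever $\overline{\Q}^{(\epsilon)} \notin \mathcal{R}^{(r)}$ (Eq. \eqref{eq:def_R}) at least one queue lies below $r$ and another lies above, so JBT sends \emph{every} arrival to a sub-threshold queue; this produces a strictly negative one-step drift in $W$ while upward jumps stay bounded by $A_{max}+S_{max}$. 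The standard Foster--Lyapunov machinery for exponential moments (Hajek-style bounds) then converts the drift into the uniform exponential tail above.

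Next I would decompose the target expectation using the indicator of $\{W(\overline{\Q}(t+1)) \le r/2\}$. On this event, any coordinate with $\overline{U}_n(t)>0$ forces $\overline{Q}_n(t+1)=0$, hence $\max_n \overline{Q}_n(t+1) \le r/2$ and $\normss{\overline{\Q}(t+1)}_1 \le Nr/2$; combined with the steady-state flow-conservation identity $\ex{\normss{\overline{\UU}^{(\epsilon)}}_1} = \epsilon$, this part contributes $O(r\epsilon)$, which vanishes by $r = o(1/\epsilon)$ (and is in fact $O(\epsilon\log(1/\epsilon))$ under the logarithmic choice). On the complementary tail event $\{W(\overline{\Q})>r/2\}$, Markov's inequality applied to the collapse bound yields $\Pr(W>r/2) \le C e^{-\theta^* r/2} \le C\epsilon^{1+\alpha}$ by the choice $K = 2(1+\alpha)/\theta^*$. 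A H\"older split with conjugate exponents $p,q$ (taking $q < 1+\alpha$, so $p$ is correspondingly large), together with the polynomial moment bound $\ex{\normss{\overline{\Q}}_1^p} = O(\epsilon^{-p})$ from Lemma \ref{thm:throughput} and the crude bound $\normss{\overline{\UU}^{(\epsilon)}}_1 \le N S_{max}$, produces a contribution of order $\epsilon^{(1+\alpha)/q-1}$, which also vanishes. Summing the two pieces closes the argument.

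The main obstacle will be Step 1 -- establishing the exponential state-space collapse with an explicit, $\epsilon$-independent constant $\theta^*$. Unlike the conventional heavy-traffic analysis in which $\normss{\Q_\perp}^2$ enjoys a negative drift from a JSQ/MaxWeight-like projection structure, here the restoring force comes solely from JBT's preference for sub-threshold queues, which is active \emph{only} on the complement of $\mathcal{R}^{(r)}$; inside $\mathcal{R}^{(r)}$ the policy is (proportionally) random, so the Lyapunov drift argument must accommodate an ``indifference'' region and still extract a strict negative drift on excursions outside it. Engineering a Lyapunov function tailored to this threshold geometry, so that the resulting exponent $\theta^*$ is independent of $\epsilon$, is the ``new type of state-space collapse'' alluded to in the introduction. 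Once it is in hand, the remaining H\"older--Markov estimates are largely routine and are exactly what dictate the logarithmic scaling $K = 2(1+\alpha)/\theta^*$.
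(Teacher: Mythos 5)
Your high-level strategy is the same as the paper's: reduce heavy-traffic delay optimality to the product-moment condition $\lim_{\epsilon\downarrow 0}\mathbb{E}\bigl[\|\overline{\Q}^{(\epsilon)}(t+1)\|_1\,\|\overline{\UU}^{(\epsilon)}(t)\|_1\bigr]=0$, split the expectation by whether the relevant deviation is $\lesssim r$ or $\gtrsim r$, bound the small piece by $O(r\epsilon)$ via $\mathbb{E}\|\overline{\UU}\|_1=\epsilon$ and $r=o(1/\epsilon)$, and bound the tail piece by Chernoff plus polynomial moments to get $\epsilon^{\alpha}\to 0$ once $K=2(1+\alpha)/\theta^*$. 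That skeleton is right.

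The genuine gap is the choice of the collapse functional. You propose $W(\Q)=\max_n Q_n-\min_n Q_n$ (or ``a closely related quantity''), but this particular quantity does \emph{not} admit a uniform-in-$\epsilon$ exponential moment bound under JBT, so the Chernoff step would fail. Under JBT the state collapses to $\mathcal{R}^{(r)}=\mathcal{R}_l^{(r)}\cup\mathcal{R}_u^{(r)}$, not to a neighborhood of the diagonal. Whenever all queues are above $r$ (i.e., $\Q\in\mathcal{R}_u^{(r)}$, which is where the system spends almost all its time when $r=\Theta(\log(1/\epsilon))\ll 1/\epsilon$), JBT reverts to proportionally random routing and exerts no restoring force on the spread of the queues; the imbalance among busy servers grows to order $1/\epsilon$, just as under pure random routing. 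In other words the drift of $W$ is not negative inside $\mathcal{R}_u^{(r)}$, and $\mathbb{E}[W(\overline{\Q}^{(\epsilon)})]$ diverges like $1/\epsilon$. So the bound $\mathbb{E}[e^{\theta^* W}]\le C$ with $\epsilon$-independent constants cannot hold, and Proposition~\ref{thm:collapse} would be false if stated for $W$.

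The functional the paper uses is $V_\perp(\Q)=d_{\mathcal{R}^{(r)}}(\Q)=\min\bigl(d_{\mathcal{R}_l^{(r)}}(\Q),d_{\mathcal{R}_u^{(r)}}(\Q)\bigr)$: it penalizes only configurations that \emph{straddle} the threshold (some queue below $r$ while some queue above), which is precisely where JBT's preference for sub-threshold queues produces a strictly negative drift (Claims~\ref{claim_1}--\ref{claim_3}). Because $\mathcal{R}^{(r)}$ is nonconvex, the drift analysis is done separately on the two convex pieces after shifting by $\mathbf{r}=r\mathbf{1}$ and taking the minimum. Consequently your small/tail split must also be re-keyed: with $Q_i^+=0$ and $\max_j Q_j^+> (\sqrt{N-1}+1)r$, one shows $d_{\mathcal{R}_l^{(r)}}(\Q^+)>r\sqrt{N-1}$ and $r\le d_{\mathcal{R}_u^{(r)}}(\Q^+)\le r\sqrt{N-1}$, hence $d_{\mathcal{R}^{(r)}}(\Q^+)\ge r$, which is the event fed into the Chernoff bound; and on the complement $\|\Q^+\|_1\le (N-1)(\sqrt{N-1}+1)r$ gives the $O(r\epsilon)$ term. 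Your observation that $U_n>0\Rightarrow Q_n(t+1)=0$ is used, but it must be combined with a geometric fact about $d_{\mathcal{R}^{(r)}}$, not with a max$-$min bound. Once you replace $W$ with $d_{\mathcal{R}^{(r)}}$ and redo the drift argument on $V_{\perp l}$ and $V_{\perp u}$ separately, your H\"older/Markov endgame (the paper uses the $p=q=2$ Cauchy--Schwarz special case) goes through as you describe.
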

\begin{proof}
	See Section \ref{sec:proof_theorem_log}
\end{proof}

The main contributions of this result can be summarized as follows. First, it directly resolves and generalizes a conjecture in~\cite{kelly1993dynamic}. More precisely, the authors in~\cite{kelly1993dynamic} consider a two-server system with Poisson arrivals and exponential service under a threshold policy that has the same implementation as JBT, and conjecture that as long as the threshold is greater than a specified constant times $\log(1/\epsilon)$, the heavy-traffic asymptotic optimality of the threshold routing strategy holds. Thus, our result resolves this conjecture and also generalizes it to any finite number of servers case with general arrival and service distributions.  More importantly, the asymptotic optimality defined in~\cite{kelly1993dynamic} holds only for a finite time interval since the convergence to steady-state distribution is not touched. In contrast, our result directly gives the steady-state characterization of the delay optimality in heavy-traffic of the JBT policy.

The key step in establishing the sufficient condition in Theorem \ref{thm:log} is the notion of state-space collapse. In words, it says that in heavy traffic the system state under the JBT policy would concentrate around the region $\mathcal{R}^{(r)}$ as defined Eq. \eqref{eq:def_R}.
% In this paper, by the virtue of JBT policy, we are interested in following region in which $r$ is the threshold in JBT.
% \begin{align*}
% 	\mathcal{R}^{(r)} &= \mathcal{R}_{l}^{(r)} \cup \mathcal{R}_u^{(r)},
% \end{align*}
% where $r \ge 1$ and
% \begin{align*}
% &\mathcal{R}_{l}^{(r)} \triangleq \left\{ \mathbf{x} \in \mathbb{R}^N_+ : x_n \le r \text{ for all } n \in \mathcal{N} \right\} \\
% &\mathcal{R}_u^{(r)} \triangleq \left\{ \mathbf{x} \in \mathbb{R}^N_+ : x_n \ge r  \text{ for all } n \in \mathcal{N} \right\}.
% \end{align*}
% Moreover, let us first define the distance of a point $\mathbf{x}$ to a set $\mathcal{A}$ in $\mathbb{R}^N$ as follows.
% \begin{align*}
% 	d_{\mathcal{A}}(\mathbf{x})\triangleq \inf_{\mathbf{y} \in \mathcal{A}} \{ \norm{\mathbf{x} - \mathbf{y}}\}.
% \end{align*}
% In this paper, by the virtue of JBT policy, we are interested in following region in which $r$ is the threshold in JBT.
% \begin{align*}
% 	\mathcal{R}^{(r)} &= \mathcal{R}_{l}^{(r)} \cup \mathcal{R}_u^{(r)}
% \end{align*}
% where $r \ge 1$ and
% \begin{align*}
% \mathcal{R}_{l}^{(r)} \triangleq \left\{ \mathbf{x} \in \mathbb{R}^N_+ : x_n \le r \text{ for all } n \in \mathcal{N} \right\} \\
% \mathcal{R}_u^{(r)} \triangleq \left\{ \mathbf{x} \in \mathbb{R}^N_+ : x_n \ge r  \text{ for all } n \in \mathcal{N} \right\}
% \end{align*}
To that end, we need the following property of the distance to the region $\mathcal{R}^{(r)}$.
The distance of a point $\mathbf{x}$ to the region $\mathcal{R}^{(r)}$ is related to the distances to the regions $\mathcal{R}_{l}^{(r)}$ and $\mathcal{R}_u^{(r)}$ as follows.
\begin{align}
\label{eq:distoregion}
	d_{\mathcal{R}^{(r)}}(\mathbf{x}) = \min\left(d_{\mathcal{R}_l^{(r)}}(\mathbf{x}), d_{\mathcal{R}_u^{(r)}}(\mathbf{x})\right),
\end{align}
where the distance of a point $\mathbf{x}$ to a set $\mathcal{A}$ in $\mathbb{R}^N$ is defined as
\begin{align*}
	d_{\mathcal{A}}(\mathbf{x})\triangleq \inf_{\mathbf{y} \in \mathcal{A}} \left\{ \norm{\mathbf{x} - \mathbf{y}}\right\}.
\end{align*}
This equality \eqref{eq:distoregion} can be established by contradiction. Suppose that 
\begin{align*}
	\min\left(d_{\mathcal{R}_l^{(r)}}(\mathbf{x}), d_{\mathcal{R}_u^{(r)}}(\mathbf{x})\right) = d_{\mathcal{R}^{(r)}}(\mathbf{x}) + \alpha
\end{align*}
for some $\alpha > 0$, then there exists a $\mathbf{y}^* \in \mathcal{R}^{(r)}$ such that 
\begin{align*}
	d_{\mathcal{R}^{(r)}}(\mathbf{x}) \le \norm{\mathbf{x} - \mathbf{y}^*} < \min\left(d_{\mathcal{R}_l^{(r)}}(\mathbf{x}), d_{\mathcal{R}_u^{(r)}}(\mathbf{x})\right).
\end{align*}
However, since $\mathbf{y}^* \in \mathcal{R}^{(r)} = \mathcal{R}_{l}^{(r)} \cup  \mathcal{R}_u^{(r)}$, this leads to a contradiction to the right-hand side of the inequality above.

% Given a queue-length vector $\Q$, we can decompose it with respect to  $\mathcal{R}_{l}^{(r)}$ and $\mathcal{R}_u^{(r)}$ as follows, respectively.
% \begin{align*}
% 	&\Q =  \Q_{\parallel \mathcal{R}_{l}^{(r)}} + \Q_{\perp \mathcal{R}_{l}^{(r)}}\\
% 	&\Q =  \Q_{\parallel \mathcal{R}_{u}^{(r)}} + \Q_{\perp \mathcal{R}_{u}^{(r)}},
% \end{align*}
% where $\Q_{\parallel \mathcal{R}_{l}^{(r)}}$ and $\Q_{\parallel \mathcal{R}_{u}^{(r)}}$ are the projections onto $\mathcal{R}_{l}^{(r)}$ and $\mathcal{R}_u^{(r)}$, and $\Q_{\perp \mathcal{R}_{l}^{(l)}}$ and $\Q_{\perp \mathcal{R}_{u}^{(r)}}$ are the remainders, respectively. Note that the two decompositions are well defined and unique because $\mathcal{R}_{l}^{(r)}$ and $\mathcal{R}_u^{(r)}$ are both closed and convex. Based on the decompositions, for a given $\Q$, Eq. \eqref{eq:distoregion} can be rewritten as follows
% \begin{align*}
% 	d_{\mathcal{R}^{(r)}}(\mathbf{Q}) = \min\left( \big\lVert\Q_{\perp \mathcal{R}_{l}^{(r)}}\big\rVert, \big\lVert\Q_{\perp \mathcal{R}_{u}^{(r)}}\big\rVert \right).
% \end{align*}
We say that the system state concentrates around the region $\mathcal{R}^{(r)}$ if all the moments of the distance $d_{\mathcal{R}^{(r)}}(\overline{\Q})$ are upper bounded by constants. Formally, we have the following definition.
\begin{definition}[State-space collapse to $\mathcal{R}^{(r)}$]
	Suppose that the system process converges in distribution to a steady-state random vector $\overline{\Q}^{(\epsilon)}$. Then, we say that the state-space of a load balancing system collapses to the region $\mathcal{R}^{(r)}$ if there exist some positive constants $\epsilon_0$, $\theta^*$ and $C^*$ such that for all $\epsilon \in (0, \epsilon_0)$
	\begin{align}
	\label{eq:collpase}
		\ex{ e^{\theta^* d_{\mathcal{R}^{(r)}}\big(\overline{\Q}^{(\epsilon)}\big)} } \le C^*,
	\end{align}
	where both $\theta^*$ and $C^*$ are independent of $\epsilon$.
\end{definition}
Note that this notion of state-space collapse is different from previous works, as will be explained later.
 For any constant threshold $r$, Eq. \eqref{eq:collpase} trivially holds since the distance to the region $\mathcal{R}^{(r)}$ is always bounded by a constant. Thus, in the following we only consider the interesting case when $r$ grows to infinity, which is also required by the necessary condition in Theorem \ref{thm:constant}. In this case, we have the following result regarding state space collapse of the JBT policy, which plays a key role in the proof of Theorem \ref{thm:log}. 
\begin{proposition}
\label{thm:collapse}
	Consider a load balancing system under the JBT policy. Suppose that the threshold satisfies $\lim_{\epsilon \downarrow 0} r^{(\epsilon)} = \infty$, then the system state-space collapses to the region $\mathcal{R}^{(r)}$. 
\end{proposition}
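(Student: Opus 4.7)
The plan is to apply a Lyapunov drift argument to the natural candidate $V(\Q)\triangleq d_{\mathcal{R}^{(r)}}(\Q)$, verifying the two ingredients required by Hajek's exponential tail bound (used in the same spirit as in~\cite{eryilmaz2012asymptotically,maguluri2014heavy}): bounded one-step increments, and a strictly negative expected drift whenever $V(\Q(t))$ exceeds some constant $B$. Once both conditions hold with constants independent of $\epsilon$, Hajek's lemma delivers $\theta^*,C^*>0$ with $\ex{e^{\theta^* V(\overline{\Q}^{(\epsilon)})}}\le C^*$ uniformly in $\epsilon$, which is exactly \eqref{eq:collpase}.

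The bounded-increments step is routine. Since the distance to any set is $1$-Lipschitz in the Euclidean norm and $\norm{\Q(t+1)-\Q(t)}\le A_{\max}+\sqrt{N}\,S_{\max}$ deterministically, one obtains $|V(\Q(t+1))-V(\Q(t))|\le D$ for a constant $D$ depending only on $A_{\max},S_{\max},N$ and not on $\epsilon$ or $r$.

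The heart of the plan is the negative-drift step. When $V(\Q(t))>0$, we have $\Q(t)\notin \mathcal{R}^{(r)}$, so some queues lie strictly above $r$ and some strictly below. I would split into two sub-cases according to which minimum in \eqref{eq:distoregion} is attained. In sub-case (i), $V=d_{\mathcal{R}_l}$ with $V^2=\sum_{n:Q_n(t)>r}(Q_n(t)-r)^2$; the obstruction is the set $\mathcal{H}$ of above-threshold queues, which under JBT receive no arrivals whenever the memory is non-empty (all in-memory IDs correspond to below-threshold servers), so they drain at their mean service rate and push $V^2$ down by an amount of order $V\,\mu_{\min}/\sqrt{N}$, equivalently a drift on $V$ of order $-\mu_{\min}/\sqrt{N}$. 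In sub-case (ii), $V=d_{\mathcal{R}_u}$ with $V^2=\sum_{n:Q_n(t)<r}(r-Q_n(t))^2$; the entire exogenous arrival stream is concentrated on below-threshold queues in memory, so these queues rise in expectation by an amount bounded away from zero, giving a uniform $O(1)$ negative drift of $V$. Crucially, neither mechanism depends on $\epsilon$, so the drift constant $\eta$ can be chosen independent of $\epsilon$.

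The main obstacle I foresee lies in sub-case (i), where the memory state $m(t)$ can be momentarily empty even while $\Q(t)\notin \mathcal{R}^{(r)}$, because IDs are added only upon first crossings of $r$ from above and each arrival consumes one ID. During such an empty-memory interval, arrivals revert to uniform routing, so queues in $\mathcal{H}$ can each receive a $1/N$ fraction of arrivals; in the heterogeneous case this influx on the slowest server could approach its service rate and weaken the one-step drift of $V$. I would handle this via a short multi-step refinement: an empty-memory interval terminates the first time a queue in $\mathcal{H}$ is drained below $r$, which occurs in $O(1)$ expected steps because queues in $\mathcal{H}$ receive service at rates bounded below by $\mu_{\min}>0$ and jumps are bounded, and once re-populated the $O(\mu_{\min})$ drift resumes; a careful multi-step bookkeeping then restores a uniform negative drift of $V$ once $B$ is chosen large enough that the short empty-memory excursions contribute a strictly smaller positive correction. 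Combining this with the $1$-Lipschitz increment bound and invoking Hajek's lemma completes the plan.
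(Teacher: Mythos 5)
Your plan is essentially the paper's: apply Lemma~\ref{lem:basis} (the Hajek-type exponential tail bound) to the Lyapunov function $V_\perp(Z) = d_{\mathcal{R}^{(r)}}(\Q)$, verify condition (C2) via the $1$-Lipschitz property of the distance and bounded one-step queue increments, and verify condition (C1) by splitting on which of $d_{\mathcal{R}_l^{(r)}}$ and $d_{\mathcal{R}_u^{(r)}}$ attains the minimum. Your two sub-case heuristics are the right intuition: in the paper these are formalized via the shift $\Q' = \Q - r\mathbf{1}$, the Pythagorean decomposition of the drift of $\normss{\Q'}^2$ into parallel and perpendicular parts (Claim~\ref{claim_1}), and a permutation-based argument (Claims~\ref{claim_2},~\ref{claim_3}) that extracts an $\epsilon$-free drift constant $\delta = \mu_{\min}\mu_{\min,2}/(\mu_\Sigma(\mu_\Sigma - \mu_{\min}))$. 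Left as sketched, your sub-case computations would still need to handle the boundary terms from queues crossing $r$ (which is exactly what the projection-based decomposition does cleanly), so these would need to be made rigorous in that spirit.

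The ``main obstacle'' you flag --- that the memory might be momentarily empty while $\Q(t)\notin\mathcal{R}^{(r)}$ --- does not arise in the paper's JBT model, and the multi-step excursion argument you sketch is therefore unnecessary (it also would not plug directly into Lemma~\ref{lem:basis}, which expects a one-step drift condition). The intended mechanics are that a server's ID is in the memory precisely when its queue length is below $r$: a server reports again at the end of any slot in which its queue is below $r$ and its ID is not in memory, so the invariant $m(t)=\{n: Q_n(t)<r\}$ holds at every decision epoch. That invariant is exactly what the proof of Claim~\ref{claim_2}(b) uses when it infers $1\le |m(t)|\le N-1$ from $\Q(t)\notin\mathcal{R}^{(r)}$, which is the step that delivers a drift constant independent of $\epsilon$. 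Your reading --- IDs added only on first downward crossings, so each arrival can permanently deplete the list --- is a defensible literal reading of the policy description, but it is not what the analysis assumes; under that reading the single-step drift argument would indeed fail, and Claim~\ref{claim_2}(b) as written would be false. With the correct memory invariant, a single-step negative drift of magnitude at least $\mu_\Sigma\delta/(4N)$ holds uniformly in $\epsilon$ whenever $V_\perp$ exceeds a fixed constant, and Lemma~\ref{lem:basis} then yields the uniform exponential bound \eqref{eq:collpase} directly.
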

\begin{proof}
	See Section \ref{sec:proof_thm_collapse}
\end{proof}

\begin{remark}
	It should be noted that besides being a key step in proving the sufficient conditions in Theorem \ref{thm:log}, Proposition \ref{thm:collapse} has its own contributions. (i) First, the region of state-space collapse in this paper, i.e., $\mathcal{R}^{(r)}$ is not a single dimensional line as in~\cite{eryilmaz2012asymptotically,maguluri2014heavy,wang2016maptask,xie2015priority,xie2016scheduling,zhou2017designing}, nor a multi-dimensional convex cone as in~\cite{maguluri2016heavy,maguluri2018optimal,zhou2018flexible,wang2018heavy}. This not only brings new challenges in proving state-space collapse itself, but also requires new methods to relate the collapse result to heavy-traffic delay optimality. More specifically, on the one hand, in order to prove state-space collapse result, we need to handle the non-convexity of $\mathcal{R}^{(r)}$ by choosing the minimum of two distances as the Lyapunov function. The techniques suggested in~\cite{zhou2018flexible} to handle the non-convex region cannot apply here since the region $\mathcal{R}^{(r)}$ cannot be covered by the cone define in~\cite{zhou2018flexible}. On the other hand, in order to utilize the state-space collapse result to conclude heavy-traffic delay optimality, the conventional decompositions of parallel and perpendicular components of the queue length vector $\Q$ would not work. Instead, we need to carefully divide the system state and then apply Chernoff bound on the random variable $d_{\mathcal{R}^{(r)}}\big(\overline{\Q}^{(\epsilon)}\big)$, which is possible by the state-space collapse result in Eq. \eqref{eq:collpase}. (ii) Second, the upper bound result in Eq. \eqref{eq:collpase} holds even the system is not at the heavy-traffic limit, and hence it is of independent interest for analyzing the system performance in the pre-limit regime, especially when combined with optimization techniques.
\end{remark}

\begin{figure}[t]
	\graphicspath{{./Figures/}}
	\centering
	\includegraphics[width=3.5in]{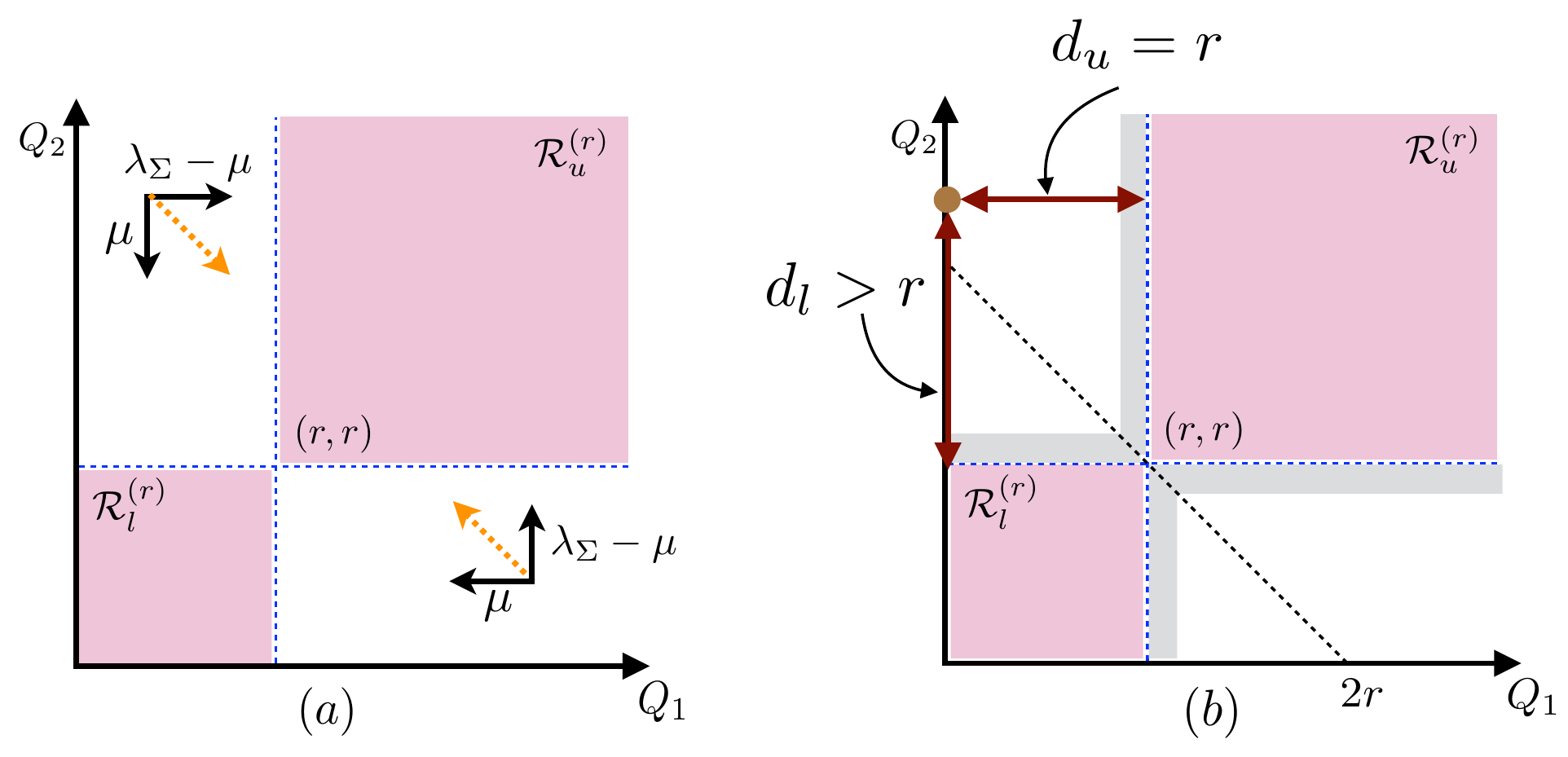}
	\caption{Geometric illustrations of the sufficient condition.}\label{fig:sufficient}
	\vspace{-4mm}
\end{figure}

Now, we turn to provide the high-level intuitions on Proposition \ref{thm:collapse} and Theorem \ref{thm:log} with the help of Fig. \ref{fig:sufficient}. This will facilitate the understanding of the results as well as their proofs.

To start with, note that by virtue of the JBT policy, when the queue-length state $\Q$ is outside the region $\mathcal{R}^{(r)}$, there always exists a positive drift towards the region $\mathcal{R}^{(r)}$. This is because in this case there exists a positive drift towards the lower region $\mathcal{R}_{l}^{(r)}$ and a positive drift towards the upper region $\mathcal{R}_{u}^{(r)}$, respectively (see Fig. 2(a) for an illustration).  
This provides the key intuition as to why the system state would concentrate around the region $\mathcal{R}^{(r)}$ since suppose there is no drift (e.g., under random routing) the expected distance to the region $\mathcal{R}^{(r)}$ would go to infinity as $r^{(\epsilon)}$ goes to infinity (assuming that the growth rate of $r^{(\epsilon)}$ is not too fast). In contrast, under the JBT policy, the distance remains constant (as shown by the gray color in Fig. \ref{fig:sufficient}(b)). This is the reason why we call it a state-space collapse result, which is different from much of previous works where the system state collapses to a lower dimensional space (e.g., a line or a convex cone) while our state-space collapse region $\mathcal{R}^{(r)}$ is of the same dimension as the original queue-length state vector. Hence, we need to develop new methods to apply this new type of state-space collapse result to achieve heavy-traffic delay optimality of the JBT policy, as in Theorem \ref{thm:log}. 

To this end, we will utilize the sufficient (and necessary) condition in Eq. \eqref{eq:necessary_main} again.
% first utilize the following result. That is, the necessary and sufficient condition for JBT policy to be heavy-traffic delay optimal is given by
% % \begin{align}
% \label{eq:necessary_main}
% 			\lim_{\epsilon \downarrow 0}\ex{\big\lVert\overline{\Q}^{(\epsilon)}(t+1) \big\rVert_1 \big\lVert\overline{\UU}^{(\epsilon)}(t) \big\rVert_1} = 0.
% \end{align}
% This is a direct application of the results in [].
As discussed before, it basically requires us to guarantee that no server is idling while other servers are busy under high loads. 
To achieve this, a logarithmic growth rate as in Theorem \ref{thm:log} is sufficient. For an illustration of the main ideas behind the proof, let us consider a simple two-server case. In this case, Eq. \eqref{eq:necessary_main} reduces to 
\begin{align}
\label{eq:two_server}
	\lim_{\epsilon \downarrow 0}\ex{\overline{Q}_1^{(\epsilon)}(t+1) \overline{U}_2^{(\epsilon)} + \overline{Q}_2^{(\epsilon)}(t+1) \overline{U}_1^{(\epsilon)} } = 0.
\end{align}
Take the second term above for example, it can be rewritten as the summation of the following terms (for simplicity we omit the superscript $^{(\epsilon)}$)
\begin{align}
	&\overline{Q}_2(t+1) \overline{U}_1\mathcal{I}\left( \overline{Q}_2(t+1) \le 2r, \overline{Q}_1(t+1) = 0\right)\label{eq:first_term} \\
	&\overline{Q}_2(t+1) \overline{U}_1\mathcal{I}\left( \overline{Q}_2(t+1) > 2r, \overline{Q}_1(t+1) = 0\right)\label{eq:second_term},
\end{align}
where we use the fact that $Q_n(t+1)U_n(t) = 0$ again.
The expectation of Eq. \eqref{eq:first_term} can be upper bounded by $2 r^{(\epsilon)}\epsilon$ since $\ex{\overline{U}_1} \le \epsilon$. For the expectation of Eq. \eqref{eq:second_term}, we first apply Cauchy-Schwartz inequality and hence obtain its upper bound as
\begin{align*}
	C \frac{1}{\epsilon^2}\mathbb{P}\left(\overline{Q}_2(t+1) > 2r, \overline{Q}_1(t+1) = 0\right),
\end{align*}
where $C$ is a constant independent of $\epsilon$. Now, we can apply the state-space collapse result (i.e., Eq. \eqref{eq:collpase}) combined with Chernoff bound to show that the probability that one queue is empty and another queue length is larger than $2r$ has an exponential decay rate. In particular, we have 
\begin{align*}
	\mathbb{P}\left(\overline{Q}_2(t+1) > 2r, \overline{Q}_1(t+1) = 0\right) \lep{a} \mathbb{P}\left(d_{\mathcal{R}^{(r)}}\big(\overline{\Q}^{(\epsilon)}\big) \ge r \right) \lep{b} \frac{C^*}{e^{\theta^* r}},
\end{align*}
where (a) holds since in this case the distance to the region $\mathcal{R}^{(r)}$ is $r$ (see Fig. 2(b) for an illustration); (b) follows directly from state-space collapse result and Chernoff bound.
Therefore, combining the expectations of Eqs. \eqref{eq:first_term} and \eqref{eq:second_term}, yields
\begin{align*}
	\ex{\overline{Q}_2^{(\epsilon)}(t+1) \overline{U}_1^{(\epsilon)}} \le 2 r^{(\epsilon)}\epsilon + C_1\frac{1}{\epsilon^2}\frac{1}{e^{\theta^* r^{(\epsilon)}}},
\end{align*}
which approaches zero whenever $r^{(\epsilon)} = o(\frac{1}{\epsilon})$ and $r^{(\epsilon)} \ge K \log(1/\epsilon)$ where $K = 2(1+\alpha)/\theta^*$ for any $\alpha>0$. By the same arguments, we can establish the same result for the expectation of the first term in Eq. \eqref{eq:two_server}. Therefore, we have reached the sufficient condition for heavy-traffic delay optimality in Theorem \ref{thm:log}.
% \begin{remark}
% 	The upper bound on the growth rate of $r^{(\epsilon)}$, i.e., $r^{(\epsilon)} = o(\frac{1}{\epsilon})$, also enjoys an intuitive understanding. This 
% \end{remark}

% \subsection{Necessary condition}
% \begin{theorem}
% \label{thm:constant}
% Consider a  load balancing system with homogeneous servers under JBT policy. Suppose the threshold $r$ is any constant in $[1,\infty)$, then we have 
% \begin{align}
% \label{eq:liminf}
% 	 \liminf_{\epsilon \downarrow 0} \epsilon \ex{\sum_{n=1}^N \overline{Q}_n^{(\epsilon)} }  > \frac{\zeta}{2} 
% \end{align}
% and 
% \begin{align}
% \label{eq:limsup}
% 	\limsup_{\epsilon \downarrow 0} \epsilon \ex{\sum_{n=1}^N \overline{Q}_n^{(\epsilon)} } < \lim_{\epsilon \downarrow 0} \epsilon \ex{\sum_{n=1}^N \overline{Q}_{n,\text{Rand} }^{(\epsilon)} }
% \end{align}
% where $\overline{\Q}_{\text{Rand}}^{(\epsilon)}$ is the steady-state vector under Random routing policy.
% \end{theorem}

\section{Generalizations}
\label{sec:general}
For the illustration of the key ideas, the main results in the last section are obtained under the assumptions that both arrival and service processes have finite support. However, it is worth pointing out that the same results still hold (with only a change in constants) when the support is infinite. More specifically, we need the following weak condition on arrival and service processes, which requires that the tails of both arrival and service processes have an exponential decay.

\begin{condition}[Weaker condition on arrival and service]
	The i.i.d arrival process $A_{\Sigma}(t)$ and service process $S_{n}(t)$ satisfy 
	\begin{align*}
		\ex{e^{\theta_1 A_{\Sigma}(t)} } \le D_1 \text{ and } \ex{e^{\theta_2 S_{n}(t)} } \le D_2,
	\end{align*}
	for each $n$ where the constants $\theta_1 > 0$, $\theta_2 > 0$, $D_1 <\infty$ and $D_2 <\infty$ are all independent of $\epsilon$.
\end{condition}

% \begin{enumerate}
% 	\item Condition A The i.i.d arrival process $A_{\Sigma}(t)$ and service process $S_{n}(t)$ satisfy 
% 	\begin{align*}
% 		\ex{e^{\theta_1 A_{\Sigma}(t)} } \le D_1 \text{ and } \ex{e^{\theta_2 S_{n}(t)} } \le D_2,
% 	\end{align*}
% 	for each $n$ where the constants $\theta_1 > 0$, $\theta_2 > 0$, $D_1 <\infty$ and $D_2 <\infty$ are all independent of $\epsilon$.
% \end{enumerate}

% \begin{definition}[Weak condition on arrival and service]
% 	The i.i.d arrival process $A_{\Sigma}(t)$ and service process $S_{n}(t)$ satisfy 
% 	\begin{align*}
% 		\ex{e^{\theta_1 A_{\Sigma}(t)} } \le D_1 \text{ and } \ex{e^{\theta_2 S_{n}(t)} } \le D_2,
% 	\end{align*}
% 	for each $n$ where the constants $\theta_1 > 0$, $\theta_2 > 0$, $D_1 <\infty$ and $D_2 <\infty$ are all independent of $\epsilon$.
% \end{definition}

In order to obtain the same main results under the weaker condition above, we should make some mild changes in our proofs. In the following, we will highlight the key steps involved in this process.

(i) First, note that in order to establish condition (C1) in Lemma \ref{lem:basis}, we would use the following upper bound in our proofs based on the finite support assumptions.
\begin{align*}
	\ex {\norm{\A(t_0) - \s(t_0)}^2  \mid Z(t_0) } \le L \triangleq N \max(A_{max}, S_{max})^2.
\end{align*}
However, under the weaker Condition A, we can still bound the left-hand side by a constant independent of $\epsilon$. This directly follows from the fact that all the moments of a random variable are finite if its moment generating function is finite in an open interval containing zero. 

(ii) Second, we should now replace condition (C2) in Lemma \ref{lem:basis} with the following weak stochastic domination condition (C2$^\prime$),
\begin{itemize}
	\item (C2$^\prime$) $\left[\Delta V(X) \mid X(t_0) = X\right] \prec W$ for all $t_0$ and $\ex{e^{\theta W}} = D$ is finite for some $\theta > 0$.
\end{itemize}
This condition holds under the weaker Condition A since the arrival and service processes both have an exponentially bounded tail by the finiteness of their moment generating functions. As shown by Theorem 2.3 in~\cite{hajek1982hitting}, the combination of (C1) and (C2$^\prime$) is sufficient to guarantee bounded moments as required in the proof of our main results.

(iii) Third, we now should take a careful treatment of the unused service. For example, the following result plays a key role in establishing the necessary and sufficient condition in Eq. \eqref{eq:necessary_main}
\begin{align*}
	\lim_{\epsilon \downarrow 0}\ex{\norms{\overline{\UU}^{(\epsilon)}}^2_1} = 0.
\end{align*}
Under the assumption of finite support for the service process, the left-hand side can be easily bounded above by $NS_{max}\epsilon$, which approaches zero as $\epsilon \to 0$. Now, under the weak condition, we need to adopt the truncation trick to handle the unbounded service. More specifically, let us consider any $n \in \mathcal{N}$, we have for any $t \ge 0$ and constant $S^{\prime}$
\begin{align*}
	{U}_n^2(t) &\le U_n(t)S_n(t)\\
	& = U_n(t)S_n(t) \mathcal{I}\left(S_n(t) \le S^{\prime} \right) + U_n(t)S_n(t) \mathcal{I}\left(S_n(t) > S^{\prime} \right)\\
	& \le U_n(t) S^{\prime} + S_n^2(t)\mathcal{I}\left(S_n(t) > S^{\prime} \right).
\end{align*}
In steady state, we have 
\begin{align*}
	\ex{\overline{U}_n^2} &\le \ex{\overline{U}_n}S^{\prime} + \ex{S_n^2(\infty)\mathcal{I}\left(S_n(\infty) > S^{\prime} \right)}\\
	&\lep{a} \epsilon S^{\prime} + \ex{S_n^2(0)\mathcal{I}\left(S_n(0) > S^{\prime} \right)}\\
	&\lep{b} \epsilon S^{\prime} + \beta,
\end{align*}
where (a) follows from the fact that $\ex{\norms{\overline{\UU}^{(\epsilon)}}_1} = \epsilon$ and service process is \emph{i.i.d.}; in (b), we choose $S^{\prime}$ such that $\ex{S_n^2(0)\mathcal{I}\left(S_n(0) > S^{\prime} \right)} \le \beta$, which is possible by the exponential decay rate of $S_n(0)$ under the weak condition. Thus, we have 
\begin{align*}
	\lim_{\epsilon \downarrow 0} \ex{\overline{U}_n^2} \le \beta,
\end{align*}
for any $\beta > 0$. Hence, we have $\lim_{\epsilon \downarrow 0} \ex{\overline{U}_n^2} = 0$ for each $n$.

\begin{remark}
The three highlighted key steps could also demonstrate their generalization power in previous works where the Lyapunov drift-based framework is adopted under the assumption of finite supports for the arrival and service processes.
\end{remark}

\section{Proofs}
In this paper, we will adopt the Lyapunov drift-based approach developed in~\cite{eryilmaz2012asymptotically} to derive bounded moments in steady state. In particular, the following lemma, which follows directly from Lemmas 2 and 3 in~\cite{maguluri2016heavy}, will be the main tool in our proofs.
\begin{lemma}
	      \label{lem:basis}
	        For an irreducible aperiodic and positive recurrent Markov chain $\{X(t), t \ge 0\}$ over a countable state space $\mathcal{X}$, which converges in distribution to $\overline{X}$,  and suppose $V: \mathcal{X} \rightarrow \mathbb{R}_{+}$ is a Lyapunov function. We define the drift of $V$ at $X$ as 
	        \[\Delta V(X)\triangleq [V(X(t_0+1)) - V(X(t_0))] \mathcal{I}(X(t_0) = X),\]
	        where $\mathcal{I}(.)$ is the indicator function. Suppose the drift of $V$ satisfies the following conditions:

	        \begin{itemize}
	          \item (C1) There exists an $\eta> 0$ and a $\kappa <  \infty$ such that for any $t_0 = 1,2,\ldots$ and for all $X \in \mathcal{X}$ with $V(X)\ge \kappa$, 
	          \[\mathbb{E}\left[\Delta V(X) \mid X(t_0) = X\right]\le -\eta.\]
	          \item (C2) There exists a constant $D < \infty$ such that for all $X\in \mathcal{X}$,
	          \[\mathbb{P}(|\Delta V(X)| \le D) = 1.\]
	        \end{itemize}

	        Then $\{V(X(t)), t\ge0\}$ converges in distribution to a random variable $\overline{V}$ for which there exists a $\theta^* >0$ and a $C^* < \infty$ such that 
	        \begin{align*}
	        	\ex{e^{\theta^* \overline{V}}} \le C^*,
	        \end{align*}
	        which directly implies that all the moments of $\overline{V}$ exist and are finite.
	        More specifically, we have for any $p = 1,2,\ldots$
	        \begin{equation}
	        \label{eq:upper_siva}
	        	\ex{V(\overline{X})^p} \le (2\kappa)^p + (4D)^p\left(\frac{D+\eta}{\eta} \right)^p p!.
	        \end{equation}
\end{lemma}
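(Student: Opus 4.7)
The plan is to leverage an exponential Lyapunov function to convert the linear negative drift (C1) together with bounded increments (C2) into a geometric contraction, which then yields both the MGF bound and the explicit moments. Concretely, I would work with $W(X) \triangleq e^{\theta V(X)}$ for a parameter $\theta > 0$ to be chosen, and proceed in three stages: (i) establish a one-step geometric drift for $W$ outside the sublevel set $\{V \le \kappa\}$, (ii) pass to the stationary distribution to extract the MGF bound, and (iii) convert that bound into the closed-form moment formula.

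For stage (i), I would Taylor-expand $e^{\theta \Delta V(X)}$ and use $|\Delta V| \le D$ to get
\begin{align*}
\ex{e^{\theta \Delta V(X)} \mid X(t_0) = X} \le 1 + \theta \, \ex{\Delta V(X) \mid X(t_0) = X} + \theta^2 D^2 e^{\theta D}.
\end{align*}
On $\{V(X) \ge \kappa\}$, condition (C1) bounds the linear term by $-\theta \eta$, so the right-hand side is at most $\rho \triangleq 1 - \theta \eta/2 \in (0,1)$ for any sufficiently small $\theta$; a feasible $\theta^*$ depends only on $\eta$ and $D$. Multiplying by $W(X)$ yields $\ex{W(X(t_0+1)) \mid X(t_0) = X} \le \rho \, W(X)$ on $\{V \ge \kappa\}$, while on $\{V < \kappa\}$ condition (C2) forces $W(X(t_0+1)) \le e^{\theta^*(\kappa + D)}$ almost surely.

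For stage (ii), stationarity gives $\ex{W(\overline{X})} = \ex{\ex{W(X(t_0+1)) \mid X(t_0) = \overline{X}}}$. Splitting on whether $V(\overline{X}) \ge \kappa$ and applying the two bounds from stage (i) produces $\ex{W(\overline{X})} \le \rho \, \ex{W(\overline{X})} + e^{\theta^*(\kappa + D)}$, which rearranges to $\ex{e^{\theta^* \overline{V}}} \le e^{\theta^*(\kappa + D)}/(1-\rho) =: C^*$. A technical subtlety I would have to handle here is that $\ex{W(\overline{X})}$ must be known to be finite \emph{before} being cancelled from both sides; the standard remedy is to truncate $V$ at a high level, run the argument on the truncated function, and then let the truncation tend to infinity using monotone or Fatou convergence.

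For stage (iii), I would split $\ex{V(\overline{X})^p} = \ex{\overline{V}^p \mathcal{I}(\overline{V} \le 2\kappa)} + \ex{\overline{V}^p \mathcal{I}(\overline{V} > 2\kappa)}$, bound the first term trivially by $(2\kappa)^p$, and bound the second using the elementary inequality $t^p \le p! \, (\theta^*)^{-p} e^{\theta^* t}$ combined with the MGF bound from stage (ii), calibrating $\theta^*$ so that $(\theta^*)^{-1}$ equals $4D(D+\eta)/\eta$ and thus reproduces the prefactor $(4D)^p((D+\eta)/\eta)^p p!$. The main obstacle I anticipate is not the overall strategy but the \emph{book-keeping of constants}: a single $\theta^*$ must simultaneously give a clean geometric rate $\rho$, a tractable stationary constant $C^*$, and the precise moment prefactor claimed in the lemma, and threading one choice through all three steps without loose slack requires the careful estimates underlying Lemmas~2 and 3 of~\cite{maguluri2016heavy}.
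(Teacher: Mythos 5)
The paper does not actually prove this lemma; it cites Lemmas 2 and 3 of \cite{maguluri2016heavy}, which in turn rest on Hajek's drift theorem. Your overall strategy --- exponential Lyapunov function $W=e^{\theta V}$, geometric contraction outside $\{V\le\kappa\}$, stationarity to extract the MGF bound, and then moments --- is exactly the standard Hajek-style argument that underlies that reference, so you have identified the right route, and stages (i) and (ii) of your sketch are sound (including your flag that $\ex{W(\overline X)}<\infty$ must be secured by a truncation argument before you cancel it from both sides of $\ex{W(\overline X)}\le\rho\ex{W(\overline X)}+e^{\theta^*(\kappa+D)}$).

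There is, however, a concrete gap in stage (iii). You bound
\[
\ex{\overline V^p\,\mathcal{I}(\overline V>2\kappa)}\le p!\,(\theta^*)^{-p}\,\ex{e^{\theta^*\overline V}}\le p!\,(\theta^*)^{-p}\,C^*,
\]
and then propose to calibrate $(\theta^*)^{-1}=4D(D+\eta)/\eta$ so that $(\theta^*)^{-p}=(4D)^p\bigl(\tfrac{D+\eta}{\eta}\bigr)^p$. But matching the target $(4D)^p\bigl(\tfrac{D+\eta}{\eta}\bigr)^p p!$ then forces $C^*\le 1$, which contradicts the $C^*=e^{\theta^*(\kappa+D)}/(1-\rho)>1$ your own stage (ii) produces; moreover your $\theta^*$ was already constrained in stage (i) to be small enough that $\theta^2 D^2 e^{\theta D}\le\theta\eta/2$, so you are not free to pick it equal to $\eta/(4D(D+\eta))$ independently. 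You never actually use the indicator $\mathcal{I}(\overline V>2\kappa)$ in your second term, so the $(2\kappa)^p$ split buys you nothing. The way the cited reference obtains the precise form \eqref{eq:upper_siva} is not through the elementary inequality $t^p\le p!\,\theta^{-p}e^{\theta t}$ but through the geometric tail bound
\[
\mathbb{P}\bigl(V(\overline X)>\kappa+2Dm\bigr)\le\Bigl(\frac{D}{D+\eta}\Bigr)^{m+1},\qquad m=0,1,2,\ldots,
\]
which follows from the same drift argument, and then summing $\ex{\overline V^p}\le\sum_{m\ge0}(\kappa+2D(m+1))^p\,\mathbb{P}(\overline V>\kappa+2Dm)$; this is where the $(D+\eta)/\eta=1/(1-\rho)$ factor and the $p!$ jointly emerge. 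So your stage (iii) needs to be replaced by (or routed through) the tail-bound computation rather than the MGF-to-moment inequality with a single free $\theta^*$.
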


We would also utilize the following useful result in our proofs.
	\begin{lemma}
	\label{lem:necessary}
		For the JBT policy with threshold $r \ge 1$, it is heavy-traffic delay optimal if and only if 
		\begin{align}
		\label{eq:necessary}
			\lim_{\epsilon \downarrow 0}\ex{\big\lVert\overline{\Q}^{(\epsilon)}(t+1) \big\rVert_1 \big\lVert\overline{\UU}^{(\epsilon)}(t) \big\rVert_1} = 0.
		\end{align}
	\end{lemma}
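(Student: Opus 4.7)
The plan is to establish an exact steady-state identity that decomposes $\epsilon\,\ex{\normss{\overline{\Q}^{(\epsilon)}}_1}$ into $\zeta/2$ plus a single ``error'' term proportional to $\ex{\normss{\overline{\Q}^{(\epsilon)}(t+1)}_1 \normss{\overline{\UU}^{(\epsilon)}(t)}_1}$ (up to two other terms that visibly vanish as $\epsilon\downarrow 0$), and then read the iff off directly. The Lyapunov function I would use is $V(\Q)=\normss{\Q}_1^2$. By the throughput-optimality result of Lemma~\ref{thm:throughput}, $\ex{V(\overline{\Q}^{(\epsilon)})}<\infty$ for every $\epsilon>0$, which licenses setting $\ex{V(\overline{\Q}(t+1))}=\ex{V(\overline{\Q}(t))}$ in steady state. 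Starting from the aggregated queue-dynamics $\normss{\Q(t+1)}_1=\normss{\Q(t)}_1+A_\Sigma(t)-S_\Sigma(t)+\normss{\UU(t)}_1$, squaring, taking expectations, and using independence of $(A_\Sigma,S_\Sigma)$ from $\Q(t)$ together with $\ex{A_\Sigma-S_\Sigma}=-\epsilon$ and $\mathrm{Var}(A_\Sigma-S_\Sigma)=\zeta$, the zero-drift condition yields
\begin{equation*}
    2\epsilon\,\ex{\normss{\overline{\Q}^{(\epsilon)}}_1} \;=\; 2\,\ex{\normss{\overline{\Q}^{(\epsilon)}(t)}_1 \normss{\overline{\UU}^{(\epsilon)}(t)}_1} \;+\; \zeta + \epsilon^2 \;+\; 2\,\ex{(A_\Sigma-S_\Sigma)\normss{\overline{\UU}^{(\epsilon)}}_1} \;+\; \ex{\normss{\overline{\UU}^{(\epsilon)}}_1^2}.
\end{equation*}

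The main obstacle is the cross-term $\ex{(A_\Sigma-S_\Sigma)\normss{\overline{\UU}}_1}$: $\overline{\UU}$ is not independent of $(A_\Sigma,S_\Sigma)$ and neither sign nor a clean estimate is available. My key move is to substitute the same one-step dynamics back in to recognize that
\begin{equation*}
    \ex{\normss{\overline{\Q}^{(\epsilon)}(t)}_1 \normss{\overline{\UU}^{(\epsilon)}(t)}_1} + \ex{(A_\Sigma-S_\Sigma)\normss{\overline{\UU}^{(\epsilon)}}_1} + \ex{\normss{\overline{\UU}^{(\epsilon)}}_1^2} \;=\; \ex{\normss{\overline{\Q}^{(\epsilon)}(t+1)}_1 \normss{\overline{\UU}^{(\epsilon)}(t)}_1},
\end{equation*}
which absorbs the awkward cross-term into precisely the quantity that appears in the lemma's statement. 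After this substitution the identity collapses to
\begin{equation*}
    \epsilon\,\ex{\normss{\overline{\Q}^{(\epsilon)}}_1} \;=\; \frac{\zeta+\epsilon^2}{2} \;+\; \ex{\normss{\overline{\Q}^{(\epsilon)}(t+1)}_1 \normss{\overline{\UU}^{(\epsilon)}(t)}_1} \;-\; \frac{1}{2}\ex{\normss{\overline{\UU}^{(\epsilon)}}_1^2}.
\end{equation*}

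To conclude I would argue that $\ex{\normss{\overline{\UU}^{(\epsilon)}}_1^2}\to 0$ as $\epsilon\downarrow 0$: under finite support this is immediate from $\normss{\UU}_1\le N S_{\max}$ combined with the throughput identity $\ex{\normss{\overline{\UU}^{(\epsilon)}}_1}=\epsilon$, while under the weaker Condition~A the truncation argument already exhibited in Section~\ref{sec:general} applies verbatim. Hence the $\epsilon^2/2$ and $\ex{\normss{\overline{\UU}}_1^2}/2$ terms in the identity both vanish as $\epsilon\downarrow 0$. Combining with the universal lower bound $\liminf_{\epsilon\downarrow 0}\epsilon\,\ex{\normss{\overline{\Q}^{(\epsilon)}}_1}\ge\zeta/2$ from Lemma~\ref{lem:lower_bound} and the non-negativity of $\normss{\overline{\Q}(t+1)}_1 \normss{\overline{\UU}(t)}_1$, the condition $\limsup_{\epsilon\downarrow 0}\epsilon\,\ex{\normss{\overline{\Q}^{(\epsilon)}}_1}\le\zeta/2$ (i.e.\ heavy-traffic delay optimality) becomes equivalent to $\lim_{\epsilon\downarrow 0}\ex{\normss{\overline{\Q}(t+1)}_1 \normss{\overline{\UU}(t)}_1}=0$, establishing both directions of the iff.
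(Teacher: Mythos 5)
Your proof is correct, but it takes a genuinely different route from the paper: the paper does not prove Lemma~\ref{lem:necessary} at all, instead simply citing it as ``a direct application of the results in~\cite{zhou2018flexible}'' once throughput optimality (Lemma~\ref{thm:throughput}) guarantees bounded moments. You instead give a self-contained derivation by setting the mean drift of the scalar Lyapunov function $\normss{\Q}_1^2$ to zero in steady state and then re-absorbing the awkward cross-term $\ex{(A_\Sigma - S_\Sigma)\normss{\overline{\UU}}_1}$ by multiplying the one-step recursion $\normss{\Q(t+1)}_1 = \normss{\Q(t)}_1 + A_\Sigma(t) - S_\Sigma(t) + \normss{\UU(t)}_1$ by $\normss{\UU(t)}_1$ and taking expectations, yielding the exact identity $\epsilon\,\ex{\normss{\overline{\Q}^{(\epsilon)}}_1} = \tfrac{\zeta+\epsilon^2}{2} + \ex{\normss{\overline{\Q}^{(\epsilon)}(t+1)}_1 \normss{\overline{\UU}^{(\epsilon)}(t)}_1} - \tfrac{1}{2}\ex{\normss{\overline{\UU}^{(\epsilon)}}_1^2}$. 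Since the cross-term is nonnegative and $\ex{\normss{\overline{\UU}^{(\epsilon)}}_1^2}\to 0$, this identity actually reproves the lower bound of Lemma~\ref{lem:lower_bound} as a byproduct (so your final appeal to it is optional), and the iff follows by reading both directions off the identity. The substitution step is algebraically exact because the recursion holds pointwise, and all expectations are finite by Lemma~\ref{thm:throughput} together with Cauchy--Schwarz. What the paper's route buys is brevity and modularity; what your route buys is transparency (the error term $\ex{\normss{\overline{\Q}(t+1)}_1\normss{\overline{\UU}(t)}_1}$ is seen to quantify exactly the gap to the resource-pooled lower bound), at the cost of reproducing a computation that \cite{zhou2018flexible} already carries out. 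As written, your argument is essentially the argument one would expect to find in that reference, and it is sound.
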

	This lemma is a direct application of the results in~\cite{zhou2018flexible}, which establishes that Eq. \eqref{eq:necessary} is the sufficient and necessary condition for any load balancing policy to be heavy-traffic delay optimal if the system is stable with bounded moments. 
	By Lemma \ref{thm:throughput}, we have that the JBT policy is throughput optimal with all the moments being bounded for any $r\ge 1$, and hence the above lemma holds. 
	% Note that the original results in~\cite{zhou2018flexible} are established under the assumption of finite supports for both arrival and service processes. However, the same results hold even under a 
	% Note that the original results in~\cite{zhou2018flexible} are established under the assumption of finite supports for both arrival and service processes. However, the same results hold even under a 

\subsection{Proof of Theorem \ref{thm:constant}}
\label{sec:proof_thm_constant}
Before we present our proof, we first give the following useful result, which can be established by setting the mean drift a chosen Lyapunov function to zero in steady state. For completeness, the proof is given at Appendix \ref{sec:proof_lemma5}.
\begin{lemma}
\label{lem:key_equation}
Consider a load balancing system with homogeneous servers under the JBT policy. For any threshold $r \ge 1$, we have 
\begin{align*}
	2\sum_{i=1}^N\sum_{j>i}^N \ex{\left((\overline{Q}_i^+)^{(\epsilon)}\overline{U}_j^{(\epsilon)} + (\overline{Q}_j^+)^{(\epsilon)}U_i^{(\epsilon)}\right)} = \mathcal{T}_1^{(\epsilon)} + \mathcal{T}_2^{(\epsilon)} - \mathcal{T}_3^{(\epsilon)},
	% =& 2\sum_{i=1}^N\sum_{j>i}^N \ex{\left(\overline{Q}_i^{(\epsilon)} - \overline{Q}_j^{(\epsilon)}\right)\left(\overline{A}_i^{(\epsilon)} - \overline{A}_j^{(\epsilon)}\right)} - \sum_{i=1}^N\sum_{j>i}^N\ex{\left(\overline{U}_i^{(\epsilon)} - \overline{U}_j^{(\epsilon)}\right)^2} \\
	% &+ \sum_{i=1}^N\sum_{j>i}^N\ex{\left( \overline{A}_i^{(\epsilon)} - \overline{A}_j^{(\epsilon)} - \overline{S}_i^{(\epsilon)} +\overline{S}_j^{(\epsilon)}\right)^2} 
\end{align*}	
where 
\begin{align*}
	&\mathcal{T}_1^{(\epsilon)} \triangleq 2\sum_{i=1}^N\sum_{j>i}^N \ex{\left(\overline{Q}_i^{(\epsilon)} - \overline{Q}_j^{(\epsilon)}\right)\left(\overline{A}_i^{(\epsilon)} - \overline{A}_j^{(\epsilon)}\right)}\\
	&\mathcal{T}_2^{(\epsilon)} \triangleq \sum_{i=1}^N\sum_{j>i}^N\ex{\left( \overline{A}_i^{(\epsilon)} - \overline{A}_j^{(\epsilon)} - \overline{S}_i^{(\epsilon)} +\overline{S}_j^{(\epsilon)}\right)^2} \\
	&\mathcal{T}_3^{(\epsilon)} \triangleq \sum_{i=1}^N\sum_{j>i}^N\ex{\left(\overline{U}_i^{(\epsilon)} - \overline{U}_j^{(\epsilon)}\right)^2}\\
	&\overline{\Q}^+ \triangleq  \overline{\Q}(t+1)
\end{align*}
and $\overline{A}_i^{(\epsilon)}$ and $\overline{U}_i^{(\epsilon)}$ are dependent of $\overline{\Q}$ for each $i$ and $\epsilon > 0$.
\end{lemma}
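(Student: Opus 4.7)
The plan is to apply the standard Lyapunov-drift identity $\ex{\Delta V(\overline{\Q})}=0$ in stationarity to the quadratic pairwise Lyapunov function
\[
V(\Q) \;=\; \sum_{i=1}^N \sum_{j>i}^N (Q_i - Q_j)^2.
\]
This choice is natural because the right-hand side of the claimed identity is itself a sum over unordered pairs of squared $A$-, $S$-, and $U$-differences, and because the quadratic structure, combined with the complementary-slackness identity $Q_i(t+1)U_i(t)=0$, is exactly what lets the $Q_i^+U_j+Q_j^+U_i$ terms on the left-hand side emerge.

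First I would expand the one-step drift: using the queue recursion \eqref{eq:Qdynamic} and writing $\tilde\Delta_{ij}(t)\triangleq (A_i-A_j)-(S_i-S_j)+(U_i-U_j)$ gives
\[
\Delta V \;=\; \sum_{i<j}\Bigl[\,2(Q_i-Q_j)\,\tilde\Delta_{ij} \;+\; \tilde\Delta_{ij}^2\,\Bigr].
\]
To surface the $Q_i^+U_j+Q_j^+U_i$ terms I would rewrite $Q_i=Q_i^+-(A_i-S_i+U_i)$ inside the sub-term $2(Q_i-Q_j)(U_i-U_j)$ and invoke $Q_i^+U_i=0$, which yields $(Q_i^+-Q_j^+)(U_i-U_j)=-(Q_i^+U_j+Q_j^+U_i)$ plus quadratic residues in $A$, $S$, $U$. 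Expanding $\tilde\Delta_{ij}^2$ separately, its $AU$ and $SU$ cross products cancel those residues exactly, leaving only the clean remainder $[(A_i-A_j)-(S_i-S_j)]^2-(U_i-U_j)^2$ alongside the isolated $-2(Q_i^+U_j+Q_j^+U_i)$.

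Finally I would take stationary expectations. By Lemma~\ref{thm:throughput}, JBT is throughput optimal with all polynomial moments of $\normss{\overline{\Q}^{(\epsilon)}}$ finite, which ensures $\ex{V(\overline{\Q})}<\infty$ and hence $\ex{\Delta V(\overline{\Q})}=0$. The cross term $\ex{(\overline{Q}_i-\overline{Q}_j)(\overline{S}_i-\overline{S}_j)}$ then vanishes by server homogeneity: $S_i,S_j$ are i.i.d.\ and independent of $\overline{\Q}$, so the expectation factors as $(\ex{\overline{Q}_i}-\ex{\overline{Q}_j})(\mu_i-\mu_j)=0$. Rearranging reproduces the identity with $\mathcal{T}_1+\mathcal{T}_2-\mathcal{T}_3$.

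The main obstacle is the sign-bookkeeping in the cancellation: one must verify that the $AU$ and $SU$ cross products from $\tilde\Delta_{ij}^2$ match the residues from the $Q_i\to Q_i^+$ substitution with the correct coefficients so that $-\mathcal{T}_3$ (rather than $+\mathcal{T}_3$) appears at the end. A secondary technical point is justifying $\ex{\Delta V(\overline{\Q})}=0$: this requires integrability of $V(\overline{\Q})$, supplied by the bounded-moment conclusion of Lemma~\ref{thm:throughput}, together with control of $\ex{\tilde\Delta_{ij}^2}$, supplied by the finite-support (or exponentially-tailed) assumption on $A_\Sigma$ and $S_n$.
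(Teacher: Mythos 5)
Your proposal is correct and matches the paper's argument essentially step for step: the same pairwise quadratic Lyapunov function $V_1(Z)=\sum_{i<j}(Q_i-Q_j)^2$, the same drift expansion, the same use of $Q_n(t+1)U_n(t)=0$ to convert the $(Q_i^+-Q_j^+)(U_i-U_j)$ cross term into $-(Q_i^+U_j+Q_j^+U_i)$, the same appeal to independence and homogeneity of the service to kill the $\ex{(Q_i-Q_j)(S_i-S_j)}$ term, and the same invocation of bounded moments from Lemma~\ref{thm:throughput} to justify the zero steady-state drift. Your sign bookkeeping is also right: the $2(\Delta A-\Delta S)\Delta U$ terms cancel and the $\Delta U^2$ contributions combine to $-\Delta U^2$, yielding $-\mathcal{T}_3$.
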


% We would also utilize the following result.
% 	\begin{lemma}
% 	\label{lem:necessary}
% 		For the JBT policy with threshold $r \ge 1$, it is heavy-traffic delay optimal if and only if 
% 		\begin{align}
% 		\label{eq:necessary}
% 			\lim_{\epsilon \downarrow 0}\ex{\big\lVert\overline{\Q}^{(\epsilon)}(t+1) \big\rVert_1 \big\lVert\overline{\UU}^{(\epsilon)}(t) \big\rVert_1} = 0.
% 		\end{align}
% 	\end{lemma}
% 	This lemma is a direct application of the results in~\cite{zhou2018flexible}, which establishes that Eq. \eqref{eq:necessary} is the sufficient and necessary condition for any load balancing policy to be heavy-traffic delay optimal if the system is stable with bounded moments. 
% 	By Lemma \ref{thm:throughput}, we have that the JBT policy is throughput optimal with all the moments being bounded for any $r\ge 1$, and hence the above lemma holds.

Now, we are ready to present the proof of Theorem \ref{thm:constant}.
\begin{proof}[Proof of Theorem \ref{thm:constant}]
	To start with, we first note that the sufficient and necessary condition in Lemma \ref{lem:necessary} can be rewritten as follows under the JBT policy.
	\begin{align}
		&2 \ex{\big\lVert\overline{\Q}^{(\epsilon)}(t+1) \big\rVert_1 \big\lVert\overline{\UU}^{(\epsilon)}(t) \big\rVert_1}\nonumber\\
		\ep{a} &2\sum_{i=1}^N\sum_{j>i}^N \ex{\left((\overline{Q}_i^+)^{(\epsilon)}\overline{U}_j^{(\epsilon)} + (\overline{Q}_j^+)^{(\epsilon)}U_i^{(\epsilon)}\right)}\nonumber\\
		\ep{b} &4\sum_{i=1}^N\sum_{j>i}^N\ex{\left((\overline{Q}_i^+)^{(\epsilon)}\overline{U}_j^{(\epsilon)} \right)}\nonumber\\
		\ep{c} &4\sum_{i=1}^N\sum_{j>i}^N\left(\sum_{k=1}^\infty k \overline{U}_j^{(\epsilon)} \mathbb{P} \left( (\overline{Q}_i^+)^{(\epsilon)}= k, (\overline{Q}_i^+)^{(\epsilon)}= 0, \overline{U}_j^{(\epsilon)} \ge 1 \right)\right), \label{eq:T0}
	\end{align}
	in which (a) and (c) follow from the fact $Q_i(t+1)U_i(t) = 0$ for each $i$ and $t\ge 0$; (b) holds by the symmetry property of JBT policy for homogeneous servers.

	 Thus, by Lemma \ref{lem:necessary}, Lemma \ref{lem:key_equation} and the above equation, in order to analyze heavy-traffic delay optimality of JBT under any constant threshold, all we need to do is to focus on terms $\mathcal{T}_1^{(\epsilon)}$, $\mathcal{T}_2^{(\epsilon)}$ and $\mathcal{T}_3^{(\epsilon)}$, respectively. 

	 Now, let us first focus on case (1) in Theorem \ref{thm:constant}.

	 For $\mathcal{T}_1^{(\epsilon)}$, we have 
	 \begin{align}
	 \label{eq:T1_first}
		\mathcal{T}_1^{(\epsilon)} &\triangleq 2\sum_{i=1}^N\sum_{j>i}^N \ex{\left(\overline{Q}_i^{(\epsilon)} - \overline{Q}_j^{(\epsilon)}\right)\left(\overline{A}_i^{(\epsilon)} - \overline{A}_j^{(\epsilon)}\right)}\nonumber\\
		& = 2\sum_{i=1}^N\sum_{j>i}^N \ex{\left(\overline{Q}_i- \overline{Q}_j\right)\left(\overline{A}_i - \overline{A}_j\right)\mathcal{I}\left(\overline{Q}_i \ge r, \overline{Q}_j \ge r\right) }\nonumber\\
		& \quad + 2\sum_{i=1}^N\sum_{j>i}^N \ex{\left(\overline{Q}_i- \overline{Q}_j\right)\left(\overline{A}_i - \overline{A}_j\right)\mathcal{I}\left(\overline{Q}_i < r, \overline{Q}_j < r\right) }\nonumber\\
		& \quad + 4 \sum_{i=1}^N\sum_{j>i}^N \ex{\left(\overline{Q}_i- \overline{Q}_j\right)\left(\overline{A}_i - \overline{A}_j\right)\mathcal{I}\left(\overline{Q}_i \ge r, \overline{Q}_j < r\right) }\nonumber\\
		& \ep{a} 4 \sum_{i=1}^N\sum_{j>i}^N \ex{\left(\overline{Q}_i- \overline{Q}_j\right)\left(\overline{A}_i - \overline{A}_j\right)\mathcal{I}\left(\overline{Q}_i \ge r, \overline{Q}_j < r\right) }\nonumber\\
		% & \gep{b} -4\lambda_\Sigma \sum_{i=1}^N\sum_{j>i}^N \left( \sum_{k=C+1}^\infty k \mathbbm{P}\left(\overline{Q}_i = k, \overline{Q}_j \le C \right)  \right)\\
		& \gep{b} -4\lambda_\Sigma \sum_{i=1}^N\sum_{j>i}^N \sum_{m=0}^{r-1} \sum_{k=r}^\infty (k-m)  \mathbb{P}\left(\overline{Q}_i = k, \overline{Q}_j = m \right)\nonumber\\
		& \ep{c} -4\lambda_\Sigma \sum_{i=1}^N\sum_{j>i}^N \sum_{m=0}^{r-1} \sum_{k=r}^\infty (k-m)  \mathbb{P}\left(\overline{Q}_i^+ = k, \overline{Q}_j^+ = m \right),
	\end{align}
	where (a) follows from the definition of the JBT policy, i.e., when both queues are in memory or both queues are not in memory, they have the same probability to be selected in the homogeneous case; (b) is true since when the ID of server $j$ is in $m(t)$ while the ID of server $i$ is not, we have $A_i(t) = 0$ and $A_j(t)\le A_{\Sigma}(t)$ by the definition of the JBT policy; (c) holds since $\overline{\Q}(t+1)$ has the same distribution as $\overline{\Q}(t)$ in steady state.
	
	In order to further simplify the term $\mathcal{T}_1^{(\epsilon)}$, we need to define the following events in which $k\ge r$ and $1 \le m \le r-1$.
	\begin{align*}
		&E_{(k,m)} \triangleq \left\{\overline{Q}_i^+ = k, \overline{Q}_j^+ = m  \right\}\\
		&E_{(k,m)}^+ \triangleq \left\{\overline{Q}_i(t+2) = k, \overline{Q}_j(t+2) = m  \right\}\\
		&E_{(k,0,0)} \triangleq \left\{\overline{Q}_i^+ = k, \overline{Q}_j^+ = 0, \overline{U}_j = 0  \right\}\\
		&E_{(k,0,0)}^+ \triangleq \left\{\overline{Q}_i(t+2) = k, \overline{Q}_j(t+2) = 0, \overline{U}_j^+ = 0  \right\}\\
		&E_{(k,0,\ge1)} \triangleq \left\{\overline{Q}_i^+ = k, \overline{Q}_j^+ = 0, \overline{U}_j \ge 1 \right\}\\
		&E_{(k,0,\ge 1)}^+ \triangleq \left\{\overline{Q}_i(t+2) = k, \overline{Q}_j(t+2) = 0, \overline{U}_j^+ \ge 1 \right\}.
	\end{align*}
	Note that by the assumptions of arrival and service processes, there exists a positive probability $\hat{p}$ (independent of $\epsilon$) such that there is no arrival and meanwhile the potential service of all the servers are $d$ for some $d$ between $1$ and $S_{max}$. For ease of exposition, we take $d = 1$ in the following proof, and the same techniques apply for the case where $d \neq 1$. Now, for each occurrence of event $E_{(k,m)}$, there exists a positive probability $\hat{p}$ such that $E_{(k-1,m-1)}^+$ will happen. Therefore, we have
	\begin{align}
	\label{eq:1}
		\mathbb{P}\left(E_{(k-1,m-1)}\right) \ep{a} \mathbb{P}\left(E_{(k-1,m-1)}^+\right) \ge \hat{p} \mathbb{P}\left(E_{(k,m)}\right),
	\end{align}
	where (a) holds due to the fact that both events are defined in steady state. Similarly, we have 
	\begin{align}
		&\mathbb{P}\left(E_{(k-1,0,0)}\right) = \mathbb{P}\left(E_{(k-1,0,0)}^+\right) \ge \hat{p} \mathbb{P}\left(E_{(k,1)}\right)\label{eq:2}\\
		&\mathbb{P}\left(E_{(k-1,0,\ge 1)}\right)  = \mathbb{P}\left(E_{(k-1,0,\ge1)}^+\right) \ge \hat{p}  \mathbb{P}\left(E_{(k,0,0)}\right)\label{eq:3}.
	\end{align}
	Now, we can further simplify $\mathcal{T}_1^{(\epsilon)}$ as follows
	\begin{align}
	\label{eq:T1}
		\mathcal{T}_1^{(\epsilon)} &\gep{a} -4\lambda_\Sigma \sum_{i=1}^N\sum_{j>i}^N \left(\sum_{k=r}^\infty k \mathbb{P}\left(E_{(k,0,\ge1)}\right) + \frac{1}{\hat{p}}\sum_{k=r}^\infty k \mathbb{P}\left(E_{(k-1,0,\ge1)}\right) \right)\nonumber \\
		& \quad -4\lambda_\Sigma \sum_{i=1}^N\sum_{j>i}^N \left(\sum_{m=1}^{r-1} \sum_{k=r}^\infty \frac{1}{\hat{p}^{m+1}} (k-m)\mathbb{P}\left(E_{(k-m-1,0,\ge1)}\right) \right)\nonumber\\
		& = -4\lambda_\Sigma \sum_{i=1}^N\sum_{j>i}^N \left( \sum_{l = 0}^r \sum_{h=r-l}^\infty \frac{1}{\hat{p}^l}h \mathbb{P}\left(E_{(h,0,\ge1)}\right) \right)\nonumber\\
		& \quad -4\lambda_\Sigma \sum_{i=1}^N\sum_{j>i}^N\left( \sum_{l = 1}^r \sum_{h=r-l}^\infty \frac{1}{\hat{p}^l} \mathbb{P}\left(E_{(h,0,\ge1)}\right)\right)\nonumber\\
		& \gep{b} -4\lambda_\Sigma \sum_{i=1}^N\sum_{j>i}^N \left( \sum_{l = 0}^r \frac{1}{\hat{p}^l} \sum_{h=0}^\infty h\overline{U}_j \mathbb{P}\left(E_{(h,0,\ge1)}\right) \right)\nonumber\\
		& \quad -4\lambda_\Sigma \sum_{i=1}^N\sum_{j>i}^N\left( \sum_{l = 1}^r \frac{1}{\hat{p}^l} \epsilon \right),
	\end{align}
	where (a) follows from Eqs. \eqref{eq:1}, \eqref{eq:2} and \eqref{eq:3}; (b) holds since $U_j(t) \ge 1$ and $\ex{\overline{U}_j} \le \ex{\norms{\overline{\UU}^{(\epsilon)}}_1} = \epsilon$. The latter fact can be easily obtained by setting mean drift of $\hat{V}(Z(t)) \triangleq \norm{\Q(t)}_1$ to be zero in steady state, which is true since all the moments of $\norms{\overline{\Q}}$ is bounded.

	For $\mathcal{T}_2^{(\epsilon)}$, we can simplify it as follows.
	% \begin{align}
	% \label{eq:T2}
	% 	\mathcal{T}_2^{(\epsilon)} &\triangleq \sum_{i=1}^N\sum_{j>i}^N\ex{\left( \overline{A}_i^{(\epsilon)} - \overline{A}_j^{(\epsilon)} - \overline{S}_i^{(\epsilon)} +\overline{S}_j^{(\epsilon)}\right)^2}\nonumber\\
	% 	&\ep{a}  \sum_{i=1}^N\sum_{j>i}^N\ex{\left( \overline{A}_i^{(\epsilon)} - \overline{A}_j^{(\epsilon)}\right)^2}\nonumber\\
	% 	&\ep{b} \left(N-1\right)\ex{\left(A_\Sigma^{(\epsilon)}\right)^2}\nonumber\\
	% 	& = \left(N-1\right)\left( \left(\sigma_\Sigma^{(\epsilon)}\right)^2 + \left(\lambda_\Sigma^{(\epsilon)}\right)^2 \right)
	% \end{align}
	% where (a) follows from the assumption that the service process is constant; (b) follows from the fact that $A_i(t)A_j(t) = 0$ for all $i\neq j$ and $t\ge 0$.
	\begin{align}
\label{eq:T2}
	\mathcal{T}_2^{(\epsilon)} &\triangleq \sum_{i=1}^N\sum_{j>i}^N\ex{\left( \overline{A}_i^{(\epsilon)} - \overline{A}_j^{(\epsilon)} - \overline{S}_i^{(\epsilon)} +\overline{S}_j^{(\epsilon)}\right)^2}\nonumber \\
	& \ep{a}\sum_{i=1}^N\sum_{j>i}^N\ex{\left( \overline{A}_i^{(\epsilon)} - \overline{A}_j^{(\epsilon)}\right)^2 - \left(\overline{S}_i^{(\epsilon)} -\overline{S}_j^{(\epsilon)}\right)^2}\nonumber\\
	& \ep{b} \left(N-1\right)\left(\left(\sigma_{\Sigma}^{(\epsilon)}\right)^2+\left(\lambda_{\Sigma}^{(\epsilon)}\right)^2 + \nu_{\Sigma}^2\right),
\end{align}
where (a) holds since the arrival and service are independent and the servers are homogeneous; (b) is true because $A_i(t)A_j(t) = 0$ for all $i\neq j$ and $t\ge0$, and the service is independent and homogeneous.

	For $\mathcal{T}_3^{(\epsilon)}$, we can simplify it as follows.
	\begin{align}
	\label{eq:T3}
		\mathcal{T}_3^{(\epsilon)} &\triangleq \sum_{i=1}^N\sum_{j>i}^N\ex{\left(\overline{U}_i^{(\epsilon)} - \overline{U}_j^{(\epsilon)}\right)^2}\nonumber\\
		 & \lep{a} \left(N-1\right) \ex{\norms{\overline{\UU}^{(\epsilon)}}_1^2 }\nonumber\\
		 & \lep{b} \epsilon\left(N-1\right) S_{max},
	\end{align}
	where (a) follows from the fact that $U_n(t) \ge 0$ for any $n \in \mathcal{N}$; (b) holds because of $U_n(t) \le S_{max}$ for any $n\in \mathcal{N}$ and the fact $\ex{\norms{\overline{\UU}^{(\epsilon)}}_1} = \epsilon$.
	
	Now, substituting Eqs. \eqref{eq:T0}, \eqref{eq:T1}, \eqref{eq:T2} and \eqref{eq:T3} into the equation in Lemma \ref{lem:key_equation}, yields
	\begin{align*}
	&4\sum_{i=1}^N\sum_{j>i}^N\left(\sum_{k=1}^\infty k \overline{U}_j^{(\epsilon)} \mathbb{P} \left( (\overline{Q}_i^+)^{(\epsilon)}= k, (\overline{Q}_i^+)^{(\epsilon)}= 0, \overline{U}_j^{(\epsilon)} \ge 1 \right)\right)\\
	 = &4\sum_{i=1}^N\sum_{j>i}^N\left(\sum_{k=1}^\infty k\overline{U}_j\mathbb{P}\left(E_{(k,0,\ge1)}\right) \right)\\
	\ge &-4\lambda_\Sigma \sum_{i=1}^N\sum_{j>i}^N \left( \sum_{l = 0}^r \frac{1}{\hat{p}^l} \sum_{h=0}^\infty h \overline{U}_j\mathbb{P}\left(E_{(h,0,\ge1)}\right) \right)-4\lambda_\Sigma \sum_{i=1}^N\sum_{j>i}^N\left( \sum_{l = 1}^r \frac{1}{\hat{p}^l} \epsilon \right) \\
		 & + \left(N-1\right)\left( \left(\sigma_\Sigma^{(\epsilon)}\right)^2 + \left(\lambda_\Sigma^{(\epsilon)}\right)^2 + \nu_{\Sigma}^2\right) - S_{max}\left(N-1\right) \epsilon,
	\end{align*}
	which can be simplified as 
	\begin{align*}
	&\left(4+ 4\lambda_\Sigma \sum_{l = 0}^r \frac{1}{\hat{p}^l}\right)	\sum_{i=1}^N\sum_{j>i}^N\left(\sum_{k=1}^\infty k\overline{U}_j\mathbb{P}\left(E_{(k,0,\ge1)}\right) \right) \ge  - S_{max}\left(N-1\right) \epsilon\\
	& \quad -4\lambda_\Sigma \sum_{i=1}^N\sum_{j>i}^N\left( \sum_{l = 1}^r \frac{1}{\hat{p}^l} \epsilon \right) + \left(N-1\right)\left( \left(\sigma_\Sigma^{(\epsilon)} \right)^2 + \left(\lambda_\Sigma^{(\epsilon)}\right)^2 + \nu_{\Sigma}^2\right).
	\end{align*}
	Then taking liminf on both sides gives 
	\begin{align}
	\label{eq:liminfge0}
		\liminf_{\epsilon \downarrow 0} \sum_{i=1}^N\sum_{j>i}^N\left(\sum_{k=1}^\infty k\overline{U}_j\mathbb{P}\left(E_{(k,0,\ge1)}\right) \right) \ge \frac{(N-1)\left(\sigma_{\Sigma}^2 + \mu_{\Sigma}^2 + \nu_{\Sigma}^2\right)}{4+ 4\mu_\Sigma \sum_{l = 0}^r \frac{1}{\hat{p}^l}}>0
	\end{align}
	which holds since threshold $r$ is a constant and $\hat{p}$ would not vanish as $\epsilon \to 0$. Therefore, by Lemma \ref{lem:necessary} and Eq. \eqref{eq:T0}, we have 
	\begin{align*}
		\liminf_{\epsilon \downarrow 0} \epsilon \ex{\sum_{n=1}^N \overline{Q}_n^{(\epsilon)} }  > \frac{\zeta}{2},
	\end{align*}
	where $\zeta$ is the constant defined as in Lemma \ref{lem:lower_bound}. 

	To establish the inequality \eqref{eq:limsup} in Theorem \ref{thm:constant}, note that the term $\mathcal{T}_1^{(\epsilon)}$ is equal to $0$ for any $\epsilon > 0$ under random routing, and $\mathcal{T}_2^{(\epsilon)}$ and $\mathcal{T}_3^{(\epsilon)}$ converge to the same constant for both random routing and JBT. Thus, based on Lemma \ref{lem:necessary} and Lemma \ref{lem:key_equation}, all we need to show is that under the JBT policy $\limsup_{\epsilon \downarrow 0}\mathcal{T}_1^{(\epsilon)} < 0$. To this end, we can upper bound it as follows by reusing the equation (a) in Eq. \eqref{eq:T1_first}.
	\begin{align*}
		\mathcal{T}_1^{(\epsilon)} &= 4 \sum_{i=1}^N\sum_{j>i}^N \ex{\left(\overline{Q}_i- \overline{Q}_j\right)\left(\overline{A}_i - \overline{A}_j\right)\mathcal{I}\left(\overline{Q}_i \ge r, \overline{Q}_j < r\right) }\nonumber\\
		&\lep{a} -\frac{4\lambda_\Sigma}{N-1} \sum_{i=1}^N\sum_{j>i}^N \sum_{m=0}^{r-1} \sum_{k=r}^\infty (k-m)  \mathbb{P}\left(\overline{Q}_i = k, \overline{Q}_j = m \right)\nonumber\\
		& \le-\frac{4\lambda_\Sigma}{S_{max}(N-1)} \sum_{i=1}^N\sum_{j>i}^N \left(\sum_{k=1}^\infty k\overline{U}_j\mathbb{P}\left(E_{(k,0,\ge1)}\right) \right),
	\end{align*}
	where (a) holds since when $Q_i(t)\ge r$ and $Q_j(t) < r$, the lower bound on the probability of server $j$ being chosen under JBT is $1/(N-1)$.
	Now, taking limsup on both sides, yields
	\begin{align*}
		\limsup_{\epsilon \downarrow 0}\mathcal{T}_1^{(\epsilon)} &\le - \frac{4\lambda_\Sigma}{S_{max}(N-1)} \liminf_{\epsilon \downarrow 0} \sum_{i=1}^N\sum_{j>i}^N \left(\sum_{k=1}^\infty k\overline{U}_j\mathbb{P}\left(E_{(k,0,\ge1)}\right) \right)\\
		&<0,
	\end{align*}
	where the last inequality follows directly from Eq. \eqref{eq:liminfge0}. Hence, we have completed the proof of the first case in Theorem \ref{thm:constant}.

	Now, let us turn to case (2) in Theorem \ref{thm:constant}. Based on the discussions above, in order to show that the JBT policy with $r^{(\epsilon)} = (1/\epsilon)^{1+\alpha}$ and $\alpha > 0$ achieves the same limit as random routing, all we need to show is that $\lim_{\epsilon \downarrow 0} \mathcal{T}_1^{(\epsilon)} = 0$.
	Again, using the equation (a) in Eq. \eqref{eq:T1_first}, we obtain 
	\begin{align*}
		\mathcal{T}_1^{(\epsilon)} &= 4 \sum_{i=1}^N\sum_{j>i}^N \ex{\left(\overline{Q}_i- \overline{Q}_j\right)\left(\overline{A}_i - \overline{A}_j\right)\mathcal{I}\left(\overline{Q}_i \ge r, \overline{Q}_j < r\right) }\nonumber\\
		& \ge -4\lambda_{\Sigma} \sum_{i=1}^N\sum_{j>i}^N \sum_{m=0}^{r-1}\ex{(\overline{Q}_i - m)\mathcal{I}\left(\overline{Q}_i \ge r, \overline{Q}_j =m\right)}\\
		& \ge -4\lambda_{\Sigma} \sum_{i=1}^N\sum_{j>i}^N \sum_{m=0}^{r-1}\ex{\overline{Q}_i\mathcal{I}\left(\overline{Q}_i \ge r, \overline{Q}_j =m\right)}\\
		& \ge -4\lambda_{\Sigma} \sum_{i=1}^N\sum_{j>i}^N \sum_{m=0}^{r-1}\sqrt{\ex{\overline{Q}_i^2}\mathbb{P}\left(\overline{Q}_i \ge r, \overline{Q}_j =m\right)}\\
		& \gep{a} -4\lambda_{\Sigma} \sum_{i=1}^N\sum_{j>i}^N \sum_{m=0}^{r-1}\sqrt{\frac{M^{\prime}}{\epsilon^2}\frac{e^{\theta^* (1/\epsilon)}}{e^{\theta^* r}}},
	\end{align*}
	where (a) follows from the bounded moments in Lemma \ref{thm:throughput} and Chernoff bound based on Eq. \eqref{eq:mgf} in the proof of Lemma~\ref{thm:throughput}. Thus, if $r^{(\epsilon)} = (1/\epsilon)^{1+\alpha}$ for any constant $\alpha > 0$, we have $\lim_{\epsilon \downarrow 0}\mathcal{T}_1^{(\epsilon)} = 0$. Hence, we have established the second case in Theorem \ref{thm:constant}.
\end{proof}

\subsection{Proof of Theorem \ref{thm:log}}
\label{sec:proof_theorem_log}
\begin{proof}[Proof of Theorem \ref{thm:log}]
	% We will prove this theorem by combining the following lemma with the state-space collapse result.
	% \begin{lemma}
	% \label{lem:necessary}
	% 	For the JBT policy with threshold $r \ge 1$, it is heavy-traffic delay optimal if and only if 
	% 	\begin{align}
	% 	\label{eq:necessary}
	% 		\lim_{\epsilon \downarrow 0}\ex{\big\lVert\overline{\Q}^{(\epsilon)}(t+1) \big\rVert_1 \big\lVert\overline{\UU}^{(\epsilon)}(t) \big\rVert_1} = 0.
	% 	\end{align}
	% \end{lemma}
	% This lemma is a direct application of Proposition 7.4 in [], which establishes that Eq. \eqref{eq:necessary} is the sufficient and necessary condition for any load balancing policy to be heavy-traffic delay optimal if the system is stable with bounded second moments. 
	% By Lemma \ref{thm:throughput}, we have that JBT policy is throughput optimal with all the moments being bounded for any $r\ge 1$, and hence the above lemma holds.
	Based on the result in Lemma~\ref{lem:necessary}, in order to prove Theorem~\ref{thm:log}, we need just focus on the left-hand side of Eq.~\eqref{eq:necessary}. Let us first define 
	\begin{align*}
		\mathcal{T}^{(\epsilon)} &\triangleq \ex{\big\lVert\overline{\Q}^{(\epsilon)}(t+1) \big\rVert_1 \big\lVert\overline{\UU}^{(\epsilon)}(t) \big\rVert_1}\\
		& = \ex{\sum_{i=1}^N \overline{U}_i \left(\sum_{j=1}^N \overline{Q}_j^+ \right) },
	\end{align*}
	in which for brevity we omit the references $t$ and $\epsilon$, and use $\overline{\Q}^+$ to denote $\overline{\Q}(t+1)$. Thus, all we need to show is that $\lim_{\epsilon \downarrow 0} \mathcal{T}^{(\epsilon)} = 0$ under the assumptions of Theorem~\ref{thm:log}. Since $\overline{U}_i\overline{Q}_i^+ = 0$ by the queue-length dynamic in Eq. \eqref{eq:Qdynamic}, we have for each $i \in \mathcal{N}$,
	\begin{align}
	 &\ex{\overline{U}_i \left(\sum_{j=1}^N \overline{Q}_j^+ \right) }\nonumber\\
	 = &\ex{\overline{U}_i \left(\sum_{j=1}^N \overline{Q}_j^+ \right) \mathcal{I}{ \left(\overline{Q}_i^+ = 0\right) } }\nonumber\\
	  = &\ex{\overline{U}_i \left(\sum_{j=1}^N \overline{Q}_j^+ \right) \mathcal{I}{ \left(\overline{Q}_i^+ = 0, \max_j \overline{Q}_j^+ \le r\sqrt{N-1} + r \right) } }\label{eq:op_lower}\\
	  &+ \ex{\overline{U}_i \left(\sum_{j=1}^N \overline{Q}_j^+ \right) \mathcal{I}{ \left(\overline{Q}_i^+ = 0, \max_j \overline{Q}_j^+ > r\sqrt{N-1} + r \right) } }\label{eq:op_upper}.
	\end{align}
	Now, it remains to show that both Eqs. \eqref{eq:op_lower} and \eqref{eq:op_upper} approach $0$ as $\epsilon \to 0$.
	To start with, we can bound Eq. \eqref{eq:op_lower} as follows.
	\begin{align*}
		&\ex{\overline{U}_i \left(\sum_{j=1}^N \overline{Q}_j^+ \right) \mathcal{I}{ \left(\overline{Q}_i^+ = 0, \max_j \overline{Q}_j^+ \le r\sqrt{N-1} + r\right) } }\\
		\le & r(N-1)(\sqrt{N-1}+1)\ex{\overline{U}_i }\\
		\le & r(N-1)(\sqrt{N-1}+1)\epsilon,
	\end{align*}
	where the last inequality follows from the fact $\ex{\norms{\overline{\UU}^{(\epsilon)}}_1} = \epsilon$. Thus, Eq. \eqref{eq:op_lower} approaches $0$ as $\epsilon \to 0$ since $r^{(\epsilon)} = o(1/\epsilon)$. 

	Then, we can turn to bound Eq. \eqref{eq:op_upper} in the following way.
	\begin{align*}
		&\ex{\overline{U}_i \left(\sum_{j=1}^N \overline{Q}_j^+ \right) \mathcal{I}{ \left(\overline{Q}_i^+ = 0, \max_j \overline{Q}_j^+ > \sqrt{N-1}r + r \right) } }\\
		\lep{a} & S_{max}\ex{\norm{\overline{\Q}}_1 \mathcal{I}{ \left(\overline{Q}_i = 0, \max_j \overline{Q}_j > \sqrt{N-1}r + r \right) } }\\
		\lep{b} & S_{max}\sqrt{ \ex{\norm{\overline{\Q}}_1^2} \mathbb{P}\left( d_{\mathcal{R}^{(r)}}\left(\overline{\Q}\right) \ge r \right)}\\
		\lep{c} & S_{max} \sqrt{M_2 \frac{1}{\epsilon^2}\frac{C^*}{e^{\theta^*r}}},
	\end{align*}
	where (a) follows from the fact that $U_i(t)\le S_i(t) \le S_{max}$ for any $i \in \mathcal{N}$ and $t\ge0$; (b) holds due to Cauchy-Schwartz inequality and the following facts. For any system state $Z(t)$ that satisfies $Q_i(t) = 0$ for some $i$ and $\max_j \overline{Q}_j > \sqrt{N-1}r + r$, we have 
	\begin{align*}
		 d_{\mathcal{R}_l^{(r)}}(\Q(t)) > r\sqrt{N-1}\\
		r \le  d_{\mathcal{R}_u^{(r)}}(\Q(t)) \le r\sqrt{N-1}.
	\end{align*}
	Thus, 
	\begin{align*}
		d_{\mathcal{R}^{(r)}}(\Q(t)) = \min\{d_{\mathcal{R}_l^{(r)}}(\Q(t)),  d_{\mathcal{R}_u^{(r)}}(\Q(t))\} \ge r,
	\end{align*}
	and hence we have (b). The inequality (c) comes from the Chernoff bound, the moments bound in Lemma \ref{thm:throughput} and state-space collapse in Proposition \ref{thm:collapse}, in which the constants $M_2$, $C^*$ and $\theta^*$ are all independent of $\epsilon$. Now, under the condition that $r^{(\epsilon)} \ge K\log(1/\epsilon)$ where $K = 2(1+\alpha)/\theta^*$ and $\alpha>0$, we have that Eq. \eqref{eq:op_upper} approaches zero as $\epsilon \to 0$. Hence, we have completed the proof of Theorem \ref{thm:log}.
	
\end{proof}

\subsection{Proof of Proposition \ref{thm:collapse}}
\label{sec:proof_thm_collapse}
Before we present the proof, let us first introduce some useful results. First, let us define
\begin{align*}
  &V_{\perp}(Z(t)) \triangleq  d_{\mathcal{R}^{(r)}}(\Q(t))\\
  &V_{\perp l}(Z(t)) \triangleq d_{\mathcal{R}_l^{(r)}}(\Q(t)) \\
  &V_{\perp u}(Z(t)) \triangleq d_{\mathcal{R}_u^{(r)}}(\Q(t)).
\end{align*}
By Eq. \eqref{eq:distoregion}, we have $V_{\perp}(Z(t)) = \min\{V_{\perp l}(Z(t)), V_{\perp u}(Z(t))\}$. As a result, the drift of $V_{\perp}(Z)$ has the following four cases.

{Case 1:} $\Delta V_{\perp}(Z) = \Delta V_{\perp l}(Z)$

{Case 2:} $\Delta V_{\perp}(Z) = \Delta V_{\perp u}(Z)$

{Case 3:} $\Delta V_{\perp}(Z)= [V_{\perp l}(Z(t_0+1)) - V_{\perp u}(Z(t_0))] \mathcal{I}(Z(t_0) = Z)$

{Case 4:} $\Delta V_{\perp}(Z)= [V_{\perp u}(Z(t_0+1)) - V_{\perp l}(Z(t_0))] \mathcal{I}(Z(t_0) = Z)$.

Note that the drift in Case 3 can be upper bounded by $\Delta V_{\perp u}(Z)$ and the drift in Case 4 can be upper bounded by $\Delta V_{\perp l}(Z)$. Thus, in order to establish upper bounds on the drift of $V_{\perp}(Z)$, we only need to focus on the first two cases. \emph{In the following, we might omit the superscript $^{(r)}$ for ease of exposition, and revive it when necessary.}

Let us also define 
\begin{align*}
  \mathcal{R}_l^{\prime} \triangleq \mathcal{R}_l^{(r)} -  \mathbf{r} \text{ and } \mathcal{R}_u^{\prime} \triangleq \mathcal{R}_u^{(r)} - \mathbf{r}.
\end{align*}
where $\mathbf{r} = r\mathbf{1}$.
Correspondingly, we shift the queue-length vector in the same direction. That is, we let 
\begin{align}
\label{eq:def_Qprime}
  \mathbf{\Q}^{\prime} = \Q - \mathbf{r}.
\end{align}
The main motivation behind this shifting process is that it allows us to decompose queue-length vector into parallel and perpendicular components. In particular, given a queue length vector $\Q$, we have the following decompositions
\begin{align*}
  &\Q^{\prime} =  \Q_{\parallel \mathcal{R}_{l}^{\prime}}^{\prime} + \Q_{\perp \mathcal{R}_{l}^{\prime}}^{\prime}\\
  &\Q^{\prime} =  \Q_{\parallel \mathcal{R}_{u}^{\prime}}^{\prime} + \Q_{\perp \mathcal{R}_{u}^{\prime}}^{\prime},
\end{align*}
where $\Q_{\parallel \mathcal{R}_{l}^{\prime}}^{\prime}$ and $\Q_{\parallel \mathcal{R}_{u}^{\prime}}^{\prime}$ are the projections of $\Q^{\prime}$onto $\mathcal{R}_{l}^{\prime}$ and $\mathcal{R}_u^{\prime}$, referred as parallel components. $\Q_{\perp \mathcal{R}_{l}^{\prime}}^{\prime}$ and $\Q_{\perp \mathcal{R}_{u}^{\prime}}^{\prime}$ are the corresponding remainders, referred as perpendicular components. Note that the two decompositions are well defined and unique because $\mathcal{R}_{l}^{\prime}$ and $\mathcal{R}_u^{\prime}$ are both closed and convex. Moreover, we have 
\begin{align}
\label{eq:dis_relation}
  V_{\perp l}(Z(t)) = \big\lVert{\Q_{\perp \mathcal{R}_{l}^{\prime}}^{\prime}}\big\rVert \text{ and } V_{\perp u}(Z(t)) = \norms{\Q_{\perp \mathcal{R}_{u}^{\prime}}^{\prime}}.
\end{align}
This follows directly from the fact that the shifting process would not change the distance.

Now, we are ready to present our proof.
\begin{proof}[Proof of Proposition \ref{thm:collapse}]
   Since the chain $\{Z(t), t\ge 0\}$ is ergodic under JBT for any $r\ge 1$ by Lemma \ref{thm:throughput}, we can apply Lemma \ref{lem:basis} to establish bounded moments of $\overline{V}_{\perp}$. In particular, all we need to do is to check the drift conditions (C1) and (C2), respectively. As discussed above, we should only focus on the drifts $\Delta V_{\perp l}(Z)$ and $\Delta V_{\perp u}(Z)$.

   For condition (C2), we have the following result, the proof of which is relegated to Appendix \ref{sec:proof_claim_C2}.
   \begin{claim}
   	\label{claim_C2}
   	For any $t\ge0$, we have
   	\begin{align*}
            |\Delta V(Z(t))| \le \sqrt{N} \max(A_{{max} },S_{max}).
    \end{align*}
   \end{claim}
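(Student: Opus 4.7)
The plan is to exploit two elementary facts: the distance-to-a-set function is 1-Lipschitz, and the one-step queue displacement is coordinate-wise bounded by $\max(A_{max}, S_{max})$. Combining these immediately gives the claimed bound.

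First, I would show that $V_{\perp}(Z)=d_{\mathcal{R}^{(r)}}(\Q)$ is a 1-Lipschitz function of $\Q$ in the $\ell_2$ norm. Indeed, each Euclidean distance $d_{\mathcal{R}_l^{(r)}}$ and $d_{\mathcal{R}_u^{(r)}}$ is 1-Lipschitz via a one-line triangle-inequality argument (for any $\mathbf{z}$ in the set, $d_{\mathcal{A}}(\mathbf{x})\le \|\mathbf{x}-\mathbf{z}\|\le \|\mathbf{x}-\mathbf{y}\|+\|\mathbf{y}-\mathbf{z}\|$, then take infimum). By Eq.~\eqref{eq:distoregion}, $V_{\perp}$ is the pointwise minimum of these two, and the elementary inequality $|\min(a,b)-\min(c,d)|\le \max(|a-c|,|b-d|)$ then gives $|V_{\perp}(\mathbf{x})-V_{\perp}(\mathbf{y})|\le \|\mathbf{x}-\mathbf{y}\|$. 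Consequently, $|\Delta V_{\perp}(Z(t))|\le \|\Q(t+1)-\Q(t)\|$.

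Next, I would bound the one-step queue displacement. Starting from the dynamic in Eq.~\eqref{eq:Qdynamic} and using work-conservation (so $Q_n(t+1)\ge 0$ always), I would consider two cases coordinate-wise. If $Q_n(t+1)\ge Q_n(t)$, then $Q_n(t+1)-Q_n(t)\le A_n(t)\le A_{\Sigma}(t)\le A_{max}$, because arrivals are not routed to more than one server in a slot and the total arrival is capped by $A_{max}$. If $Q_n(t+1)<Q_n(t)$, then the decrease is at most the actual departures, which is in turn at most $S_n(t)\le S_{max}$. Hence $|Q_n(t+1)-Q_n(t)|\le \max(A_{max},S_{max})$ for every $n$, and summing squares over $n$ yields $\|\Q(t+1)-\Q(t)\|\le \sqrt{N}\max(A_{max},S_{max})$.

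Chaining the two bounds gives the claim: $|\Delta V(Z(t))|\le \sqrt{N}\max(A_{max},S_{max})$. There is no real obstacle here beyond keeping the Lipschitz argument honest in the presence of the non-convex region $\mathcal{R}^{(r)}=\mathcal{R}_l^{(r)}\cup \mathcal{R}_u^{(r)}$; the fact that $V_{\perp}$ is a minimum of two Lipschitz functions, rather than a single projection onto a convex set, is the only subtle point, and it is resolved in one line via the min-inequality above. The four cases discussed just before the claim are all handled uniformly by this single Lipschitz estimate, so no per-case calculation is required.
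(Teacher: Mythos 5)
Your proposal is correct, and it takes a cleaner route than the paper to the same bound. The paper shows $|\Delta V_{\perp l}(Z)|\le \sqrt{N}\max(A_{max},S_{max})$ and $|\Delta V_{\perp u}(Z)|\le \sqrt{N}\max(A_{max},S_{max})$ separately, using the identification $V_{\perp s}(Z)=\normss{\Q^{\prime}_{\perp \mathcal{R}_s^{\prime}}}$, the reverse triangle inequality, and the non-expansiveness of projection onto the convex sets $\mathcal{R}_l^{\prime},\mathcal{R}_u^{\prime}$; it then appeals to the four-case decomposition stated just before the claim to conclude that $|\Delta V_{\perp}|$ inherits the same bound. That last step is the fragile part: the case analysis is phrased in terms of upper bounds on $\Delta V_{\perp}$, and passing to a bound on the absolute value requires the symmetric lower-bound cases, which the paper leaves implicit. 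You sidestep this entirely by observing that each $d_{\mathcal{R}_s^{(r)}}$ is $1$-Lipschitz and that the pointwise minimum of two $1$-Lipschitz functions is $1$-Lipschitz, via $|\min(a,b)-\min(c,d)|\le\max(|a-c|,|b-d|)$. This makes $V_{\perp}$ itself $1$-Lipschitz in one line and eliminates the per-case reasoning. The coordinate-wise bound $|Q_n(t+1)-Q_n(t)|\le\max(A_{max},S_{max})$ and the chaining through $\normss{\Q(t+1)-\Q(t)}\le\sqrt{N}\max(A_{max},S_{max})$ match the paper exactly. Net effect: same bound, same two ingredients (Lipschitzness of the distance and boundedness of the one-step displacement), but your treatment of the non-convex union $\mathcal{R}^{(r)}$ is tighter and more self-contained.
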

   This directly verifies condition (C2) in Lemma~\ref{lem:basis}.
    Now, we turn to check condition (C1) for $V_{\perp}(Z)$. To this end, we need the following result, the proof of which is relegated to Appendix \ref{sec:proof_of_claim_1}.

   \begin{claim}
    \label{claim_1}
    For any $t\ge0$, we have 
    \begin{align}
    \label{eq:drift_lower}
      &\ex{\Delta V_{\perp l}(Z) \mid Z(t) = Z }\nonumber\\
       \le & \frac{1}{2\lVert{\Q_{\perp \mathcal{R}_{l}^{\prime}}^{\prime}(t)}\rVert} \ex{\left(2\inner{{\Q_{\perp \mathcal{R}_{l}^{\prime}}^{\prime}}(t)}{\A(t) - \s(t)} + L\right) \mid Z(t) = Z}
    \end{align}
  and 
    \begin{align}
    \label{eq:drift_upper}
      &\ex{\Delta V_{\perp u}(Z) \mid Z(t) = Z }\nonumber\\
       \le & \frac{1}{2\lVert{\Q_{\perp \mathcal{R}_{u}^{\prime}}^{\prime}(t)}\rVert} \ex{\left(2\inner{{\Q_{\perp \mathcal{R}_{u}^{\prime}}^{\prime}}(t)}{\A(t) - \s(t)} + L\right) \mid Z(t) = Z}
    \end{align}
  where $L = N\max(A_{max},S_{max})^2$.
    \end{claim}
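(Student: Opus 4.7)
The plan is to bound the one-step change in each distance by first working with its square, where the expressions are linear enough to handle directly, and then converting back via a standard concavity step. I will focus on $V_{\perp l}$; the argument for $V_{\perp u}$ is identical up to sign conventions.

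First, note that the shift by $\mathbf{r}$ does not change the increments, so $\Q^{\prime}(t+1) = \Q^{\prime}(t) + \A(t) - \s(t) + \UU(t)$. Since $\mathcal{R}_l^{\prime}$ is closed and convex, the standard projection inequality $d_{\mathcal{R}_l^{\prime}}(\mathbf{x} + \mathbf{v})^2 \le d_{\mathcal{R}_l^{\prime}}(\mathbf{x})^2 + 2\langle \mathbf{x} - \Pi_{\mathcal{R}_l^{\prime}}(\mathbf{x}), \mathbf{v}\rangle + \|\mathbf{v}\|^2$ applied at $\mathbf{x} = \Q^{\prime}(t)$ yields
\begin{align*}
V_{\perp l}(Z(t+1))^2 \le V_{\perp l}(Z(t))^2 + 2\inner{\Q_{\perp \mathcal{R}_l^{\prime}}^{\prime}(t)}{\A(t) - \s(t) + \UU(t)} + \norm{\A(t) - \s(t) + \UU(t)}^2.
\end{align*}

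The crucial observation is that the unused-service contribution drops out. Since $\mathcal{R}_l^{\prime} = \{\mathbf{x} \in \mathbb{R}^N : x_n \le 0\text{ for all } n\}$, the projection is coordinate-wise $\min(Q_n^{\prime}(t), 0)$, so $\bigl(\Q_{\perp \mathcal{R}_l^{\prime}}^{\prime}(t)\bigr)_n = (Q_n(t) - r)^+$. Whenever $U_n(t) > 0$ we must have $Q_n(t) = 0$, which forces $\bigl(\Q_{\perp \mathcal{R}_l^{\prime}}^{\prime}(t)\bigr)_n = 0$, so $\inner{\Q_{\perp \mathcal{R}_l^{\prime}}^{\prime}(t)}{\UU(t)} = 0$. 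For the upper region, $\mathcal{R}_u^{\prime} = \mathbb{R}^N_+$ and $\bigl(\Q_{\perp \mathcal{R}_u^{\prime}}^{\prime}(t)\bigr)_n = -(r - Q_n(t))^+ \le 0$, while $U_n(t) \ge 0$, so $\inner{\Q_{\perp \mathcal{R}_u^{\prime}}^{\prime}(t)}{\UU(t)} \le 0$ as needed.

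Next I bound the square-norm term. A component-wise analysis of the queue dynamics shows $(Q_n(t+1) - Q_n(t))^2 \le (A_n(t) - S_n(t))^2$: if $U_n(t) = 0$ they are equal; if $U_n(t) > 0$ then $Q_n(t+1) - Q_n(t) = -Q_n(t)$ and $S_n(t) - A_n(t) \ge Q_n(t) \ge 0$, so squaring preserves the inequality. Summing over $n$ and invoking the finite-support hypothesis gives $\norm{\A(t) - \s(t) + \UU(t)}^2 \le \norm{\A(t) - \s(t)}^2 \le L$ with $L = N\max(A_{max}, S_{max})^2$.

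Finally, to lift the bound from the square to the distance itself, I use the concavity inequality $\sqrt{a^2 + x} \le a + x/(2a)$, valid for $a > 0$ and $a^2 + x \ge 0$, with $a = V_{\perp l}(Z(t)) = \|\Q_{\perp \mathcal{R}_l^{\prime}}^{\prime}(t)\|$ (which we may take positive, else both sides of the claim are handled trivially or absorbed into the $\kappa$ of condition (C1)) and $x = V_{\perp l}(Z(t+1))^2 - V_{\perp l}(Z(t))^2$. Taking conditional expectation on $\{Z(t) = Z\}$ of the resulting pathwise bound gives \eqref{eq:drift_lower}; substituting $\mathcal{R}_u^{\prime}$ for $\mathcal{R}_l^{\prime}$ gives \eqref{eq:drift_upper}. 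The only delicate point is the orthogonality/non-positivity of $\inner{\Q_{\perp \mathcal{R}}^{\prime}(t)}{\UU(t)}$, and this is precisely what made the box-type sets $\mathcal{R}_l^{(r)}, \mathcal{R}_u^{(r)}$ the natural target of state-space collapse in the first place.
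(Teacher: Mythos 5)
Your treatment of the lower region has a genuine gap. You assert that $U_n(t)>0$ forces $Q_n(t)=0$, but the standard identity is $Q_n(t+1)U_n(t)=0$, not $Q_n(t)U_n(t)=0$: if $U_n(t)>0$ then $S_n(t)>Q_n(t)+A_n(t)$, which empties the queue at time $t+1$ and places no constraint on $Q_n(t)$. Concretely, take $Q_n(t)=r+1$, $A_n(t)=0$, $S_n(t)=r+2$; then $U_n(t)=1>0$ while $\bigl(\Q_{\perp\mathcal{R}_l^{\prime}}^{\prime}(t)\bigr)_n=(Q_n(t)-r)^+=1>0$. Since $\bigl(\Q_{\perp\mathcal{R}_l^{\prime}}^{\prime}(t)\bigr)_n$ and $U_n(t)$ are both nonnegative in every coordinate, $\inner{\Q_{\perp\mathcal{R}_l^{\prime}}^{\prime}(t)}{\UU(t)}\ge 0$ in general --- the wrong sign for an upper bound --- so it cannot simply be dropped. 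Your upper-region argument is fine because there the perpendicular component is coordinatewise nonpositive, but the two cases are not ``identical up to sign'' as you say: the lower region is the one that needs extra care, which is precisely why the paper treats them separately. A minor point: $\mathcal{R}_l^{\prime}$ is the box $[-r,0]^N$, not the half-space $\{x_n\le 0\}$, though your projection formula still holds on queue vectors since $\Q^{\prime}(t)\ge -\mathbf{r}$.

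Your projection-inequality route can be salvaged by handling the truncation \emph{before} applying it, rather than inside the increment. Since $\Q(t+1)=\Pi_{\mathbb{R}^N_+}\bigl(\Q(t)+\A(t)-\s(t)\bigr)$ coordinatewise, $\mathcal{R}_l^{(r)}\subseteq\mathbb{R}^N_+$, and projections onto closed convex sets are non-expansive, for every $\mathbf{z}\in\mathcal{R}_l^{(r)}$ one has $\norm{\Q(t+1)-\mathbf{z}}\le\norm{\Q(t)+\A(t)-\s(t)-\mathbf{z}}$; taking the infimum over $\mathbf{z}$ gives $d_{\mathcal{R}_l^{(r)}}(\Q(t+1))\le d_{\mathcal{R}_l^{(r)}}\bigl(\Q(t)+\A(t)-\s(t)\bigr)$. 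Now apply the projection inequality with increment $\A(t)-\s(t)$ --- no $\UU$ term appears at all --- and the concavity step closes the proof. The paper instead writes $V_{\perp s}^2=\normss{\Q^{\prime}}^2-\normss{\Q_{\parallel}^{\prime}}^2$, bounds $\Delta\normss{\Q^{\prime}}^2$ from above (picking up a $-2\inner{\mathbf{r}}{\UU(t)}$) and $\Delta\normss{\Q_{\parallel}^{\prime}}^2$ from below (also picking up $-2\inner{\mathbf{r}}{\UU(t)}$, using that every coordinate of $\Q_{\parallel\mathcal{R}_l^{\prime}}^{\prime}$ is at least $-r$), and lets the two cancel. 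Both routes reach the claim; yours, once patched as above, is the shorter one.
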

    From Claim \ref{claim_1}, we can see that the upper bounds on the mean drifts of $\Delta V_{\perp l}(Z)$ and $\Delta V_{\perp u}(Z)$ have the same formula. Thus, we can rewrite it in a compact way as follows.
    \begin{align}
    \label{eq:compact}
      &\ex{\Delta V_{\perp s}(Z) \mid Z(t) = Z }\nonumber\\
       \le & \frac{1}{2\lVert{\Q_{\perp \mathcal{R}_{s}^{\prime}}^{\prime}(t)}\rVert} \ex{\left(2\inner{{\Q_{\perp \mathcal{R}_{s}^{\prime}}^{\prime}}(t)}{\A(t) - \s(t)} + L\right) \mid Z(t) = Z}
    \end{align}
    where the subscript $s \in \{l, u\}$. To upper bound the right-hand side of Eq. \eqref{eq:compact}, we resort to the following result, the proof of which is relegated to Appendix~\ref{sec:proof_of_claim_3}.
    \begin{claim}
    \label{claim_3}
    For $s \in \{l,u\}$ and any system state $Z(t)$ with $V_{\perp}(Z(t)) > 0$, we have
    \begin{align*}
      \ex{\inner{{\Q_{\perp \mathcal{R}_{s}^{\prime}}^{\prime}}(t)}{\A(t) - \s(t)}  \mid Z(t) = Z}\le  - \frac{\mu_{\Sigma}\delta}{2N}\norms{{\Q_{\perp \mathcal{R}_{s}^{\prime}}^{\prime}}(t)},
    \end{align*}
    % whenever $\epsilon \le \frac{\mu_{\Sigma}\delta}{2N+\delta}$, where $\delta$ is a constant independent of $\epsilon$.
    whenever $\epsilon \le \frac{\mu_{\Sigma}\delta}{2N+\delta}$, in which 
    \[\delta = \frac{\mu_{min}{\mu}_{min,2}}{\mu_{\Sigma}(\mu_{\Sigma}-\mu_{min})},\]
    where $\mu_{min} = \min_{n\in \mathcal{N}}\mu_n$, i.e., the smallest service rate among all servers. $\mu_{min,2}$ is the second smallest service rate among all the servers. Hence, $\delta$ is a constant independent of $\epsilon$. 
    \end{claim}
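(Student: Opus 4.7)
The plan is to compute the one-step conditional expectation in closed form by exploiting the defining behavior of JBT under the hypothesis $V_{\perp}(Z) > 0$. This hypothesis forces both $\mathcal{H} := \{n : Q_n > r\}$ and $\mathcal{L} := \{n : Q_n < r\}$ to be nonempty, since otherwise $Z$ would lie in $\mathcal{R}_u^{(r)}$ or $\mathcal{R}_l^{(r)}$ and one of the two distances would vanish. Because $\mathcal{L}$ is nonempty, the JBT memory contains IDs from $\mathcal{L}$, so each arrival is routed to a server in $\mathcal{L}$ proportionally to its service rate: $\ex{A_n(t) \mid Z} = \lambda_{\Sigma}^{(\epsilon)}\mu_n/\mu_{\mathcal{L}}$ for $n \in \mathcal{L}$ and $\ex{A_n(t) \mid Z} = 0$ for $n \in \mathcal{H}$, while $\ex{S_n(t)} = \mu_n$ throughout. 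Combining this with the support structure ($\Q_{\perp \mathcal{R}_l^{\prime}}^{\prime}$ has $n$-th entry $(Q_n - r)^+$, supported on $\mathcal{H}$; $\Q_{\perp \mathcal{R}_u^{\prime}}^{\prime}$ has $n$-th entry $\min(Q_n - r, 0)$, supported on $\mathcal{L}$) decouples the two cases.

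For $s = l$, the expectation collapses to $-\sum_{n \in \mathcal{H}}(Q_n - r)\mu_n$, which is at most $-\mu_{min}\norms{\Q_{\perp \mathcal{R}_l^{\prime}}^{\prime}}_1 \le -\mu_{min}\norms{\Q_{\perp \mathcal{R}_l^{\prime}}^{\prime}}$; since $\mu_{min,2} \le \mu_{\Sigma} - \mu_{min}$ one has $\mu_{min} \ge \mu_{\Sigma}\delta/(2N)$, and the claim follows without any smallness condition on $\epsilon$. For $s = u$, using $\mu_{\Sigma} = \mu_{\mathcal{H}} + \mu_{\mathcal{L}}$ and $\lambda_{\Sigma} = \mu_{\Sigma} - \epsilon$, the inner product simplifies to
\[
\sum_{n \in \mathcal{L}}(Q_n - r)\left(\frac{\lambda_{\Sigma}\mu_n}{\mu_{\mathcal{L}}} - \mu_n\right) = \frac{\mu_{\mathcal{H}} - \epsilon}{\mu_{\mathcal{L}}} \sum_{n \in \mathcal{L}}(Q_n - r)\mu_n \le -\frac{(\mu_{\mathcal{H}} - \epsilon)\min_{n \in \mathcal{L}}\mu_n}{\mu_{\mathcal{L}}}\norms{\Q_{\perp \mathcal{R}_u^{\prime}}^{\prime}}.
\]
I then case-split on whether the slowest server $n^{\star} := \arg\min_n \mu_n$ lies in $\mathcal{H}$ or $\mathcal{L}$. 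If $n^{\star} \in \mathcal{H}$, then $\mu_{\mathcal{H}} \ge \mu_{min}$ and $\min_{n \in \mathcal{L}}\mu_n \ge \mu_{min,2}$; if $n^{\star} \in \mathcal{L}$, then $\mu_{\mathcal{H}} \ge \mu_{min,2}$ and $\min_{n \in \mathcal{L}}\mu_n = \mu_{min}$, and comparing $(\mu_{min,2} - \epsilon)\mu_{min}$ with $(\mu_{min} - \epsilon)\mu_{min,2}$ using $\mu_{min} \le \mu_{min,2}$ shows the former is the larger. In either sub-case the numerator is at least $(\mu_{min} - \epsilon)\mu_{min,2}$, while $\mu_{\mathcal{L}} \le \mu_{\Sigma} - \mu_{min}$, yielding drift $\le -\frac{(\mu_{min} - \epsilon)\mu_{min,2}}{\mu_{\Sigma} - \mu_{min}}\norms{\Q_{\perp \mathcal{R}_u^{\prime}}^{\prime}}$. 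Finally, the hypothesis $\epsilon \le \mu_{\Sigma}\delta/(2N + \delta)$ implies $\epsilon \le \mu_{min}/(2N)$ (using $\mu_{min,2} \le \mu_{\Sigma} - \mu_{min}$ once more), so $\mu_{min} - \epsilon \ge \mu_{min}/(2N)$ and the bound reduces exactly to $-\mu_{\Sigma}\delta/(2N)\norms{\Q_{\perp \mathcal{R}_u^{\prime}}^{\prime}}$ as desired.

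The main obstacle will be the case analysis for $s = u$: the factor $\mu_{min,2}$ in $\delta$ arises precisely because, depending on whether $n^{\star}$ is in $\mathcal{H}$ or $\mathcal{L}$, the bottleneck alternates between the effective arrival surplus (controlled by $\mu_{\mathcal{H}}$) and the service drain on the low side (controlled by $\min_{n \in \mathcal{L}}\mu_n$), and the two sub-cases must be shown to collapse to the same lower bound $(\mu_{min} - \epsilon)\mu_{min,2}$. By contrast, the precise form of the threshold $\mu_{\Sigma}\delta/(2N + \delta)$ is a routine algebraic tightening that balances the factor $1 - \epsilon/\mu_{min}$ against the $1/(2N)$ slack, and is not expected to present any real difficulty.
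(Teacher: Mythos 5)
Your proof is correct and takes a genuinely different, more elementary route than the paper's. The paper sorts the queue vector via a time-dependent permutation, decomposes the dispatching distribution as $\widehat{\mathbf{P}}(t) = \widehat{\mathbf{P}}_{\text{rand}}(t) + \Delta(t)$, establishes abstract structural properties of the sorted perpendicular component and of $\Delta(t)$ in its Claim~\ref{claim_2}, and then runs a mass-transport style argument to show $\sum_n \widehat{Q}^{\prime}_{\perp s, n}\Delta_n \le -\delta(|\widehat{Q}^{\prime}_{\perp s, 1}| + |\widehat{Q}^{\prime}_{\perp s, N}|)$, from which the $1/N$ factor appears when passing from the extreme coordinate to the $l_1$ norm; the $\epsilon$-restriction then adds another factor of $1/2$. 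You instead use the explicit closed forms $\Q^{\prime}_{\perp \mathcal{R}_l^{\prime}} = (\Q - r\mathbf{1})^+$ and $\Q^{\prime}_{\perp \mathcal{R}_u^{\prime}} = \min(\Q - r\mathbf{1}, 0)$ together with the explicit JBT dispatching rates (all mass to $\mathcal{L}$, proportionally to $\mu_n$), which renders the inner product a finite sum you can evaluate directly and makes the roles of $\mu_{min}$ and $\mu_{min,2}$ transparent through the $n^* \in \mathcal{H}$ vs.\ $n^* \in \mathcal{L}$ case split. Your route also exposes that the $s=l$ direction needs no $\epsilon$-restriction and that your raw bound for $s=u$ is stronger than the stated one by a factor of order $N$ (the $1/(2N)$ in the target constant is absorbed entirely by slack from the $\epsilon$-restriction, not from any loss intrinsic to the estimate), whereas the paper's $1/N$ arises structurally from the sorted-coordinate argument. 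Both proofs rely on the same modeling reading of JBT, namely that the memory at each dispatching epoch equals $\{n : Q_n(t) < r\}$ so that the set of eligible servers is exactly $\mathcal{L}$; the paper's Claim~\ref{claim_2}(b) uses this in asserting that the servers in memory are precisely the $M$ lowest-indexed after sorting, so you are on equal footing there. One minor point: your case split omits the state $Q_{n^*} = r$ (neither set), but there both $\mu_{\mathcal{H}} \ge \mu_{min,2}$ and $\min_{n\in\mathcal{L}}\mu_n \ge \mu_{min,2}$ hold, which only strengthens the bound, so nothing is lost.
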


  Now substituting the upper bound in Claim~\ref{claim_3} into Eq. \eqref{eq:compact}, yields
    \begin{align*}
      &\ex{\Delta V_{\perp s}(Z) \mid Z(t) = Z }\nonumber\\
       \le & \frac{1}{2\lVert{\Q_{\perp \mathcal{R}_{s}^{\prime}}^{\prime}(t)}\rVert} \ex{\left(2\inner{{\Q_{\perp \mathcal{R}_{s}^{\prime}}^{\prime}}(t)}{\A(t) - \s(t)} + L\right) \mid Z(t) = Z}\\
       \le & - \frac{\mu_{\Sigma}\delta}{2N} + \frac{L}{2V_{\perp s}(Z)} \text{ whenever } \epsilon \le \frac{\mu_{\Sigma}\delta}{2N+\delta}\\
       \le & - \frac{\mu_{\Sigma}\delta}{4N},
    \end{align*}
    for $s \in \{l,u\}$ and for any $Z(t)$ such that $V_{\perp}(Z(t)) >0$ and $V_{\perp s}(Z(t)) \ge \frac{2NL}{\mu_{\Sigma}\delta}$. 

    Therefore, since the drift of $V_{\perp}(Z(t))$ is either upper bounded by the drift of $V_{\perp l}(Z(t))$ or the drift $V_{\perp u}(Z(t))$, and $V_{\perp}(Z(t))= \min\{V_{\perp l}(Z(t)), V_{\perp u}(Z(t)) \}$, we have 
    \begin{align*}
      \ex{\Delta V_{\perp }(Z) \mid Z(t) = Z } \le - \frac{\mu_{\Sigma}\delta}{4N} \text{ whenever } V_{\perp }(Z(t)) \ge \frac{2NL}{\mu_{\Sigma}\delta}
    \end{align*}
    for any $\epsilon \le \epsilon_0\triangleq \frac{\mu_{\Sigma}\delta}{2N+\delta}$. 

    Thus, condition (C1) in Lemma \ref{lem:basis} is validated with $\kappa = \frac{2NL}{\mu_{\Sigma}\delta}$ and $\eta = \frac{\mu_{\Sigma}\delta}{4N}$, both of which are independent of $\epsilon$ (since $\delta$ is independent of $\epsilon$ by Claim~\ref{claim_3}). Having established conditions (C1) and (C2) for the Lyapunov function $V_{\perp}(Z)$, by Lemma \ref{lem:basis}, we have that there exist some positive constants $\epsilon_0$, $\theta^*$ and $C^*$ such that for all $\epsilon \in (0, \epsilon_0)$
  \begin{align*}
    \ex{ e^{\theta^* d_{\mathcal{R}^{(r)}}\big(\overline{\Q}^{(\epsilon)}\big)} } \le C^*,
  \end{align*}
  where both $\theta^*$ and $C^*$ are independent of $\epsilon$. Hence, we have completed the proof of Proposition \ref{thm:collapse}.
 % where both $\theta^*$ and $C^*$ are independent of $\epsilon$. Hence, we have completed the proof of Proposition \ref{thm:collapse}.
\end{proof}
\section{Conclusion}
% Heavy-traffic analysis has been an important tool for characterizing the delay performance of load balancing systems. It is often believed that heavy-traffic delay optimality implies good delay performance in practice. However, in this paper, we show that this conventional wisdom is not necessarily true. 

We have investigated the performance of load balancing systems under a general pull-based policy with a varying threshold. In particular, we have shown that a necessary condition for steady-state heavy-traffic delay optimality is that the threshold must grow to infinity as the load intensity approaches one but its growth rate should be slower than a certain polynomial function of the mean number of tasks in the system. We then showed that a sufficient condition to guarantee steady-state heavy-traffic delay optimality in pull-based load balancing systems is that the threshold must grow logarithmically with the mean number of tasks in the system, which directly resolves a generalized version of the conjecture by Kelly and Laws~\cite{kelly1993dynamic}. Both of the necessary and sufficient conditions are achieved by overcoming various technical challenges, and the methods developed in this paper could be of independent interest. In particular,
the methods developed in this paper might provide new directions on establishing steady-state delay optimality for dynamic threshold based scheduling policies in~\cite{harrison1998heavy,bell2001dynamic}. 

We finally conjecture that a logarithmic growth rate of the threshold is also necessary for heavy-traffic delay optimality in pull-based load balancing systems, and one possible future work is to extend the current proof of Theorem~\ref{thm:constant} to prove this result, hence providing a tighter characterization of general pull-based load balancing schemes in heavy traffic.

\bibliographystyle{plain}
\bibliography{ref} 

\section*{Appendix}
\appendix

% !TEX root = ./sig19.tex
\section{Proof of Lemma \ref{thm:throughput}}
\label{sec:appex_proof_throughput}
\begin{proof}
  To begin with, we first show that the Markov chain $\{Z(t) = (\Q(t),m(t)), t\ge 0\}$ is irreducible and aperiodic. Let the initial state be $Z(0) = (\Q(0), m(0)) = (0_{1 \times N}, m_0)$ where $m_0$ is the memory state in which all the $N$ IDs of servers are in the memory. The Markov chain is irreducible since for any state $Z$ in the state space, the Markov chain is able to reach the initial state within a finite step. This happens when there are no exogenous arrivals and all the offered service is at least one during each time-slot, which has a positive probability under our assumptions. The aperiodicity of the Markov chain $\{Z(t) = (\Q(t),m(t)), t\ge 0\}$ follows from the fact that the transition probability from the initial state to itself is positive.
  In order to show positive recurrence, we adopt the Foster-Lyapunov theorem. In particular, we only need to consider the Lyapunov function $W(Z) \triangleq \norm{\Q}^2$ since the memory state is finite. Now for any $t_0$, the one-step drift is given by 
  \begin{align}
  \label{eq:foster}
    &\ex{W(Z(t_0+1)) - W(Z(t_0)) \mid Z(t_0)}\nonumber\\
    % = &\ex{\norm{\Q(t_0 + 1)}^2 - \norm{\Q(t_0)}^2 \mid Z(t_0)}\nonumber\\
    = & \ex{\norm{\Q(t_0) + \A(t_0) - \s(t_0) + \UU(t_0)}^2 - \norm{\Q(t_0)}^2 \mid Z(t_0)}\nonumber\\
    \lep{a} & \ex{\norm{\Q(t_0) + \A(t_0) - \s(t_0)}^2 - \norm{\Q(t_0)}^2 \mid Z(t_0)}\nonumber\\
    = & \ex {2\inner{\Q(t_0)}{\A(t_0) - \s(t_0)} + \norm{\A(t_0) - \s(t_0)}^2  \mid Z(t_0) }\nonumber\\
    \lep{b} & \ex{2\inner{\Q(t_0)}{\A(t_0) - \s(t_0)} \mid Z(t_0)} + L\nonumber\\
    \lep{c} & 2 \sum_{n = 1}^N Q_n(t_0) \left(-\epsilon \frac{\mu_n}{\mu_{\Sigma}}\right) + L\nonumber\\
    \lep{d} & -2\epsilon\frac{\mu_{min}}{\mu_{\Sigma}}\norm{\Q(t_0)} + L,
  \end{align}
  where (a) follows from the facts that $Q_n(t) + A_n(t) -S_n(t) + U_n(t) = \max(Q_n(t) + A_n(t) -S_n(t),0)$ for any $t\ge 0$, and $\left(\max(a,0) \right)^2 \le a^2$ for any $a \in \mathbb{R}$; (b) holds since both the arrival and service processes have finite supports and $L = N \max(A_{max}, S_{max})^2$; (c) is true since under the JBT policy the worst case is when (proportionally) random routing is adopted, which happens if the ID in memory is either empty or full; (d) comes from the fact that $\norm{\mathbf{x}}_1 \ge \norm{\mathbf{x}}$ for any $\mathbf{x} \in \mathbb{R}^N$. Therefore, by the Foster-Lyapunov theorem, the Markov chain $\{Z(t) = (\Q(t),m(t)), t\ge 0\}$ is positive recurrent.

  Having established the fact that $\{Z(t) = (\Q(t),m(t)), t\ge 0\}$ is irreducible, aperiodic and positive recurrent, we are now ready to apply Lemma \ref{lem:basis} to show bounded moments of $\big\lVert \overline{\Q}\big\rVert$. Let us consider the Lyapunov function $V(Z) = \norm{\Q}$, and check the two conditions (C1) and (C2) in Lemma \ref{lem:basis}, respectively. 

  For condition (C1), we have
  \begin{align*}
    &\ex{ \Delta V(Z) \mid Z(t_0) = Z}\\
    = & \ex{\norm{\Q(t_0 + 1)} - \norm{\Q(t_0)}  \mid Z(t_0) = Z}\\
    = & \ex{\sqrt{\norm{\Q(t_0 + 1)}^2} - \sqrt{\norm{\Q(t_0)}^2} \mid Z(t_0) = Z }\\
    \lep{a} & \frac{1}{2\norm{\Q(t_0)}} \ex{\norm{\Q(t_0 + 1)}^2 - \norm{\Q(t_0)}^2 \mid Z(t_0) = Z}\\
    \lep{b} & -\epsilon\frac{\mu_{min}}{\mu_{\Sigma}} + \frac{L}{2\norm{\Q(t_0)}},
  \end{align*}
  where (a) follows from the fact that $f(x) = \sqrt{x}$ is concave; (b) comes from Eq. \eqref{eq:foster}. Thus, condition (C1) is valid with $\kappa = \frac{L\mu_{\Sigma}}{\epsilon \mu_{min}}$ and $\eta = \frac{\epsilon\mu_{min}}{2\mu_{\Sigma}}$.

  For condition (C2), we have 
  \begin{align*}
     |\Delta V(Z)| &= | \norm{\Q(t_0+1)} - \norm{\Q(t_0)} | \mathcal{I}(Z(t_0) = Z)\\
          & \lep{a} \norm{\Q(t_0+1) - \Q(t_0)}\mathcal{I}(Z(t_0) = Z)\\
          & \lep{b} \sqrt{N}\max(A_{max},S_{max}),
  \end{align*}
  where (a) holds since $| \norm{\mathbf{x}} - \norm{\mathbf{y}} | \le \norm{\mathbf{x} - \mathbf{y}}$ for each $\mathbf{x}$, $\mathbf{y}$ in $\mathbb{R}^N$; (b) follows from the assumptions that $A_{\Sigma}(t) \le A_{max}$ and $S_n(t) \le S_{max}$ for any $t\ge 0$ and $n \in \mathcal{N}$. Thus, condition (C2) is valid with $D = \sqrt{N}\max(A_{max},S_{max})$.

  Therefore, according to Eq. \eqref{eq:upper_siva} in Lemma \ref{lem:basis}, we get for $p = 1,2,\ldots,$
  \begin{align*}
    \ex{\big \lVert {\overline{\Q}^{(\epsilon)} } \big\rVert ^p } &\le \frac{1}{\epsilon^p}\left( \frac{2L\mu_{\Sigma}}{\mu_{min}}\right)^p + \frac{1}{\epsilon^p} \left(\frac{8D \mu_{\Sigma} }{\mu_{min}} \right)^p(D+\mu_{min})^p p!\\
    & \le  \frac{M_p}{\epsilon^p},
  \end{align*}
  where the constant $M_p = \left( \frac{2L\mu_{\Sigma}}{\mu_{min}}\right)^p + p! \left(\frac{8D \mu_{\Sigma} }{\mu_{min}} \right)^p(D+\mu_{min})^p$.

  In addition, if we apply Theorem 2.3 in~\cite{hajek1982hitting}, we can obtain that 
  \begin{align}
  \label{eq:mgf}  
    \ex{ e^{\theta^* \normss{\overline{\Q}^{(\epsilon)} } }} \le K_1 e^{\theta^* K_2/\epsilon},
  \end{align}
  where the positive constants $\theta^*$, $K_1$ and $K_2$ are all independent of $\epsilon$.
\end{proof}  

\section{Proof of Lemma \ref{lem:key_equation}}
\label{sec:proof_lemma5}
\begin{proof}

Let us consider the following Lyapunov function:
\begin{equation*}
	V_1(Z) \triangleq \sum_{i=1}^N\sum_{j>i}^N \left(Q_i-Q_j\right)^2.
\end{equation*}
We start with the conditional mean drift of $V_1(Z)$. Note that we shall omit the time reference $(t)$ after the first step and $\Q^+ \triangleq \Q(t+1)$.
\begin{equation*}
	\begin{split}
		&\ex{V_1(Z(t+1)) - V_1(Z(t)) \mid Z(t) = Z}\\
		= &  \sum_{i=1}^N\sum_{j>i}^N \ex{\left(Q_i(t+1)-Q_j(t+1)\right)^2 - \left(Q_i(t) - Q_j(t) \right)^2 \mid Z(t) = Z}\\
		= &  \sum_{i=1}^N\sum_{j>i}^N \ex{2 \left(Q_i - Q_j \right)\left(A_i - A_j - S_i + S_j \right)  - \left(U_i -U_j\right)^2\mid Z}\\
		& + \sum_{i=1}^N\sum_{j>i}^N \ex{\left(A_i - A_j - S_i + S_j\right)^2 + 2 \left(Q_i^+ - Q_j^+ \right)\left( U_i - U_j\right) \mid Z}\\
		\ep{a} & \sum_{i=1}^N\sum_{j>i}^N \ex{2 \left(Q_i - Q_j \right)\left(A_i - A_j \right)  - \left(U_i -U_j\right)^2\mid Z}\\
		& + \sum_{i=1}^N\sum_{j>i}^N \ex{\left(A_i - A_j - S_i + S_j\right)^2 - 2 \left(Q_i^+U_j + Q_j^+ U_i \right) \mid Z},\\
		% \ep{b} & 2N\ex{\inner{\Qc}{\A-\s} \mid Z}-\sum_{i=1}^N\sum_{j>i}^N\ex{\left(U_i -U_j\right)^2 \mid Z}\\
		% & + \sum_{i=1}^N\sum_{j>i}^N \ex{\left(A_i - A_j - S_i + S_j\right)^2 - 2 \left(Q_i^+U_j + Q_j^+ U_i \right) \mid Z}\\
	\end{split}
\end{equation*}
in which (a) follows from the fact that the service is independent of queue lengths and homogeneous, as well as $Q_n(t+1)U_n(t) = 0$ for all $n$ and $t > 0$.

Since $\norm{\Q}$ has a finite second moment in steady state under JBT by Lemma~\ref{thm:throughput}, the steady-state mean $\ex{V_1(\overline{Z}^{(\epsilon)})}$ is finite for any $\epsilon > 0$. As a result, the mean drift of $V_1(\cdot)$ is zero in steady state, which directly implies the result in Lemma~\ref{lem:key_equation}.
\end{proof}
% !TEX root = ./sig19.tex
\section{Proof of Claim \ref{claim_C2}}
\label{sec:proof_claim_C2}
\begin{proof}
  For any $t_0 \ge 0$, we have
\begin{align*}
            & |\Delta V_{\perp l}(Z)| \nonumber\\
            \ep{a} & | \lVert{\Q_{\perp \mathcal{R}_{l}^{\prime}}^{\prime}(t_0+1)}\rVert - \lVert{\Q_{\perp \mathcal{R}_{l}^{\prime}}^{\prime}(t_0)}\rVert | \mathcal{I}(Z(t_0) = Z)\nonumber \\
            \lep{b} & \lVert{ {\Q_{\perp \mathcal{R}_{l}^{\prime}}^{\prime}(t_0+1)} - {\Q_{\perp \mathcal{R}_{l}^{\prime}}^{\prime}(t_0)}    }\rVert\mathcal{I}(Z(t_0) = Z)\nonumber\\
            \lep{c} & \lVert{\Q^{\prime}(t_0+1) - \Q^{\prime}(t_0)}\rVert \mathcal{I}(Z(t_0) = Z)\nonumber\\
            \ep{d} & \lVert{\Q(t_0+1) - \Q(t_0)}\rVert \mathcal{I}(Z(t_0) = Z)\nonumber\\
            \lep{e} & \sqrt{N} \max(A_{{max} },S_{max}),
    \end{align*}
    where (a) follows from Eq. \eqref{eq:dis_relation}; (b) comes from the fact that  $|\norm{{\bf x}} - \norm{{\bf y}}| \le \norm{{\bf x} - {\bf y}}$ holds for any ${\bf x}$, ${\bf y} \in \mathbb{R}^N$; (c) is due to the non-expansive property of projection and the fact that $\Q_{\perp \mathcal{R}_{l}^{\prime}}^{\prime}$ is the projection of $\Q^{\prime}$ onto the  polar cone of $\mathcal{R}_{l}^{\prime}$; (d) follows from the definition of $\Q^{\prime}$ in Eq. \eqref{eq:def_Qprime}; (e) holds due to the assumptions that the $A_\Sigma(t) \le A_{max}$ and $S_n(t) \le S_{max}$ for all $t \ge 0$ and all $ 1 \le n \le N$. With the same arguments, we can establish that 
    \begin{align*}
            & |\Delta V_{\perp u}(Z)| \le \sqrt{N} \max(A_{max},S_{max}).
    \end{align*}
    Since the drift of $V_{\perp}(Z)$ is either upper bounded by $\Delta V_{\perp l}(Z)$ or $\Delta V_{\perp u}(Z)$, we finally get 
    \begin{align*}
      |\Delta V_{\perp }(Z)| \le \sqrt{N} \max(A_{max},S_{max}).
    \end{align*}
    \end{proof}

% !TEX root = ./sig19.tex
\section{Proof of Claim \ref{claim_1}}
\label{sec:proof_of_claim_1}
\begin{proof}
  We first start with inequality \eqref{eq:drift_upper} in Claim \ref{claim_1}. Let us define 
    \begin{align*}
      &\Delta W(Z) = \left[\normss{\Q^{\prime}(t+1)}^2 - \normss{\Q^{\prime}(t)}^2 \right]\mathcal{I}(Z(t) = Z)\\
      &\Delta W_{\parallel u}(Z) = \left[\normss{\Q_{\parallel \mathcal{R}_{u}^{\prime}}^{\prime}(t+1)}^2 - \normss{\Q_{\parallel \mathcal{R}_{u}^{\prime}}^{\prime}(t)}^2 \right]\mathcal{I}(Z(t) = Z).
    \end{align*}.
    
    Then, the mean drift of $\Delta V_{\perp u}(Z)$ can be decomposed as follows.
  \begin{align}
  \label{eq:driftU_decom}
  &\ex{\Delta V_{\perp u}(Z) \mid Z(t) = Z}\nonumber \\
  \ep{a} &\ex{ \normss{\Q_{\perp \mathcal{R}_{u}^{\prime}}^{\prime}(t+1)} - \normss{\Q_{\perp \mathcal{R}_{u}^{\prime}}^{\prime}(t)} \mid Z(t) = Z}\nonumber \\
  = & \left[\sqrt{\normss{\Q_{\perp \mathcal{R}_{u}^{\prime}}^{\prime}(t+1)}^2} - \sqrt{\normss{\Q_{\perp \mathcal{R}_{u}^{\prime}}^{\prime}(t)}^2}  \right]\mathcal{I}(Z(t) = Z)\nonumber \\
  \lep{b} & \frac{1}{2\normss{\Q_{\perp \mathcal{R}_{u}^{\prime}}^{\prime}(t)}}\ex{ \normss{\Q_{\perp \mathcal{R}_{u}^{\prime}}^{\prime}(t+1)}^2 - \normss{\Q_{\perp \mathcal{R}_{u}^{\prime}}^{\prime}(t)}^2 \mid Z(t) = Z}\nonumber \\
  \ep{c} & \frac{1}{2\normss{\Q_{\perp \mathcal{R}_{u}^{\prime}}^{\prime}(t)}}\ex{ \Delta W(Z) - \Delta W_{\parallel u}(Z)\mid Z(t) = Z}
    \end{align}
    where (a) follows from Eq. \eqref{eq:dis_relation}; (b) holds due to the concavity of function $f(x) = \sqrt{x}$ for $x \ge 0$; (c) comes from the Pythagorean theorem. Next, we will bound each term in Eq. \eqref{eq:driftU_decom}, respectively. To begin with, we have an upper bound for the first term as follows.
    \begin{align}
    \label{eq:drift_W}
      &\ex{\Delta W(Z) \mid Z(t) = Z}\nonumber\\
      = & \ex{\normss{\Q^{\prime}(t+1)}^2 - \normss{\Q^{\prime}(t)}^2 \mid Z(t) = Z}\nonumber\\
      \ep{a} & \ex{\normss{\Q(t+1) - \mathbf{r}}^2 - \normss{\Q(t) - \mathbf{r}}^2 \mid Z(t) = Z}\nonumber\\
      = & \ex{\normss{\Q(t) + \A(t) - \s(t) + \UU(t) - \mathbf{r}}^2 - \normss{\Q(t) - \mathbf{r}}^2 \mid Z(t) = Z}\nonumber\\
      = & \ex{\normss{\Q(t) + \A(t) - \s(t) - \mathbf{r}}^2 - \normss{\Q(t) - \mathbf{r}}^2\mid Z(t) = Z}\nonumber\\
      & + \ex{ \normss{\UU(t)}^2 + 2\inner{\Q(t+1) - \mathbf{r}- \UU(t)}{\UU(t)}\mid Z(t) = Z}\nonumber\\
      \lep{b} & \ex{2\inner{\Q^{\prime}(t)}{\A(t) - \s(t)} + \normss{\A(t) -\s(t)}^2 - 2\inner{\mathbf{r}}{\UU(t)} \mid Z(t) = Z }\nonumber\\
      \lep{c} & \ex{2\inner{\Q^{\prime}(t)}{\A(t) - \s(t)}- 2\inner{\mathbf{r}}{\UU(t)} \mid Z(t) = Z} + L,
    \end{align}
  where (a) follows from Eq. \eqref{eq:def_Qprime}; (b) holds because of $\inner{\Q(t+1)}{\UU(t)} = 0$ and the dropping of $-\normss{\UU(t)}^2$; in (c), $L = N \max(A_{max}, S_{max})^2$, which is true since both the arrival and service processes have finite support.

  We now turn to provide a lower bound on the second term in Eq. \eqref{eq:driftU_decom} as follows.
  \begin{align}
  \label{eq:drift_Wp_U}
        &\ex{\Delta W_{\parallel u}(Z) \mid Z(t) =  Z}\nonumber\\ 
        = & \ex{\normss{\Q_{\parallel \mathcal{R}_{u}^{\prime}}^{\prime}(t+1)}^2 - \normss{\Q_{\parallel \mathcal{R}_{u}^{\prime}}^{\prime}(t)}^2 \mid Z(t) = Z}\nonumber\\
        = & \ex{ 2\inner{\Q_{\parallel \mathcal{R}_{u}^{\prime}}^{\prime}(t)}{\Q_{\parallel \mathcal{R}_{u}^{\prime}}^{\prime}(t+1) - \Q_{\parallel \mathcal{R}_{u}^{\prime}}^{\prime}(t)} \mid Z}\nonumber\\
        &+ \ex{\norm{\Q_{\parallel \mathcal{R}_{u}^{\prime}}^{\prime}(t+1) - \Q_{\parallel \mathcal{R}_{u}^{\prime}}^{\prime}(t)}^2\mid Z}\nonumber\\
        \ge & \ex{2\inner{\Q_{\parallel \mathcal{R}_{u}^{\prime}}^{\prime}(t)}{\Q_{\parallel \mathcal{R}_{u}^{\prime}}^{\prime}(t+1) - \Q_{\parallel \mathcal{R}_{u}^{\prime}}^{\prime}(t)} \mid Z}\nonumber\\
        = &2 \ex{\inner{\Q_{\parallel \mathcal{R}_{u}^{\prime}}^{\prime}(t)}{\Q^{\prime}(t+1) -\Q^{\prime}(t)}\mid Z} \nonumber\\
        &- 2\ex{\inner{\Q_{\parallel \mathcal{R}_{u}^{\prime}}^{\prime}(t)}{\Q_{\perp \mathcal{R}_{u}^{\prime}}^{\prime}(t+1) - \Q_{\perp \mathcal{R}_{u}^{\prime}}^{\prime}(t)} \mid Z}\nonumber\\
        \gep{a} & \ex{2\inner{\Q_{\parallel \mathcal{R}_{u}^{\prime}}^{\prime}(t)}{\Q^{\prime}(t+1) -\Q^{\prime}(t)}\mid Z}\nonumber\\
        \gep{b} & \ex{2 \inner{\Q_{\parallel \mathcal{R}_{u}^{\prime}}^{\prime}(t)}{\A(t)-\s(t)} \mid Z},
  \end{align}
  where (a) holds because $\inner{\Q_{\parallel \mathcal{R}_{u}^{\prime}}^{\prime}(t)}{\Q_{\perp \mathcal{R}_{u}^{\prime}}^{\prime}(t)} = 0$ and $\inner{\Q_{\perp \mathcal{R}_{u}^{\prime}}^{\prime}(t+1)}{\Q_{\parallel \mathcal{R}_{u}^{\prime}}^{\prime}(t)} \le 0$ since $\Q_{\perp \mathcal{R}_{u}^{\prime}}^{\prime}(t+1)$ is in the polar cone of $\mathcal{R}_{u}^{\prime}$; (b) follows from Eq. \eqref{eq:def_Qprime} and the fact that all the components of $\Q_{\parallel \mathcal{R}_{u}^{\prime}}^{\prime}(t)$ and $\UU(t)$ are nonnegative. Thus, substituting Eqs. \eqref{eq:drift_W} and \eqref{eq:drift_Wp_U} into Eq.~\eqref{eq:driftU_decom}, yields 
  \begin{align*}
    &\ex{\Delta V_{\perp l}(Z) \mid Z(t) = Z }\nonumber\\
       \le & \frac{1}{2\lVert{\Q_{\perp \mathcal{R}_{l}^{\prime}}^{\prime}(t)}\rVert} \ex{\left(2\inner{{\Q_{\perp \mathcal{R}_{l}^{\prime}}^{\prime}}(t)}{\A(t) - \s(t)} + L\right) - 2\inner{\mathbf{r}}{\UU(t)} \mid Z}\\
       \lep{a}&\frac{1}{2\lVert{\Q_{\perp \mathcal{R}_{l}^{\prime}}^{\prime}(t)}\rVert} \ex{\left(2\inner{{\Q_{\perp \mathcal{R}_{l}^{\prime}}^{\prime}}(t)}{\A(t) - \s(t)} + L\right) \mid Z}
  \end{align*}
  where (a) holds since all the components of $\mathbf{r}$ and $\UU(t)$ are nonnegative. Thus, we have the bound in Eq. \eqref{eq:drift_upper} of Claim \ref{claim_1}.

  Next, we turn to the bound in inequality \eqref{eq:drift_lower}. Let us define
  \begin{align*}
    \Delta W_{\parallel l}(Z) = \left[\normss{\Q_{\parallel \mathcal{R}_{l}^{\prime}}^{\prime}(t+1)}^2 - \normss{\Q_{\parallel \mathcal{R}_{l}^{\prime}}^{\prime}(t)}^2 \right]\mathcal{I}(Z(t) = Z).
  \end{align*}
  With the same arguments as in Eq. \eqref{eq:driftU_decom}, the mean drift of $\Delta V_{\perp l}(Z)$ can be decomposed into two terms.
  \begin{align}
  \label{eq:driftL_decom}
  &\ex{\Delta V_{\perp l}(Z) \mid Z(t) = Z}\nonumber \\
  = &\ex{ \normss{\Q_{\perp \mathcal{R}_{l}^{\prime}}^{\prime}(t+1)} - \normss{\Q_{\perp \mathcal{R}_{l}^{\prime}}^{\prime}(t)} \mid Z(t) = Z}\nonumber \\
  \le & \frac{1}{2\normss{\Q_{\perp \mathcal{R}_{l}^{\prime}}^{\prime}(t)}}\ex{ \Delta W(Z) - \Delta W_{\parallel l}(Z)\mid Z(t) = Z}.
    \end{align}
    The first term can be upper bounded as in Eq. \eqref{eq:drift_W}. The second term can be lower bonded in a similar way as in Eq. \eqref{eq:drift_Wp_U} except the last step.
    \begin{align}
    \label{eq:drift_Wp_L}
      &\ex{\Delta W_{\parallel l}(Z) \mid Z(t) =  Z}\nonumber\\ 
      \gep{a} & \ex{2\inner{\Q_{\parallel \mathcal{R}_{l}^{\prime}}^{\prime}(t)}{\Q^{\prime}(t+1) -\Q^{\prime}(t)}\mid Z}\nonumber\\
      \ep{b}& \ex{2\inner{\Q_{\parallel \mathcal{R}_{l}^{\prime}}^{\prime}(t)}{\A(t)-\s(t)+\UU(t)}\mid Z}\nonumber\\
      \gep{c}& \ex{2\inner{\Q_{\parallel \mathcal{R}_{l}^{\prime}}^{\prime}(t)}{\A(t)-\s(t)}- 2\inner{\mathbf{r}}{\UU(t)}\mid Z},
    \end{align}
    where (a) follows from the same arguments as in Eq. \eqref{eq:drift_Wp_U}; (b) comes from the definition of $\Q^{\prime}$ in Eq. \eqref{eq:def_Qprime}; (c) is true since any component of $\Q_{\parallel \mathcal{R}_{l}^{\prime}}^{\prime}(t)$ is greater or equal to $-r$ by the definition of $\mathcal{R}_{l}^{\prime}$. Thus, substituting Eqs. \eqref{eq:drift_W} and \eqref{eq:drift_Wp_L} into Eq. \eqref{eq:driftL_decom} yields the bound in Eq. \eqref{eq:drift_lower} of Claim \ref{claim_1}. Hence, we complete the proof of Claim \ref{claim_1}.
\end{proof}

% !TEX root = ./sig19.tex
\section{Proof of Claim \ref{claim_3}}
  \label{sec:proof_of_claim_3}  
  \begin{proof}
    In order to analyze the inner product in Eq. \eqref{eq:compact}, it is advantageous to reorder the queue-length vector $\Q(t)$. More precisely, let $\sigma_t(\cdot)$ be a permutation of $(1,2,\ldots,N)$ such that $\Q_{\sigma_t(1)}(t) \le \Q_{\sigma_t(2)}(t) \le \ldots \le \Q_{\sigma_t(N)}(t)$ and ties are broken randomly. We define the permutation vectors as follows
  \begin{align*}
    &\widehat{\Q}(t) \triangleq (Q_{\sigma_t(1)}(t), Q_{\sigma_t(2)}(t),\ldots, Q_{\sigma_t(N)}(t))\\
    &\widehat{\A}(t) \triangleq (A_{\sigma_t(1)}(t), A_{\sigma_t(2)}(t),\ldots, A_{\sigma_t(N)}(t))\\
    &\widehat{\s}(t) \triangleq (S_{\sigma_t(1)}(t), S_{\sigma_t(2)}(t),\ldots, S_{\sigma_t(N)}(t)).
  \end{align*}
  Let $p_n(t)$ be the probability that the new arrivals are dispatched to queue $n$ at time-slot $t$, and $\widehat{\mathbf{P}}(t) = (p_{\sigma_t(1)}(t),p_{\sigma_t(2)}(t), \ldots, p_{\sigma_t(N)}(t))$, i.e., the $i$-th component of $\widehat{\mathbf{P}}(t)$ is the probability of dispatching arrivals to the $i$-th shortest queue at time-slot $t$. We define 
  \begin{align}
  \label{eq:def_delta}
    \Delta(t) = \widehat{\mathbf{P}}(t) - \widehat{\mathbf{P}}_{\text{rand}}(t),
  \end{align}
  where $\widehat{\mathbf{P}}_{\text{rand}}(t)$ denotes the permutation of the dispatching distribution $\mathbf{p}(t)$ under proportionally random routing, i.e., the $i$-th component of $\widehat{\mathbf{P}}_{\text{rand}}(t)$ is $\mu_{\sigma_t(i)}/\mu_{\Sigma}$. 

  As before, we let $\widehat{\Q}^{\prime}(t) = \widehat{\Q}(t) - \mathbf{r}$. By the symmetry of $\mathcal{R}_s^{\prime}$ with respect to the line $\mathbf{1} = (1,1,\ldots,1)$, we have that the permutation of the perpendicular component $\Q_{\perp \mathcal{R}_{s}^{\prime}}^{\prime}(t)$ is equal to the perpendicular component of the permutation of ${\Q}^{\prime}(t)$, which is denoted by $\widehat{\Q}^{\prime}_{\perp s}(t)$. That is, $\widehat{\Q}^{\prime}_{\perp s}(t) = \widehat{\Q}^{\prime}(t) - \widehat{\Q}^{\prime}_{\parallel \mathcal{R}_s^{\prime}}(t)$ in which $\widehat{\Q}^{\prime}_{\parallel \mathcal{R}_s^{\prime}}(t)$ is the projection of the vector $\widehat{\Q}^{\prime}(t)$ onto $\mathcal{R}_s^{\prime}$ and $s \in \{l,u\}$.

  Based on the notions introduced above, the inner product in Eq. \eqref{eq:compact} can be rewritten as follows.
  \begin{align}
  \label{eq:inner_rewrite}
    &\ex{\inner{{\Q_{\perp \mathcal{R}_{s}^{\prime}}^{\prime}}(t)}{\A(t) - \s(t)}  \mid Z(t) = Z}\nonumber\\
    \ep{a} & \ex{\inner{{\widehat{\Q}_{\perp s}^{\prime}}(t)}{\widehat{\A}(t) - \widehat{\s}(t)}  \mid Z(t) = Z}\nonumber\\
    \ep{b} & \sum_{n=1}^N \widehat{Q}_{\perp s, n}^{\prime}(t)\left[\lambda_{\Sigma}\left(\Delta_n(t) + \frac{\mu_{\sigma_t(n)}}{\mu_{\Sigma}}\right)- \mu_{\sigma_t(n)} \right]\nonumber\\
    \ep{c} & \sum_{n=1}^N \widehat{Q}_{\perp s, n}^{\prime}(t)\Delta_n(t)\lambda_{\Sigma} + \sum_{n=1}^N \widehat{\Q}_{\perp s, n}^{\prime}(t)\left(-\epsilon \frac{\mu_{\sigma_t(n)}}{\mu_{\Sigma}}\right)\nonumber\\
    \le & \sum_{n=1}^N \widehat{Q}_{\perp s, n}^{\prime}(t)\Delta_n(t)\lambda_{\Sigma} + \epsilon \norms{ \widehat{\Q}_{\perp s}^{\prime}(t)   }_1,
  \end{align}
  where (a) follows from the fact inner product remains the same under permutation and the fact that the permutation of $\Q_{\perp \mathcal{R}_{s}^{\prime}}^{\prime}(t)$ is equal to $\widehat{\Q}^{\prime}_{\perp s}(t)$ as shown above; (b) holds due to the definition of $\Delta(t)$ and $\widehat{Q}_{\perp s, n}^{\prime}(t)$ is the $n$-th component of $\widehat{\Q}^{\prime}_{\perp s}(t)$; (c) simply follows from $\lambda_{\Sigma} = \mu_{\Sigma} - \epsilon$.

  In order to further analyze Eq. \eqref{eq:inner_rewrite}, we need the following results, which are proved at the end of this proof.
  \begin{claim}
    \label{claim_2}
    Regarding the vectors $\widehat{\Q}_{\perp s}^{\prime}(t)$ and $\Delta(t)$ in Eq. \eqref{eq:inner_rewrite}, we have the following properties for any system state $Z(t)$ such that $V_{\perp}(Z(t)) > 0$.
    \begin{enumerate}[(a)]
      \item The vector $\widehat{\Q}_{\perp s}^{\prime}(t)$ satisfies $ \widehat{Q}_{\perp s, 1}^{\prime}(t) \le \widehat{Q}_{\perp s, 2}^{\prime}(t)\le \ldots \le \widehat{Q}_{\perp s, N}^{\prime}(t)$ and $\widehat{Q}_{\perp s, 1}^{\prime}(t) \le 0$, $\widehat{Q}_{\perp s, N}^{\prime}(t) \ge 0$, where $s \in \{l,u\}$. More precisely, we have 
      \begin{align}
      &\widehat{Q}_{\perp l, 1}^{\prime}(t) = 0 \text{ and } \widehat{\Q}_{\perp l,N}^{\prime}(t) > 0\label{eq:L_twosides}\\
      &\widehat{Q}_{\perp u, 1}^{\prime}(t) < 0 \text{ and } \widehat{\Q}_{\perp u,N}^{\prime}(t) = 0\label{eq:U_twosides}.
      \end{align}

      \item The vector $\Delta(t)$ satisfies for some $k \in \{2,3,\ldots,N\}$
      \begin{align*}
        \Delta_n(t) \ge 0, n < k \text{ and } \Delta_n(t) \le 0, n\ge k
      \end{align*}
      and 
      \begin{align*}
        \min\left(|\Delta_1(t)|,|\Delta_N(t)|\right) \ge \delta,
      \end{align*}
      for some constant $\delta$ that is independent of $\epsilon$.
    \end{enumerate}
  \end{claim}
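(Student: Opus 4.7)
My plan is to establish both parts of Claim~\ref{claim_2} by unpacking the definitions and exploiting the product structure of $\mathcal{R}_l^{\prime}$ and $\mathcal{R}_u^{\prime}$. Since $\mathcal{R}_l^{\prime} = [-r,0]^N$ and $\mathcal{R}_u^{\prime} = [0,\infty)^N$, projection onto each of these sets acts coordinate-wise, and for the shifted sorted vector $\widehat{\Q}^{\prime}(t) = \widehat{\Q}(t) - \mathbf{r}$ I would derive the closed-form expressions $\widehat{Q}_{\perp l, n}^{\prime}(t) = \max(\widehat{Q}_n(t) - r, 0)$ and $\widehat{Q}_{\perp u, n}^{\prime}(t) = \min(\widehat{Q}_n(t) - r, 0)$. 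Both are monotone nondecreasing functions of $\widehat{Q}_n(t)$, and this is the main algebraic ingredient used throughout the rest of the proof.

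For part (a), the monotonicity of the above formulas immediately transfers the sorted ordering of $\widehat{\Q}(t)$ to the perpendicular components, giving $\widehat{Q}_{\perp s, 1}^{\prime}(t) \le \cdots \le \widehat{Q}_{\perp s, N}^{\prime}(t)$ for both $s \in \{l,u\}$. The endpoint sign conclusions in Eqs.~\eqref{eq:L_twosides} and~\eqref{eq:U_twosides} then reduce to locating the extreme sorted queue lengths relative to $r$: the hypothesis $V_{\perp}(Z(t)) > 0$ means $\Q(t) \notin \mathcal{R}_l^{(r)} \cup \mathcal{R}_u^{(r)}$, so $\widehat{Q}_1(t) < r$ (otherwise $\Q(t) \in \mathcal{R}_u^{(r)}$) and $\widehat{Q}_N(t) > r$ (otherwise $\Q(t) \in \mathcal{R}_l^{(r)}$). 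Substituting these into the explicit formulas for the perpendicular components pins the boundary values down exactly.

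For part (b), I would let $k$ be the smallest index with $\widehat{Q}_k(t) \ge r$; the same dichotomy from part (a) yields $2 \le k \le N$. When $n \ge k$ the server $\sigma_t(n)$ has queue length at least $r$ and so cannot be in memory, so the JBT dispatch rule gives $p_{\sigma_t(n)}(t) = 0$ and hence $\Delta_n(t) = -\mu_{\sigma_t(n)}/\mu_{\Sigma} \le 0$; when $n < k$ the server is in $m(t)$, so $p_{\sigma_t(n)}(t) = \mu_{\sigma_t(n)}/\sum_{j \in m(t)}\mu_j \ge \mu_{\sigma_t(n)}/\mu_{\Sigma}$ and $\Delta_n(t) \ge 0$. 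For the quantitative lower bound, I would write $|\Delta_1(t)| = \mu_{\sigma_t(1)}\sum_{j \notin m(t)} \mu_j / (\mu_{\Sigma}\sum_{j \in m(t)} \mu_j)$, lower-bound the numerator by $\mu_{\sigma_t(1)}\mu_{\sigma_t(N)}$ (using that $\sigma_t(N) \notin m(t)$), upper-bound the denominator by $\mu_{\Sigma}(\mu_{\Sigma} - \mu_{min})$, and then use that the product of any two distinct service rates is at least $\mu_{min}\mu_{min,2}$, yielding $|\Delta_1(t)| \ge \delta$. Similarly, $|\Delta_N(t)| = \mu_{\sigma_t(N)}/\mu_{\Sigma} \ge \mu_{min}/\mu_{\Sigma}$, and the elementary inequality $\mu_{\Sigma} - \mu_{min} \ge \mu_{min,2}$ (trivially valid for $N \ge 2$) shows $\mu_{min}/\mu_{\Sigma} \ge \delta$. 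The main subtlety is pinning down precisely which servers lie in $m(t)$ under JBT so that the prefix-positive / suffix-nonpositive sign pattern of $\Delta(t)$ emerges cleanly from the sort by queue length; once this structural correspondence between memory membership and the threshold $r$ is in hand, the rest of Claim~\ref{claim_2} reduces to elementary bookkeeping with the service rates.
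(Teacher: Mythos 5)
Your proposal is correct and follows essentially the same route as the paper's own proof: both derive the coordinate-wise closed forms $\widehat{Q}_{\perp l,n}^{\prime}=\max(\widehat{Q}_n^{\prime},0)$ and $\widehat{Q}_{\perp u,n}^{\prime}=\min(\widehat{Q}_n^{\prime},0)$ (the paper states them as $\widehat{\Q}_{\perp l}^{\prime}=\widehat{\Q}_{\parallel u}^{\prime}=\max(\widehat{\Q}^{\prime},\mathbf{0})$ and $\widehat{\Q}_{\perp u}^{\prime}=\min(\widehat{\Q}^{\prime},\mathbf{0})$), then use $V_\perp>0\Rightarrow\widehat{Q}_1^{\prime}<0,\ \widehat{Q}_N^{\prime}>0$ for part (a), and for part (b) set $k$ to be the first index with $\widehat{Q}_k\ge r$ (equivalently $k=|m(t)|+1$), identify memory membership with queue length strictly below $r$, and compute the sign and magnitude of $\Delta_n(t)$ from the JBT dispatch probabilities. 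Your explicit algebra for $|\Delta_1(t)|$ and $|\Delta_N(t)|$, and the observation that $\mu_\Sigma-\mu_{min}\ge\mu_{min,2}$, just fills in the "simple calculations" the paper leaves implicit, so there is no substantive difference in approach.
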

    Based on Claim \ref{claim_2}, we can bound the first term in Eq. \eqref{eq:inner_rewrite} for any system state $Z(t)$ such that $V_{\perp}(Z(t)) > 0$ as follows
    \begin{align}
    \label{eq:upper_on_inner}
      \sum_{n=1}^N \widehat{Q}_{\perp s, n}^{\prime}(t)\Delta_n(t)\lambda_{\Sigma} \le -\lambda_{\Sigma}\delta\left(|\widehat{Q}_{\perp s, 1}^{\prime}(t)| + |\widehat{Q}_{\perp s, N}^{\prime}(t)|\right).
    \end{align}
    This inequality can be verified as follows. Since $\Delta(t)$ satisfies the property (b) in Claim \ref{claim_2}, it can be constructed in the following way. To start with, all the $\Delta_n(t)$ is equal to $0$. Then, we decrease $\Delta_N(t)$ and increase $\Delta_1(t)$ by the same amount of $\delta$. After this process, the left-hand side of Eq. \eqref{eq:upper_on_inner} is equal to $\lambda_{\Sigma}(\delta \widehat{Q}_{\perp s, 1}^{\prime}(t) - \delta \widehat{Q}_{\perp s, N}^{\prime}(t))$, which is equivalent to the right-hand side of Eq. \eqref{eq:upper_on_inner} because of 
    $\widehat{Q}_{\perp s, 1}^{\prime}(t) \le 0$, $\widehat{Q}_{\perp s, N}^{\prime}(t) \ge 0$ in (a) of Claim \ref{claim_2}. Then, due to the first condition in (b) of Claim \ref{claim_2} and the fact that $\sum_{n = 1}^N \Delta_n(t) = 0$, any further construction (if necessary) for $\Delta(t)$ can only take the following way: it decreases some amount (say $\beta$) from $\Delta_i(t)$ where $i\ge k$, and then increase the same amount, i.e., $\beta$ for some $\Delta_j(t)$ where $j < k$. Through this process, the left-hand side of Eq. \eqref{eq:upper_on_inner} can only further decrease due to the monotone nondecreasing property of $\widehat{\Q}_{\perp s}^{\prime}(t)$ in (a) of Claim \ref{claim_2}. As a result, we have established the upper bound in Eq. \eqref{eq:upper_on_inner}.

    Next, we can further bound the right-hand side of Eq. \eqref{eq:upper_on_inner} in terms of $\normss{\widehat{\Q}_{\perp s}^{\prime}(t)}_1$. First, consider the case when $s = l$, we have 
    \begin{align}
    \label{eq:upper_on_inner_L}
      \sum_{n=1}^N \widehat{Q}_{\perp l, n}^{\prime}(t)\Delta_n(t)\lambda_{\Sigma} &\le -\lambda_{\Sigma}\delta\left(|\widehat{Q}_{\perp l, 1}^{\prime}(t)| + |\widehat{Q}_{\perp l, N}^{\prime}(t)|\right)\nonumber\\
      & \le -\lambda_{\Sigma}\delta|\widehat{Q}_{\perp l, N}^{\prime}(t)|\nonumber\\
      & \lep{a} \frac{-\lambda_{\Sigma}\delta}{N} \norms{\widehat{\Q}_{\perp l}^{\prime}(t)}_1
    \end{align}
    where (a) holds since $\normss{\widehat{\Q}_{\perp l}^{\prime}(t)}_1 \le N |\widehat{Q}_{\perp l, N}^{\prime}(t)|$ by the monotone nondecreasing property of $\widehat{\Q}_{\perp s}^{\prime}(t)$ and Eq. \eqref{eq:L_twosides} in (a) of Claim \ref{claim_2}. Similarly, when $s = u$, we have 
    \begin{align}
    \label{eq:upper_on_inner_U}
      \sum_{n=1}^N \widehat{Q}_{\perp u, n}^{\prime}(t)\Delta_n(t)\lambda_{\Sigma} &\le -\lambda_{\Sigma}\delta\left(|\widehat{Q}_{\perp u, 1}^{\prime}(t)| + |\widehat{Q}_{\perp u, N}^{\prime}(t)|\right)\nonumber\\
      & \le -\lambda_{\Sigma}\delta|\widehat{Q}_{\perp u, 1}^{\prime}(t)|\nonumber\\
      & \lep{a} \frac{-\lambda_{\Sigma}\delta}{N} \norms{\widehat{\Q}_{\perp u}^{\prime}(t)}_1
    \end{align}
    where (a) holds since $\normss{\widehat{\Q}_{\perp u}^{\prime}(t)}_1 \le N |\widehat{Q}_{\perp l, N}^{\prime}(t)|$ by the monotone nondecreasing property of $\widehat{\Q}_{\perp s}^{\prime}(t)$ and Eq. \eqref{eq:U_twosides} in (a) of Claim \ref{claim_2}. 

    Therefore, based on Eqs. \eqref{eq:upper_on_inner_L} and \eqref{eq:upper_on_inner_U}, the left-hand side of Eq. \eqref{eq:upper_on_inner} can be upper bounded in terms of $\normss{\widehat{\Q}_{\perp s}^{\prime}(t)}_1$ as follows.
    \begin{align}
    \label{eq:inner_finally}
      \sum_{n=1}^N \widehat{Q}_{\perp s, n}^{\prime}(t)\Delta_n(t)\lambda_{\Sigma} \le \frac{-\lambda_{\Sigma}\delta}{N} \norms{\widehat{\Q}_{\perp s}^{\prime}(t)}_1
    \end{align}
    for $s \in \{l,u\}$ and any system state $Z(t)$ with $V_{\perp}(Z(t)) > 0$. 
    Now, substituting Eq. \eqref{eq:inner_finally} into Eq. \eqref{eq:inner_rewrite}, yields
    \begin{align*}
      &\ex{\inner{{\Q_{\perp \mathcal{R}_{s}^{\prime}}^{\prime}}(t)}{\A(t) - \s(t)}  \mid Z(t) = Z}\nonumber\\
      \le & \left(\epsilon - \frac{\lambda_{\Sigma}\delta}{N}\right)\norms{\widehat{\Q}_{\perp s}^{\prime}(t)}_1\nonumber\\
      \le & - \frac{\mu_{\Sigma}\delta}{2N} \norms{\widehat{\Q}_{\perp s}^{\prime}(t)}_1 \text{ whenever } \epsilon \le \frac{\mu_{\Sigma}\delta}{2N+\delta}\nonumber\\
      \le & - \frac{\mu_{\Sigma}\delta}{2N}\norms{{\Q_{\perp \mathcal{R}_{s}^{\prime}}^{\prime}}(t)},
    \end{align*}
    for $s \in \{l,u\}$ and any system state $Z(t)$ with $V_{\perp}(Z(t)) > 0$, in which the last inequality follows from the fact $\normss{{\Q_{\perp \mathcal{R}_{s}^{\prime}}^{\prime}}(t)}_1 = \normss{\widehat{\Q}_{\perp s}^{\prime}(t)}_1$ and $\norm{\mathbf{x}}_1 \ge \norm{\mathbf{x}}$ for any $\mathbf{x} \in \mathbb{R}^N$. Hence, we establish the result in Claim~\ref{claim_3}.

     Now, we give the proof of Claim \ref{claim_2}.

    For (a), by the definition of $\widehat{\Q}^{\prime}(t)$, we have $\widehat{Q}^{\prime}_1(t) \le \widehat{Q}^{\prime}_2(t) \le \ldots \le \widehat{Q}^{\prime}_N(t)$. The projection of $\widehat{\Q}^{\prime}(t)$ onto $\mathcal{R}_u^{\prime}$, which is equal to $\widehat{\Q}_{\perp l}^{\prime}(t)$, is given by 
    \begin{align}
    \label{eq:Qprime_l_max}
      \widehat{\Q}_{\perp l}^{\prime}(t) = \widehat{\Q}_{\parallel u}^{\prime}(t) = \max\left(\widehat{\Q}^{\prime}(t), \mathbf{0} \right).
    \end{align}
    As a result, we have 
    \begin{align}
    \label{eq:Qprime_u_min}
      \widehat{\Q}_{\perp u}^{\prime}(t) = \widehat{\Q}^{\prime}(t) - \widehat{\Q}_{\parallel u}^{\prime}(t) = \min\left(\widehat{\Q}^{\prime}(t), \mathbf{0} \right).
    \end{align}
    Therefore, we have $\widehat{Q}_{\perp s, 1}^{\prime}(t) \le \widehat{Q}_{\perp s, 2}^{\prime}(t)\le \ldots \le \widehat{Q}_{\perp s, N}^{\prime}(t)$ for $s \in \{l,u\}$.
  Moreover, since $V_{\perp}(Z(t)) > 0$, we have $\Q(t) \notin \mathcal{R}^{(r)}$, which implies that $\Q^{\prime}(t) \notin \mathcal{R}^{\prime}_l$ and  $\Q^{\prime}(t) \notin \mathcal{R}^{\prime}_u$. Thus, we have there exist queues $i$ and $j$ such that $Q^{\prime}_i(t) < 0$ and $Q^{\prime}_j(t) > 0$, which further gives $\widehat{Q}^{\prime}_1(t) < 0 $ and $\widehat{Q}^{\prime}_N(t) > 0 $. As a result, by Eqs. \eqref{eq:Qprime_l_max} and \eqref{eq:Qprime_u_min}, we have 
  \begin{align*}
    &\widehat{Q}_{\perp l, 1}^{\prime}(t) = 0 \text{ and } \widehat{\Q}_{\perp l,N}^{\prime}(t) > 0\\
    &\widehat{Q}_{\perp u, 1}^{\prime}(t) < 0 \text{ and } \widehat{\Q}_{\perp u,N}^{\prime}(t) = 0,
  \end{align*}
  which establishes $\widehat{Q}_{\perp s, 1}^{\prime}(t) \le 0$ and $\widehat{Q}_{\perp s, N}^{\prime}(t) \ge 0$, where $s \in \{l,u\}$. Hence, we have completed the proof of (a) in Claim \ref{claim_2}.

  Now let us consider (b) in Claim \ref{claim_2}. First, since $V_{\perp}(Z(t)) > 0$, we have $\Q(t) \notin \mathcal{R}^{(r)}$, which implies that there exists queues $i$ and $j$ such that $Q_i(t) < r$ and $Q_j(t) > r$. This means that the number of IDs in memory denoted by $|m(t)|$ is between $1$ and $N-1$. Suppose $|m(t)| = M \in \{1,2,\ldots,N-1\}$, then we have 
  \begin{align*}
    \Delta_n(t) > 0, n < k \text{ and } \Delta_n(t) < 0, n\ge k,
  \end{align*}
  where $k = M+1$. This is because for $n < k$
  \begin{align*}
    \Delta_n(t) \ep{a} \frac{\mu_{\sigma_t(n)}}{\sum_{i=1}^M\mu_{\sigma_t(i)}} - \frac{\mu_{\sigma_t(n)}}{\mu_{\Sigma}} \gp{b} 0,
  \end{align*}
  and for $n \ge k$
  \begin{align*}
    \Delta_n(t) \ep{c} 0 - \frac{\mu_{\sigma_t(n)}}{\mu_{\Sigma}} < 0,
  \end{align*}
  where (a) and (c) follow from the definition of $\Delta(t)$ in Eq. \eqref{eq:def_delta} and the JBT policy; (b) holds due to $\mu_{\Sigma} = \sum_{i=1}^N \mu_{\sigma_t(i)}$ and $M < N$. Moreover, with simple calculations, we get 
  \begin{align*}
    \min\left( |\Delta_1(t)|, |\Delta_N(t)| \right) \ge \frac{\mu_{min}{\mu}_{min,2}}{\mu_{\Sigma}(\mu_{\Sigma}-\mu_{min})},
  \end{align*}
  where $\mu_{min} = \min_{n\in \mathcal{N}}\mu_n$, i.e., the smallest service rate among all servers. $\mu_{min,2}$ is the second smallest service rate among all the servers. Hence, we complete the proof of Claim \ref{claim_2}.
\end{proof}

\end{document}